\documentclass[nonblindrev]{arxiv} 




\usepackage{natbib}
 \bibpunct[, ]{(}{)}{,}{a}{}{,}%
 %
 %
 %
 %
 %

\TheoremsNumberedThrough     

\EquationsNumberedThrough    


\usepackage{bbm}
\usepackage{xspace}
\usepackage{comment}
\usepackage[ruled,vlined,linesnumbered]{algorithm2e}
\usepackage{bbm}
\usepackage[mathcal]{euscript}
\usepackage{soul}
\usepackage{mathrsfs}
\usepackage{marginnote}
\usepackage{mathtools}
\usepackage{subcaption}
\usepackage{amsmath,amssymb}
\usepackage[normalem]{ulem}
\usepackage{soul}
\usepackage{hyperref}
\newtheorem{mydefinition}{Definition}
\usepackage{xcolor}

\DeclareMathOperator{\Ex}{\mathbb{E}}


\newcommand{\slots}{\mathscr{T}\xspace}
\newcommand{\slot}{j}

\newcommand{\components}{\mathscr{C}\xspace}
\newcommand{\component}{c}

\newcommand{\nonl}{\renewcommand{\nl}{\let\nl\oldnl}}

\newcommand{\numcomponents}{n}

\newcommand{\utility}{r}
\newcommand{\fullutility}{\mu}
\newcommand{\utilities}{R}
\newcommand{\knapsackcapacity}{W}
\newcommand{\cardinalitycapacity}{C}

\newcommand{\Problem}{\texttt{CCKP}}

\newcommand{\Approx}{\texttt{APPROX}}

\newcommand{\numitems}{n}

\newcommand{\maxcapacity}{\omega}

\newcommand{\vecx}{\boldsymbol{x}}
\newcommand{\vecy}{\boldsymbol{y}}
\newcommand{\vecz}{\boldsymbol{z}}

\newcommand{\online}{\mathcal{A}}

\begin{document}


\RUNAUTHOR{}

\RUNTITLE{Cardinality-Constrained Continuous Knapsack Problem with Concave Piecewise-Linear Utilities}

\TITLE{Cardinality-Constrained Continuous Knapsack Problem with Concave Piecewise-Linear Utilities}

\ARTICLEAUTHORS{%
\AUTHOR{Miao Bai}
\AFF{Department of Operations and Information Management, University of Connecticut,  \EMAIL{miao.bai@uconn.edu}, \URL{}}
\AUTHOR{Carlos Cardonha}
\AFF{Department of Operations and Information Management, University of Connecticut, \EMAIL{carlos.cardonha@uconn.edu}, \URL{}}
} 

\ABSTRACT{
We study an extension of the cardinality-constrained knapsack problem wherein each item has a concave piecewise linear utility structure  (\Problem{}), which is motivated by applications such as resource management problems in monitoring and surveillance tasks. Our main contributions are combinatorial algorithms for the offline \Problem{} and an online version of the \Problem{}. For the offline problem,  we present a fully polynomial-time approximation scheme and show that it can be cast as the maximization of a submodular function with cardinality constraints; the latter property allows us to derive a greedy $(1 - \frac{1}{e})$-approximation algorithm. For the online \Problem{} in the random order model, we derive a
$\frac{10.427}{\alpha}$-competitive algorithm based on~$\alpha$-approximation algorithms for the offline~\Problem{}; moreover, we derive stronger guarantees for the cases wherein the cardinality capacity is very small or relatively large. Finally, we investigate the empirical performance of the proposed algorithms in numerical experiments. 
}

\KEYWORDS{continuous knapsack; nonlinear knapsack; approximation algorithms; online algorithms}

\maketitle

\section{Introduction}

In the \textit{continuous knapsack problem}, the goal is to maximize the total utility of fully or fractionally selecting items from a given set subject to a knapsack constraint, i.e., the total weight of the utilized items must not exceed a given weight capacity \citep{dantzig1957discrete}. Although its classic form can be efficiently solved \citep{dantzig1957discrete}, \citet{de2003polyhedral} characterize an important NP-hard variant -- the \textit{cardinality-constrained} continuous knapsack problem, which includes additional constraints on the number of picked items, as seen in many real life applications \citep{Kim2019}.

It is worth noting that existing studies on the cardinality-constrained continuous knapsack problem have focused on linear utility functions for picking items \citep{de2003polyhedral,Pienkosz2017}. However, this assumption on the form of utility functions may not be applicable in many real-life scenarios, including resource management problems in monitoring and surveillance tasks. For example, sensors (e.g., surveillance cameras and network sniffers) are widely used to monitor physical activities across multiple locations within a region of interest or to monitor devices on a wireless network \citep{fang2012strategizing}. However, restrictions stemming from sensor availability, budget constraints, and limited storage and processing capabilities impose constraints on both the number of sensors to use (i.e., the cardinality constraint) and the total load of data collected by all sensors (i.e., the knapsack constraint) in these tasks \citep{Xian2018, Ren2022,granmo2007learning}. As a result, decision-makers need to determine the locations or devices to monitor (i.e., items to pick) and control the data load from each sensor (i.e., the amount to pick from each item) by adjusting their sampling or scan frequencies \citep{granmo2007learning, Gomez2022}. Moreover, the utility or value of information collected by each sensor is better seen as a concave function of their data load \citep{granmo2007learning}, reflecting the intuitive notion that the marginal value of data decreases as a better knowledge of the monitored target's status is built based on previous data.

In addition to the \textit{offline setting}, where monitoring decisions are made based on complete information about all items (e.g., targets of interest), decision scenarios may also arise in an \textit{online setting}. In this context, items sequentially become known to decision-makers, who must make irrevocable decisions about picking an item (e.g., whether and how much data to collect from a target) before the next one arrives \citep{marchetti1995stochastic}. Despite its important applications, the online continuous knapsack problem has received limited attention in the literature~\citep{giliberti2021improved}.

In this work, we study a cardinality-constrained continuous knapsack problem with \textit{concave piecewise linear utility functions} (\Problem{}) in  \textit{offline} and \textit{online} settings. The concave piecewise linear utility function can approximate general concave nonlinear utility functions, which connects \Problem{} to many online and offline decision-making processes in economics and business applications. Our work focuses on the design of approximation algorithms for the \Problem{}, as the problem is NP-hard. For the offline~\Problem{}, we present a fully polynomial-time approximation scheme (FPTAS) and a greedy~$(1 - \frac{1}{e})$-approximation algorithm; we show in our numerical experiments that the latter has a much better empirical performance than its theoretical guarantee. Finally, we present an algorithm for the online~\Problem{} in the random order model. The algorithm is iterative and relies on the solution of instances of the offline~\Problem{}. We show that we can use any $\alpha$-approximation algorithm for the offline~\Problem{} to solve these sub-problems at the expense of extending the loss factor~$\alpha$ to the competitive ratio of the online algorithm. More precisely, by using an efficient~$\alpha$-approximation algorithm for the offline~$\Problem{}$, we obtain a $\frac{10.427}{\alpha}$-competitive algorithm for the online~\Problem{}. We refine our analysis to obtain stronger guarantees when the cardinality capacity is very small or relatively large. Finally, we discuss cases where the utility functions are concave but not necessarily piecewise linear.

\subsection{Contributions to the literature}\label{sec:relatedwork}
Among numerous studies on knapsack problems, our work primarily contributes to the literature on approximation algorithms for knapsack problems; we refer interested readers to \citep{Cacchiani2022a,Cacchiani2022} for a comprehensive review of the area, model formulations, and exact solutions. We investigate a version of the continuous knapsack problem with item-wise independent piecewise linear concave utility functions. Previous literature studies efficient exact algorithms for concave mixed-integer problems with connection to the knapsack problem~\citep{more1990solution,bretthauer1994algorithm,sun2005exact,yu2017polyhedral}. For knapsack problems with convex utilities, we refer to~\citet{levi2014continuous} and~\citet{halman2014fully}. Finally, for a survey on techniques for other nonlinear knapsack problems, we refer to~\citet{bretthauer2002nonlinear}.

For the offline version of the~\Problem{}, we present an FPTAS. Regarding computational complexity, the~\Problem{} bears greater resemblance to the 0-1 knapsack problem than it does to the continuous knapsack problem. Among the extensive literature on approximation schemes for the knapsack problem \citep{sahni1975approximate,ibarra1975fast,lawler1979fast}, the work by \citet{hochbaum1995nonlinear} is the most relevant as it presents approximation schemes for both integer and continuous versions of the knapsack problem with concave piecewise linear utility function and convex piecewise linear weight function. Different from their work, we consider a cardinality constraint, which increases the complexity of the problem \citep{de2003polyhedral}. For the 0-1 knapsack problem with cardinality constraint, \citet{caprara2000approximation} present the first approximation schemes (both a PTAS and an FPTAS), later improved by~\citet{mastrolilli2006hybrid} and~\citet{li2022faster}. Departing from these works that assume linear utility functions and focus on offline settings, we consider concave utility functions and study the online version of the problem.

The online~\Problem{} belongs to the broad class of online packing problems, which have been intensively investigated in the literature. Some of these problems admit online algorithms with constant-factor competitive ratios \citep{karp1990optimal,mehta2007adwords,buchbinder2009online}; in particular, \citet{devanur2012online} derive strong guarantees for fractional packing problems where the items have concave utilities. However, the theoretical guarantees for knapsack problems in fully adversarial online settings are weaker. In particular, it follows from early results in the literature that the online~\Problem{} does not admit a~$c$-competitive algorithm for any constant~$c$ if an adversary has the power to define the weights and utility functions of the items and their arrival sequence, similar to the online 0-1 knapsack problem~\citep{marchetti1995stochastic,zhou2008budget}. Some early work in the online packing literature has focused on algorithms with competitive factors parameterized by features of the problem, such as sparsity  \citep{azar2016online,kesselheim2018primal}, eventually allowing for a controlled level of infeasibility \citep{buchbinder2009online}. Moreover, different online models have been studied, such as bandits (when resource consumption and utility are unknown) and the i.i.d. request model (when features of the items are sampled with repetition from an unknown distribution) \citep{badanidiyuru2018bandits,devanur2019near}. 

In this work, we study the online~\Problem{} in the \textit{random order model}, in which an adversary sets the utility functions and weights, but the number of items is known a priori, and their arrival order is drawn uniformly at random across all permutations of the items \citep{babaioff2007knapsack,kesselheim2018primal, albers2021improved}. Unlike the classic definition of competitive ratio for online problems \citep{borodin2005online}, the performance of online algorithms in the random order model is inherently stochastic. Therefore, the competitive ratio for these cases considers expected outcomes and is defined as follows.
\begin{mydefinition}[\textbf{Competitive Ratio}] \label{def:ratio}
An online algorithm $\mathcal{A}$
is $c$-competitive for a maximization problem defined over a set of instances~$\mathcal{I}$ and for some $c\geq 1$ if \[
\Ex[\mathcal{A}(I)]\geq \left(\frac{1}{c}-o(1)\right) O^*(I)
\]
holds for all instances $I\in \mathcal{I}$, where $\Ex[\mathcal{A}(I)]$ is the expected objective value of~$\mathcal{A}$ for instance~$I$,~$O^*(I)$ is the optimal offline objective value for~$I$, and $o(1)$ is asymptotic in the number of items.
\medskip
\end{mydefinition}

A few online problems in the random order model are closely related to the~\Problem{}. In the secretary problem \citep{ferguson1989solved}, items arrive sequentially in random order, and their utility is only revealed upon arrival; one must make a one-time irrevocable decision to pick an item to maximize the expected utility. \citet{babaioff2007knapsack} are among the first to study the 0-1 knapsack problem in the random order model; they present a $10e$-competitive algorithm for the problem. \citet{kesselheim2018primal} later obtain an 8.06-competitive algorithm, and the state-of-the-art is the 6.65-competitive algorithm by~\citet{albers2021improved}. To our knowledge, \citet{karrenbauer2020reading} is the first article to investigate the continuous knapsack problem in the random order model; they present a 9.37-competitive algorithm for the problem. The current state of the art is the 4.39-competitive algorithm proposed by~\citet{giliberti2021improved}.

We contribute to the literature by studying an important variant of the online continuous knapsack problem. The~\Problem{} incorporates an additional packing dimension (i.e., the cardinality capacity), so these early results do not extend to our problem. Nevertheless, our algorithm and analysis explore some ideas presented in these papers. In particular, similar to~\citet{albers2021improved} and~\citet{giliberti2021improved}, we use a sequential algorithm, which changes its behavior over time. With respect to the analysis, we re-use some results associated with the knapsack problem from~\citet{albers2021improved} and explore the uniformity of the arrival orders to extract nontrivial lower bounds based on the strategy used  in~\citet{kesselheim2018primal} and~\citet{albers2021improved} to analyze algorithms for the generalized assignment problem. Finally, forced by the structure (or lack thereof) of the~$\Problem{}$, we introduce new elements to our algorithm and analysis to improve the competitive ratio, such as a scaling factor to control the knapsack space allocated to each item.

Different from the 0-1 knapsack problem, the utilization of an item in the~$\Problem{}$ is not binary but a choice of the algorithm. Moreover, the utility density of an item is not always constant in our case, as the utility functions are concave. Therefore, different from~\citet{albers2021improved}, we must sacrifice utility and scale down the amount of knapsack capacity allotted to each item to gain control over knapsack utilization. Moreover, \citet{giliberti2021improved} explore the fact that, in the continuous knapsack problem, the optimal utilization of an item when the problem is solved for all items of a set~$\slots$ is a lower bound for the optimal utilization of the same item when the problem is restricted to a subset of~$\slots$ that contains~$\slot$. Therefore, the algorithm by~\citet{giliberti2021improved} will only under-allocate an item if there is not enough residual knapsack capacity. This observation plays a key role in the analysis presented in~\cite{giliberti2021improved}, but it does not hold for the~$\Problem{}$. Therefore, our analysis is heavily based on the uniformity of the arrival rates instead, as in~\cite{kesselheim2018primal} and \citet{albers2021improved}. 

Finally, \citet{kesselheim2018primal} and \citet{albers2021improved} rely on the solutions of computationally tractable relaxations of their underlying problems when deciding to incorporate an item. In our case, we must use solutions of~$\alpha$-approximation algorithms for the problem; as a consequence, the approximation error~$\alpha$ propagates to the competitive factor of our online algorithm. Finally, our analysis and results extend to cases where the utility functions are continuous (i.e., not necessarily piecewise linear) and satisfy smoothness conditions (e.g., Lipschitz continuous).

\subsection{Organization}

The manuscript is organized as follows. Section~\ref{sec:definition} introduces the notation. Section~\ref{sec:Approximation Algorithms} presents the approximation results for the offline version of the~\Problem{}. Section~\ref{sec:online} investigates the online~\Problem{} in the random order model. Section~\ref{sec:experiments}  reports the results of our numerical experiments. Finally, we conclude the paper in Section~\ref{sec:conclusions}. Results and proofs omitted from the main text are presented in the Appendix.

\section{Definition and Notation}\label{sec:definition}

Let~$\slots$ denote a set of~$\numitems$ items; each item~$\slot$ has a capacity $\maxcapacity_{\slot}$ and is associated with a concave piecewise linear utility function~$\utilities_{\slot}(\cdot)$. We represent solutions to the~\Problem{} using a vector $\vecx:\slots \rightarrow \mathbb{R}^+$ to denote the utilization~$x_{\slot}$ of each item~$\slot$. A feasible solution must satisfy a knapsack and a cardinality constraint, parameterized by~$\knapsackcapacity$ and~$\cardinalitycapacity$, respectively. The objective is to maximize the overall utility of utilized items. The~$\Problem$ admits the following mathematical programming formulation \textbf{knapsack constraint needs a coefficient}:
\[
\begin{array}{ccll}\label{model:original}
\tag{\textbf{\texttt{CKP-o}}}
&\max
& \sum\limits_{\slot \in \slots}\utilities_{\slot} (x_{\slot})  \\
&(a) & \sum\limits_{\slot \in \slots}  x_{\slot}   \leq \knapsackcapacity \\
&(b)& \sum\limits_{\slot \in \slots} \| x_{\slot} \|_0 \leq \cardinalitycapacity  & \\
&(c)& x_{\slot} \leq \maxcapacity_{\slot} &\forall \slot \in \slots \\
&& x_{\slot} \in \mathbb{R}^+ &\forall \slot \in \slots
\end{array}
\]
The knapsack constraint \ref{model:original}-(a) limits the total utilization of items to~$\knapsackcapacity$. The cardinality constraint \ref{model:original}-(b) enforces an upper limit~$\cardinalitycapacity$ on the number of utilized items ($\| x_{\slot} \|_0=0$ if $x_{\slot}=0$ and $\| x_{\slot} \|_0=1$ if $x_{\slot}>0$). The capacity constraint \ref{model:original}-(c) limits the maximum utilization level of each item~$\slot$ by its capacity~$\maxcapacity_{\slot}$.

We make the following assumptions, which hold without loss of generality. First, the utilization level of an item is never larger than the knapsack capacity~$\knapsackcapacity$, so we assume that $\maxcapacity_{\slot} \leq \knapsackcapacity$ for all items~$\slot$. Moreover, if~$\maxcapacity_{\slot}'$ is the maximum value of $\overline{\maxcapacity}\in[0,\maxcapacity_{\slot}]$  such that~$R_{\slot}(\cdot)$ is non-decreasing in the interval~$[0,\overline{\maxcapacity}]$, we must have optimal solutions to~\ref{model:original} satisfying ~$x_{\slot}\leq \maxcapacity_{\slot}'$; otherwise, we can improve a solution with~$x_{\slot} > \maxcapacity_{\slot}'$ by reducing $x_{\slot}$ to~$\maxcapacity_{\slot}'$. Therefore, we can assume that all utility functions are non-decreasing. Finally, we assume that no two items have the same total utilities~$R_{\slot}(\maxcapacity_{\slot})$; if this condition does not apply, it suffices to apply a consistent tie-breaking procedure (e.g., increase the utilities by a negligibly small random amount) between elements whose total utilities coincide (\cite{babaioff2007knapsack}).

\section{Offline~\Problem{} and Approximation Algorithms}\label{sec:Approximation Algorithms}

The~$\Problem$ is a generalization of the cardinality-constrained knapsack problem, which was shown to be NP-hard by~\citet{de2003polyhedral}. Therefore, it follows that the~$\Problem$ is NP-hard.  In this section, we develop two approximation algorithms for the~$\Problem{}$ based on a structural property of the problem, presented in~\S\ref{sec:piecewise}. We present an FPTAS in~\S\ref{sec:FPTAS1}, and a greedy algorithm in~\S\ref{sec:submodularity}, which we prove to be a $\left(1 - \frac{1}{e}\right)$-approximation algorithm to the~$\Problem{}$.  Lastly, we conclude this section with a discussion of the case in which the utility functions are not piecewise linear.

\subsection{Utility Decomposition and Discreteness of Optimal Solutions}\label{sec:piecewise}

The piecewise linear structure allows us to decompose the utility function of each item~$\slot$ into a sequence of~$\numcomponents_{\slot}$ \textit{components}, each corresponding to a segment of the piecewise linear representation of~$\utilities_{\slot}(\cdot)$. The~$i^{th}$ component of item $\slot$ is denoted by~$\component_{\slot,i}$ and has capacity~$\maxcapacity_{\slot,i}$ and utility~$\utility_{\slot,i}$ per unit of utilization; because of the concavity of $\utilities_{\slot}(\cdot)$, we have $\utility_{\slot,i} > \utility_{\slot,i+1}$ for $i \in \{1,\ldots,\numcomponents_{\slot}-1\}$. We present in the Appendix \ref{sec:reformulation} an equivalent mixed-integer programming reformulation of~\ref{model:original} that represents components explicitly in the formulation. 

The decomposition into components discloses the discreteness of optimal solutions to the~\Problem{}. Namely, the decisions involving the components of all items (except for at most one component) are binary. Therefore, we can focus on solutions for the~$\Problem{}$ with a discrete structure when developing the approximation algorithms presented in this section. 
\begin{remark}\label{prop:partialcomponent}
    Every instance of the~$\Problem$ admits an optimal solution where at most one component~$\component^*_{\slot,i}$ is partially utilized. Moreover, $\component^*_{\slot,i}$ has the smallest utility across all utilized components. The formal proof of this property is presented in Appendix \ref{sec:reformulation}.
\end{remark}

\subsection{FPTAS for the~$\Problem{}$}
\label{sec:FPTAS1}

We design a dynamic programming algorithm that solves the~\Problem{} exactly. Although the state space of our formulation is infinitely large, it allows for the application of a discretization that delivers an FPTAS. Specifically, we show how to derive a $(1 - \epsilon)$-approximation algorithm to the~\Problem{} that runs in polynomial time in~$|\slots|$ and~$\displaystyle\frac{1}{\epsilon}$ for any given~$\epsilon > 0$.

\subsubsection{Dynamic programming algorithm}

We design an algorithm that iteratively solves sub-problems $\Problem(\slot)$ for each~$\slot$ in~$\slots$, where $\Problem(\slot)$ has the same input parameters, constraints, and objective function as~$\Problem{}$, but considers only solutions in which $\slot$ is the \textit{only} item that may have a component being \textit{partially} utilized. Remark~\ref{prop:partialcomponent} shows that every instance of the~$\Problem{}$ admits an optimal solution consisting of, at most, one partially utilized component. Therefore, we can focus our search on solutions satisfying this property.

Our algorithm is presented in Algorithm~\ref{algo:dp}, which is divided into two steps: 1) we identify solutions that fully utilize components of items in $\slots \setminus \{\slot\}$, and 2) we examine whether these solutions can be improved through the incorporation of~$\slot$.

{
\footnotesize
\begin{algorithm}
\footnotesize

    $\fullutility^* = 0$ \hfill \textit{Best objective value}

    \For{$\slot \in \slots$}{

        \For{$(i,l,\mu) \in \mathcal{M}^{\slot}$  }{  
        
            $d^{\slot}(i,l,\mu) = \knapsackcapacity + 1$ \hfill \textit{Initialization of~$d(\cdot)$}
        }

        \For{$i \in \{0,1,\ldots,|\mathcal{O}(\slot)| \}$}{
        
            $d^{\slot}(i,0,0) = 0$ \hfill \textit{Base case: solutions with zero components}
        }
    
        
        $i = 1$
        
        \For{$i' \in \{1,\ldots,|\slots|-1 \}$}{
    
        \For{$k \in \{1,\ldots,\numcomponents_{\pi(i')} \}$}{
        
            \For{$l \in \{1,\ldots,\cardinalitycapacity \}$}{
            
                \For{$\mu \in \left[0,\sum\limits_{\slot' \in \slots}
    \utilities_{\slot'}(\maxcapacity_{\slot}')
     \right]$}{
            
                    \If{$i >  \sum\limits_{i'' = 1}^{l-1} \numcomponents_{\pi(i'')}$} {
        
                        $d^{\slot}(i,l,\fullutility)  = d^{\slot}(i-1,l,\fullutility)$  \hfill         \textit{Case 1: skip element~$i$}
                    
                        \If{$\fullutility'_{i} \leq \fullutility$ \textbf{and} $d^{\slot}(i,l,\fullutility)  >                                 d^{\slot}(i-k,l-1,\fullutility - \fullutility'_{i}) + \maxcapacity'_{i}$}{
                    
                            $d^{\slot}(i,l,\fullutility)  =
                             d^{\slot}(i-k,l-1,\fullutility - \fullutility'_{i}) + \maxcapacity'_{i}$ \hfill         \textit{Case 2: include element~$i$}
                        }

                        \If{$\fullutility > \fullutility^*$ \textbf{\text{and}} $d^{\slot}(i,l,\fullutility) \leq \knapsackcapacity$}{
                        
                            $\fullutility^* = \fullutility$ \hfill  \textit{Update objective value}
                        
                        }
                    }
                }
            }
            $i = i + 1$        \hfill         \textit{Update element index}
            }
        }

        \For{$l \in \{1,\ldots,\cardinalitycapacity-1 \}$}{
                
            \For{$\mu \in \left[0,\fullutility^*\right]$}{
            
                \If{$d^{\slot}(|\mathcal{O}(\slot)|,l,\fullutility) \leq \knapsackcapacity$}{
                
                    \If{$\fullutility + 
        \utilities_{\slot}\left( \min( \knapsackcapacity - d^{\slot}(i,k,\fullutility), \quad \maxcapacity_{\slot}   )  \right)
        > \fullutility^*$}{
            $\fullutility^* = \fullutility + 
        \utilities_{\slot}\left( \min( \knapsackcapacity - d^{\slot}(i,l,\fullutility), \maxcapacity_{\slot}   )  \right)$
        \hfill
        \textit{Add~$\slot$ and update objective value}

                    }
                
                }
            
            }
            
        }
        
    }
  \caption{Dynamic programming algorithm for the~$\Problem{}$}\label{algo:dp}
\end{algorithm}
}

\noindent\textbf{Step 1:} We construct solutions containing fully-utilized components in~$\slots \setminus \{\slot\}$ for the sub-problem~$\Problem(\slot)$ in a dynamic programming approach. We consider an arbitrary permutation~$\pi:\slots \setminus \{\slot\} \rightarrow \{1,2,\ldots,|\slots|-1\}$ to build the state space, where~$\pi(k)$ denotes the~$k$-th item in~$\pi$. Recall that item $\slot$ consists of $\numcomponents_{\slot}$ components.

Given~$\pi$, we construct an ordered sequence~$\mathcal{O}(\slot)$ with~$\sum\limits_{i=1}^{|\slots|-1}\numcomponents_{\pi(i)}$ elements. 
The first~$\numcomponents_{\pi(1)}$ elements of~$\mathcal{O}(\slot)$ are associated with item~$\pi(1)$. For $i=1,\dots, \numcomponents_{\pi(1)}$, element~$o_{i}$ represents the full utilization of (only) the first~$i$ components of item~$\pi(1)$. That is, element~$o_{i}$ corresponds to the case in which the first~$i$ components of item~$\pi(1)$ are fully utilized, but other components of~$\pi(1)$ are not utilized. The weight of element~$o_{i}$ is defined as $\maxcapacity'_{i} \coloneqq \sum\limits_{i' = 1}^{i} \maxcapacity_{\pi(1),i'}$, and its utility is defined as $\fullutility'_{i} \coloneqq \utilities_{\pi(1)}\left(\sum\limits_{i' = 1}^{i} \maxcapacity_{\pi(1),i'} \cdot r_{\pi(1),i'}\right)$. The next~$\numcomponents_{\pi(2)}$ elements of~$\mathcal{O}(\slot)$ are associated with item~$\pi(2)$; for $i=1,\dots, \numcomponents_{\pi(2)}$, element~$o_{\numcomponents_{\pi(1)}+i}$ represents the full utilization of (only) the first~$i$ components of~$\pi(2)$; it has  weight $ \maxcapacity'_{\numcomponents_{\pi(1)}+i} \coloneqq \sum\limits_{i' = 1}^{i} \maxcapacity_{\pi(2),i'}$, and value  $\fullutility'_{\numcomponents_{\pi(1)}+i} \coloneqq \utilities_{\pi(2)}\left(\sum\limits_{i' = 1}^{i} \maxcapacity_{\pi(2),i'} \cdot r_{\pi(2),i'}\right)$. The following elements in $\mathcal{O}(\slot)$ are defined similarly, i.e., we construct  $\numcomponents_{\pi(\slot')}$ elements for item~$\pi(\slot')$ for each~$\slot' \in \slots \setminus \{\slot\}$.

We define~$d^{\slot}(i,l,\fullutility)$ for triples in $\mathcal{M}^{\slot} \equiv \{0, 1,\ldots, |\mathcal{O}(\slot)|\} \times \{0, 1,\ldots,\cardinalitycapacity\} \times \left[0,\sum\limits_{\slot' \in \slots}
\utilities_{\slot'}(\maxcapacity_{\slot}')
 \right]$. The value of $d^{\slot}(i,l,\fullutility)$ is the smallest amount of knapsack capacity used by a feasible solution to~$\Problem{}(\slot)$, containing~$l$ out of the first $i$ elements in~$\mathcal{O}(\slot)$ and attaining objective value~$\fullutility$. Although~$\fullutility$ may take any value in the continuous space~$\left[0,\sum\limits_{\slot \in \slots}
\utilities_{\slot'}(\maxcapacity_{\slot}')\right]$, we later apply a discretization technique to restrict the domain to a discrete set of values polynomially bounded in~$|\slots|$ and~$\frac{1}{\epsilon}$. Moreover, given the knapsack capacity~$\knapsackcapacity$, we define~$d^{\slot}(i,l,\fullutility) = \knapsackcapacity+1$ if the corresponding cases are not attainable; that is,~$\fullutility$ is too large and cannot be attained by~$l$ out of the first~$i$ items in~$\mathcal{O}(\slot)$. 
 
 According to the definition of element~$o_{i}$, a feasible solution must not include more than one element associated with the same item, which we refer to as the \textit{basic feasibility condition}. Because of this condition, if the first~$i$ elements of~$\mathcal{O}(\slot)$  are associated with fewer than~$l$ items, we cannot pick~$l$ or more elements and thus we must have $d^{\slot}(i,l,\fullutility) = \knapsackcapacity+1$ if $i \leq  \sum\limits_{i' = 1}^{l-1} \numcomponents_{\pi(i')}$.

In Algorithm~\ref{algo:dp}, we construct~$d^{\slot}(\cdot)$ iteratively. To start the algorithm, we initialize the optimal value~$\fullutility^*$ as zero and set~$d^{\slot}(i,l,\fullutility) = \knapsackcapacity+1$ for all states~$(i,l,\fullutility)$ to indicate that no feasible solution has been identified yet.  We also set~$d^{\slot}(i,0,0) = 0$ for all~$i \in \{0,\ldots,|\mathcal{O}(\slot)|\}$  to initialize the construction; these states represent feasible solutions that do not use any element of~$\mathcal{O}(\slot)$).

For other entries of~$d^{\slot}(\cdot)$, we sequentially analyze the incorporation of~$o_i$, the $i$-th element in $\mathcal{O}(\slot)$. Note that $o_i$ can be equivalently represented as the $k$-th element associated with item $\pi(i')$, that is, $i = \sum\limits_{i''=1}^{i'-1}\numcomponents_{\pi(i'')} + k$. We first check the basic feasibility condition: if $i \leq  \sum\limits_{i'' = 1}^{l-1} \numcomponents_{\pi(i'')}$, $d^{\slot}(i,l,\fullutility)$ remains as~$\knapsackcapacity+1$; otherwise, we inspect the cases of including and not including element $i$.

If we do not include~$o_i$, all~$l$ elements delivering utility~$\fullutility$ are chosen out of the first~$i-1$, and thus, we have~$d^{\slot}(i,l,\fullutility)=d^{\slot}(i-1,l,\fullutility)$. If~$\fullutility'_i \leq \fullutility$, we need to consider the case of including~$o_i$. In this case, the state~$(i,l,\fullutility)$ is attained by including element~$i$ and obtaining utility~$\fullutility - \fullutility'_i$ using~$l-1$ out of the first~$i-1$ elements of~$\mathcal{O}(\slot)$. If this solution uses a smaller capacity to reach state~$(i,l,\fullutility)$, Algorithm~\ref{algo:dp} updates the value of $d^{\slot}(i,l,\fullutility)$. We note that, according to the basic feasibility condition, element~$o_i$ can be included only if other elements associated with the same item are not included. If $o_i$ is the $k$-th element associated with item $\pi(i')$ and $o_i$ is incorporated into a solution, the update of $d^{\slot}(i,l,\fullutility)$ is based on $d^{\slot}(i-k,l-1,\fullutility-\fullutility_i')$. After updating the value of~$d^{\slot}(i,l,\fullutility)$, we update the optimal solution if the obtained solution is feasible (i.e., $d^{\slot}(i,l,\fullutility) \leq \knapsackcapacity$) and better (i.e.$\fullutility>\fullutility^*$). 

At the end of Step 1, the entries of~$d^{\slot}(\cdot)$ represent all the solutions to~$\Problem{}$ that consist of fully-utilized components associated with at most~$\cardinalitycapacity$ items in~$\slots \setminus \{\slot\}$.

\noindent\textbf{Step 2:} 
Next, we examine whether we can construct new solutions based on~$d^{\slot}(i,l,\fullutility)$ with $l\in\{0,1,\ldots,\cardinalitycapacity-1\}$ by additionally utilizing components of~$\slot$ and using the remaining space~$\knapsackcapacity - d^{\slot}(i,l,\fullutility)$. Note that components of~$\slot$ can be partially utilized, and since $l < \cardinalitycapacity$, the incorporation of components of~$\slot$ does not violate the cardinality capacity. Therefore, these new solutions are feasible to the~$\Problem$.

\subsubsection{Approximation scheme based on state space discretization:} For each~$\slot$ in~$\slots$, the complexity of Algorithm~\ref{algo:dp} depends on the size of the state space~$\mathcal{M}^{\slot}$. Particularly, the third dimension may assume any value in the continuous space~$\left[0,\sum\limits_{\slot \in \slots}
\utilities_{\slot'}(\maxcapacity_{\slot}')
\right]$.
Moreover, as the~\Problem{} is NP-hard, each subproblem~$\Problem(\slot)$ is also NP-hard, so the problem cannot be solved to optimality in polynomial time unless $P = NP$. Instead, we propose an approximation scheme that executes
Algorithm~\ref{algo:dp} in polynomial time by replacing~$\mathcal{M}^{\slot}$ with a polynomially-bounded discretized space. The discretization makes the execution times of both  Steps 1 and 2 polynomially bounded. Additionally, it only results in a precision loss in Step 1. Thus, we only need to control the errors in the construction of~$d^{\slot}(i,l,u)$ for~$\slot \in \slots$. 

We adapt an FPTAS for the knapsack problem to solve the~$\Problem(\slot)$, see, e.g., \citep{ibarra1975fast,vazirani2013approximation,caprara2000approximation}. These approximation schemes discretize the third dimension of~$\mathcal{M}^{\slot}$ by replacing the value~$\fullutility'_i$ of each element~$o_i$ in~$\mathcal{O}(\slot)$ by the value of~$k \Delta$ such that $k \Delta \leq  \fullutility'_i < (k+1) \Delta$, $k \in \mathbb{N}$, and~$\Delta \in O\left(\frac{|\mathcal{O}(\slot)|^c}{e^{c'}}\right)$, where~$c$ and~$c'$ are constant values that do not depend on the dimensions of the problem. As a result, $\mathcal{M}^{\slot}$ is replaced by a discrete state space with a size that is polynomially bounded in~$n$ and~$\frac{1}{\epsilon}$, which allows us to compute~$\underline{d}^{\slot}(\cdot)$ as counterparts to $d^{\slot}(\cdot)$ using Algorithm~\ref{algo:dp}. The optimal solution to the discretized problem is a $(1 - \epsilon)$-approximation to the original~$\Problem(\slot)$. Finally, incorporating~$\slot$ to each solution in Step 2 does not incur losses in the objective function and thus does not worsen the approximation factor. Therefore, by combining 
Algorithm~\ref{algo:dp}  and an FPTAS for~$\Problem(\slot)$, we obtain an FPTAS for the $\Problem$ with running time~$O(|\slots| \frac{m^2}{\epsilon} \cardinalitycapacity)$, where~$m \coloneqq \sum\limits_{\slot \in \slots}\numcomponents_{\slot}$ is the number of components composing all items in~$\slots$.

\begin{theorem}\label{thm:selectionfptas}
    The $\Problem$ admits an FPTAS with running time~$O(|\slots| \frac{m^2}{\epsilon} \cardinalitycapacity)$, where~$m \coloneqq \sum\limits_{\slot \in \slots}\numcomponents_{\slot}$.
\end{theorem}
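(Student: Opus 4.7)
The plan is to establish the theorem by verifying (i) the correctness of Algorithm~\ref{algo:dp} as an exact solver for each subproblem~$\Problem(\slot)$, (ii) the legitimacy of combining these subproblem solutions to solve the~$\Problem{}$, and then (iii) controlling both the running time and the approximation loss induced by discretizing the continuous utility dimension of the state space~$\mathcal{M}^{\slot}$.

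First I would verify by induction on~$i$ that, for each fixed~$\slot \in \slots$, the entry~$d^{\slot}(i,l,\fullutility)$ is the minimum knapsack capacity required to pick~$l$ elements from the first~$i$ positions of~$\mathcal{O}(\slot)$ while satisfying the basic feasibility condition and attaining utility exactly~$\fullutility$. The base case~$l=0$ follows from initialization; the inductive step uses the Bellman-type decomposition built into the pseudocode: an optimal sub-solution either omits~$o_i$ (reducing to~$d^{\slot}(i-1,l,\fullutility)$) or includes~$o_i$, in which case removing~$o_i$ leaves a feasible sub-solution on the first~$i-k$ elements, since the preceding~$k-1$ elements are associated with the same item~$\pi(i')$ and are therefore excluded by the basic feasibility condition. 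Correctness of Step~2 then follows from Remark~\ref{prop:partialcomponent}: every optimal solution contains at most one partially utilized component, which belongs to some item~$\slot^{*} \in \slots$. By iterating over every candidate~$\slot$ and, for each state~$d^{\slot}(i,l,\fullutility)$, topping up with the exact concave function~$\utilities_{\slot}$ on the residual knapsack capacity~$\knapsackcapacity - d^{\slot}(i,l,\fullutility)$ (subject to~$\maxcapacity_{\slot}$), the algorithm inspects a superset of the optimal solutions to the~$\Problem{}$.

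Next I would apply the standard Ibarra--Kim rounding to the continuous third coordinate. Letting~$U$ be a polynomially computable upper bound on the attainable utility (e.g., the sum of the largest~$\cardinalitycapacity$ values~$\utilities_{\slot'}(\maxcapacity_{\slot'}')$), I would set~$\Delta = \epsilon U / m$ and replace each~$\fullutility'_i$ by~$\lfloor \fullutility'_i / \Delta \rfloor \cdot \Delta$. This confines the feasible values of~$\fullutility$ in~$\underline{d}^{\slot}(\cdot)$ to an arithmetic grid of size~$O(m / \epsilon)$. Since each included element loses at most~$\Delta$ in utility due to rounding and a feasible solution has at most~$\cardinalitycapacity \leq m$ elements from~$\mathcal{O}(\slot)$, the recovered Step~1 solution is within an additive~$m \Delta = \epsilon U$ of the optimum; a standard calibration of~$U$ against the maximum single-element utility turns this into the multiplicative $(1-\epsilon)$ guarantee. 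Because Step~2 evaluates~$\utilities_{\slot}$ exactly on the residual capacity, it introduces no additional loss and thus preserves the~$(1-\epsilon)$-approximation for the whole~$\Problem{}$.

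The main obstacle will be the running-time bookkeeping, which must simultaneously absorb the outer loop over~$\slot$, the discretized third coordinate, and the cardinality coordinate. After discretization, each table~$\underline{d}^{\slot}(\cdot)$ has~$O(|\mathcal{O}(\slot)| \cdot \cardinalitycapacity \cdot m / \epsilon) = O(m^2 \cardinalitycapacity / \epsilon)$ entries, each updated in constant time by Algorithm~\ref{algo:dp}; iterating the entire construction for every~$\slot \in \slots$ yields the claimed $O(|\slots| \, m^2 \cardinalitycapacity / \epsilon)$ bound. A subtle point to check is that Step~2 does not dominate: the scan ranges over~$O(\cardinalitycapacity \cdot m / \epsilon)$ states and each evaluation of the piecewise-linear~$\utilities_{\slot}$ takes~$O(\numcomponents_{\slot})$ time, giving~$O(\cardinalitycapacity \cdot m^2 / \epsilon)$ per~$\slot$, which matches the Step~1 bound and therefore does not affect the final complexity.
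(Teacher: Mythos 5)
Your proposal follows essentially the same route as the paper's own argument: the same dynamic program over the subproblems $\Problem(\slot)$ whose correctness rests on Remark~\ref{prop:partialcomponent}, the same Ibarra--Kim-style rounding of the continuous utility coordinate, and the same state-space accounting yielding $O(|\slots|\frac{m^2}{\epsilon}\cardinalitycapacity)$. The only detail to tighten is the calibration of $\Delta$: with $U$ taken as the sum of the $\cardinalitycapacity$ largest element utilities, the additive loss $m\Delta = \epsilon U$ can be as large as $\epsilon\,\cardinalitycapacity\cdot \mathrm{OPT}$, so $\Delta$ should instead be scaled by a constant-factor estimate of $\mathrm{OPT}$ (or by the maximum single-element utility with $\epsilon$ suitably rescaled) to keep the grid size $O(m/\epsilon)$ while guaranteeing the multiplicative $(1-\epsilon)$ bound --- the same standard step the paper leaves implicit.
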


\subsection{Greedy Algorithm for the~$\Problem{}$}\label{sec:submodularity}

Algorithm~\ref{alg:greedy} is a greedy approach to solve the~\Problem{}. This algorithm is based on iteratively solving the \textit{$\Problem{}$ with no cardinality constraint}, which we denote by~$\Problem{}^T$. Differently from the~\Problem{}, the~$\Problem{}^T$ can be solved to optimality by \citet{dantzig1957discrete}'s greedy algorithm, as it is equivalent to the continuous knapsack problem over the~$m$ components.

\begin{proposition}\label{prop:greedyoptimal}
The~$\Problem{}^T$  can be solved in time~$O(m \log{m})$.
\end{proposition}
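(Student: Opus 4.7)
The plan is to reduce the $\Problem^T$ to a standard instance of the classical continuous knapsack problem over the $m$ components (pseudo-items), and then invoke Dantzig's greedy algorithm.

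First, I would observe that, without the cardinality constraint, $\Problem^T$ decomposes naturally over the components of the items. Concretely, I would introduce one pseudo-item per component $\component_{\slot,i}$, with weight capacity $\maxcapacity_{\slot,i}$ and constant per-unit utility $\utility_{\slot,i}$; the only constraint tying pseudo-items together is the knapsack constraint of total utilization at most $\knapsackcapacity$. Given any feasible utilization of these pseudo-items, one can read off a utilization vector $\vecx$ for $\Problem^T$ whose objective value equals the sum of per-unit utilities, so the two problems have the same optimal value \emph{provided} that in any optimal solution the components of a single item are utilized in the correct left-to-right order along the piecewise linear function.

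The latter property is exactly where the concavity of $\utilities_{\slot}(\cdot)$ is used. Because $\utility_{\slot,1} > \utility_{\slot,2} > \cdots > \utility_{\slot,\numcomponents_{\slot}}$, a simple exchange argument shows that if component $\component_{\slot,i}$ has positive utilization in an optimal solution while some earlier component $\component_{\slot,i'}$ with $i' < i$ is not fully utilized, then shifting utilization from $\component_{\slot,i}$ to $\component_{\slot,i'}$ strictly increases the objective without violating any constraint. Hence, we may restrict attention to solutions of the pseudo-item instance that are consistent with the original piecewise linear utility, and the two problems become equivalent.

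Having established the reduction, I would apply Dantzig's greedy algorithm to the pseudo-item instance: sort all $m$ components in decreasing order of $\utility_{\slot,i}$ in time $O(m \log m)$, then sweep through the sorted list and fully load each component until the residual knapsack capacity is exhausted, fractionally filling the last component that does not fit entirely. Correctness of the greedy rule for the continuous knapsack problem is classical, and the concavity argument above guarantees that the output corresponds to a valid solution of $\Problem^T$. The total running time is dominated by the sort, giving $O(m \log m)$. The only subtle point I would want to state explicitly is the concavity-based exchange argument justifying the reduction to independent pseudo-items; the rest is a direct appeal to Dantzig's algorithm.
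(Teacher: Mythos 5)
Your proposal is correct and follows essentially the same route as the paper: decompose $\Problem^T$ into one pseudo-item per component, observe that the strict decrease of per-unit utilities within an item (concavity) guarantees consistency with the piecewise linear structure, and apply Dantzig's greedy algorithm, with the $O(m\log m)$ sort dominating the running time. Your explicit exchange argument simply spells out what the paper summarizes as the components of an item being selected sequentially by the greedy order.
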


For~$\slots' \subseteq \slots$, we can construct an instance~$I(\slots')$ of the~$\Problem{}^T$ with knapsack constraint $\knapsackcapacity$ and no cardinality constraint over the set of items~$\slots'$. We denote $G(\slots'): 2^{\slots} \rightarrow \mathbb{R}$ as the optimal objective value of~$I(\slots')$. Algorithm \ref{alg:greedy} starts with an empty set~$\slots'$; in each step, it adds an item $\slot^*$ that leads to the maximum increase in $G(\cdot)$, that is, $\slot^*=\argmax\limits_{\slot \in \slots \setminus \slots'} G(\slots' \cup \{\slot\})$. After~$\cardinalitycapacity$ iterations, $\slots'$ contains~$\cardinalitycapacity$ elements, and Algorithm~\ref{alg:greedy} returns a solution to the original instance of the~$\Problem{}$.

{
\footnotesize
\begin{algorithm}[ht!]
\footnotesize
 $\slots' = \emptyset$

\For{$k = 1$ to $\cardinalitycapacity$}{
    
     $\slot^* = \argmax\limits_{\slot \in \slots \setminus \slots'} G(\slots' \cup \{\slot\})$
     
    $\slots' = \slots' \cup \{\slot^*\}$
    }

 return $G(\slots')$
\caption{Greedy algorithm for the~\Problem{}}
\label{alg:greedy}
\end{algorithm}
}
Each step of Algorithm~\ref{alg:greedy} requires the solution of~$O(|\slots|)$ instances of the~$\Problem{}^T$. The solution of each sub-problem~$\Problem{}^T$ can be identified in time~$O(m \log{m})$ (see Proposition~\ref{prop:greedyoptimal}). Therefore, Algorithm~\ref{alg:greedy} runs in time~$O(\cardinalitycapacity|\slots|m\log{m})$.

Proposition~\ref{prop:pointwisesubmodular} shows that  function~$G(\cdot)$ is monotone submodular.
\begin{proposition}\label{prop:pointwisesubmodular} 
$G(\cdot)$ is monotone submodular.
\end{proposition}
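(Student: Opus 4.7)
The plan is to reduce the monotone submodularity of $G$ on item subsets to a corresponding statement for a set function~$h$ defined on subsets of \emph{components}, and then establish submodularity of~$h$ via a superlevel-set (layer-cake) representation that makes the relevant structure transparent.

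Monotonicity of $G$ is immediate: if~$\slots^1 \subseteq \slots^2$, any feasible solution for~$I(\slots^1)$ extends to a feasible solution for~$I(\slots^2)$ of equal value by setting the new coordinates to~$0$. For submodularity, I exploit the component decomposition from~\S\ref{sec:piecewise} to pass from items to components. Setting $\mathcal{C}(\slots^1) \coloneqq \bigcup_{\slot \in \slots^1} \{\component_{\slot,1}, \ldots, \component_{\slot,\numcomponents_{\slot}}\}$ (a disjoint union across distinct items), Proposition~\ref{prop:greedyoptimal} gives~$G(\slots^1) = h(\mathcal{C}(\slots^1))$, where~$h(\mathcal{T})$ denotes the value of Dantzig's greedy algorithm on an arbitrary component set~$\mathcal{T}$ with knapsack capacity~$\knapsackcapacity$. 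Because the map~$\slots^1 \mapsto \mathcal{C}(\slots^1)$ is an order-preserving disjoint-union embedding, pre-composition preserves monotone submodularity, so it suffices to prove that~$h$ is monotone submodular on its component ground set.

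For~$h$, let~$\rho_{\mathcal{T}}(w)$ denote the greedy utility rate at knapsack position~$w$: the non-increasing step function obtained by listing the rates of components in~$\mathcal{T}$ in decreasing order, each occupying an interval of length equal to its capacity, extended by~$0$ beyond the total component weight~$W(\mathcal{T})$. Then
\[
h(\mathcal{T}) \;=\; \int_{0}^{\knapsackcapacity} \rho_{\mathcal{T}}(w)\, dw \;=\; \int_{0}^{\infty} \min\!\bigl(\knapsackcapacity,\, W^{\geq \lambda}(\mathcal{T})\bigr)\, d\lambda,
\]
where~$W^{\geq \lambda}(\mathcal{T}) \coloneqq \sum_{c \in \mathcal{T}} \maxcapacity_{c}\, \mathbbm{1}[\utility_{c} \geq \lambda]$ is the total capacity of components in~$\mathcal{T}$ whose rate is at least~$\lambda$; the second equality is the layer-cake identity, valid because~$\{w : \rho_{\mathcal{T}}(w) \geq \lambda\} = [0, W^{\geq \lambda}(\mathcal{T}))$. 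For each fixed~$\lambda$, $W^{\geq \lambda}(\cdot)$ is a nonnegative modular set function and~$\min(\knapsackcapacity, \cdot)$ is concave and nondecreasing; the composition of a concave nondecreasing scalar function with a nonnegative modular set function is monotone submodular (the diminishing-returns inequality follows from concavity applied to the two evaluation points). Since nonnegative integration preserves monotone submodularity pointwise in~$\lambda$, $h$ is monotone submodular, and hence so is~$G$. The main obstacle is isolating the correct integral representation; once the layer-cake identity is set up, the rest reduces to the standard ``concave-of-modular'' composition fact, and the support issue (namely,~$\rho_{\mathcal{T}}$ vanishing beyond~$W(\mathcal{T})$) is handled automatically by the cap at~$\knapsackcapacity$ in the superlevel-set form.
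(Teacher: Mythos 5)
Your proof is correct, but it takes a genuinely different route from the paper's. The paper argues directly about optimal solutions: it defines, for each item set and its optimal greedy solution, a non-increasing step function $V_{\slots,\vecx}(t)$ recording the utility rate of the component occupying knapsack position $t$, writes the marginal gain of a new item $\slot$ as $\utilities_{\slot}(x'_{\slot}) - \int_{\knapsackcapacity - x'_{\slot}}^{\knapsackcapacity} V(t)\,dt$, and then compares these expressions across nested sets, which requires two auxiliary facts: that $V$ is pointwise larger for the larger set, and that the optimal utilization of the new item can only shrink when the ground set grows. You bypass the analysis of optimal solutions entirely: after reducing to a component-level function $h$ via Proposition~\ref{prop:greedyoptimal} (and correctly noting that submodularity survives the block-disjoint lift back to items, by telescoping the diminishing-returns inequality over the components of an item), you write $h(\mathcal{T}) = \int_{0}^{\infty}\min\bigl(\knapsackcapacity, W^{\geq\lambda}(\mathcal{T})\bigr)\,d\lambda$ and invoke the standard fact that a concave nondecreasing function of a nonnegative modular function is monotone submodular, with nonnegative integration over $\lambda$ preserving the property. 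Your layer-cake identity is valid (the superlevel set of the non-increasing rearrangement at height $\lambda$ is an interval of length $W^{\geq\lambda}(\mathcal{T})$, capped at $\knapsackcapacity$), and the fact that $h$ is defined on arbitrary component subsets, including ones that are not prefix-closed within an item, is harmless since it is only an auxiliary extension. What each approach buys: yours is shorter, avoids the somewhat delicate comparison of optimal utilizations across nested instances, and exposes the "concave of modular" structure that explains \emph{why} the result holds; the paper's argument is more elementary and yields explicit formulas for the marginal gains $\Delta(\slots',\slot)$, which make the interplay between the new item's utility and the displaced components visible.
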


It follows from Proposition~\ref{prop:pointwisesubmodular} and the result by \citet{nemhauser1978analysis} that Algorithm~\ref{alg:greedy} is an algorithm with a constant-factor approximation guarantee.

\begin{theorem}\label{thm:greedy} Algorithm~\ref{alg:greedy} is a~$\left(1 - \frac{1}{e}\right)$-approximation for the~$\Problem$.
\end{theorem}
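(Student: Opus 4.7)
The plan is to reduce the theorem to the classical result of~\citet{nemhauser1978analysis}, which states that the greedy algorithm achieves a $(1 - 1/e)$-approximation for maximizing a monotone submodular set function subject to a cardinality constraint. Since Proposition~\ref{prop:pointwisesubmodular} already establishes that $G(\cdot)$ is monotone submodular, the remaining step is to argue that solving the~\Problem{} is equivalent to maximizing~$G$ over subsets of~$\slots$ of size at most~$\cardinalitycapacity$, so that Algorithm~\ref{alg:greedy} is exactly an instance of the Nemhauser--Wolsey--Fisher greedy procedure.

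First, I would observe that for any feasible solution~$\vecx$ to the~$\Problem{}$, its support $\slots(\vecx) \coloneqq \{\slot \in \slots : x_{\slot} > 0\}$ satisfies $|\slots(\vecx)| \leq \cardinalitycapacity$ by the cardinality constraint~\ref{model:original}-(b). Conversely, for any $\slots' \subseteq \slots$ with $|\slots'| \leq \cardinalitycapacity$, any solution supported on~$\slots'$ that satisfies the knapsack constraint is automatically feasible for the~$\Problem{}$, because the cardinality constraint is implied by $|\slots'| \leq \cardinalitycapacity$. Hence the optimal objective value of the~$\Problem{}$ equals
\[
\max_{\substack{\slots' \subseteq \slots \\ |\slots'| \leq \cardinalitycapacity}} G(\slots'),
\]
since $G(\slots')$ is by definition the best objective attainable when we restrict utilization to items in~$\slots'$ and drop the cardinality constraint (which is already satisfied). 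Monotonicity of~$G$, established in Proposition~\ref{prop:pointwisesubmodular}, further implies that the optimum over subsets of size at most~$\cardinalitycapacity$ is attained by a subset of size exactly~$\cardinalitycapacity$ (padding with any additional items never hurts).

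Second, I would note that Algorithm~\ref{alg:greedy} is precisely the standard greedy procedure for maximizing~$G$ subject to the cardinality constraint $|\slots'| \leq \cardinalitycapacity$: it starts from~$\emptyset$ and, in each of its~$\cardinalitycapacity$ iterations, adds the item $\slot^* \in \slots \setminus \slots'$ that maximizes the marginal gain $G(\slots' \cup \{\slot\}) - G(\slots')$ (equivalently, that maximizes $G(\slots' \cup \{\slot\})$ since the term $G(\slots')$ is fixed within an iteration). Invoking the result of~\citet{nemhauser1978analysis} for monotone submodular maximization under a uniform matroid constraint then yields
\[
G(\slots'_{\text{greedy}}) \geq \left(1 - \frac{1}{e}\right) \max_{\substack{\slots' \subseteq \slots \\ |\slots'| \leq \cardinalitycapacity}} G(\slots'),
\]
where $\slots'_{\text{greedy}}$ is the set returned by Algorithm~\ref{alg:greedy}. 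Combined with the equivalence above, this is exactly the claimed $(1 - 1/e)$-approximation guarantee.

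The main conceptual step is the reduction itself, i.e., verifying that dropping the cardinality constraint inside the definition of~$G$ does not lose anything once the outer selection respects~$\cardinalitycapacity$. Since $G(\slots')$ is defined as the optimum of~$\Problem{}^T$ on~$\slots'$ (no cardinality constraint), and any feasible solution to this restricted problem has support inside~$\slots'$ with $|\slots'| \leq \cardinalitycapacity$, feasibility in the original~$\Problem{}$ is automatic. Given this observation, no further approximation analysis is needed: the theorem follows by a direct appeal to the Nemhauser--Wolsey--Fisher theorem.
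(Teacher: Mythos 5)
Your proposal is correct and takes essentially the same route as the paper: both reduce the theorem to the Nemhauser--Wolsey--Fisher guarantee for greedy maximization of a monotone submodular function under a cardinality constraint, using Proposition~\ref{prop:pointwisesubmodular}. The only difference is that you explicitly verify the equivalence between the optimal value of the~\Problem{} and $\max_{|\slots'| \leq \cardinalitycapacity} G(\slots')$ (i.e., that dropping the cardinality constraint inside~$G$ loses nothing once the outer selection respects~$\cardinalitycapacity$), a step the paper leaves implicit; this is a sound and worthwhile elaboration rather than a different argument.
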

\begin{proof}{Proof of Theorem~\ref{thm:greedy}:} For any set~$S$  and for any monotone submodular set function~$f:2^{S} \rightarrow \mathbb{R}$ defined over~$S$, \citet{nemhauser1978analysis} show that a greedy algorithm gives a $(1-1/e)$-approximation for the problem $\max\limits_{S' \subseteq S: |S'| = k}f(S')$, $0 \leq k \leq |S|$. From Proposition~\ref{prop:pointwisesubmodular}, $G(\cdot)$ is monotone submodular, so it follows that Algorithm~\ref{alg:greedy} is a~$\left(1 - \frac{1}{e}\right)$-approximation for the~$\Problem$. 
\hfill$\blacksquare$
\end{proof}

Finally, \citet{conforti1984submodular} show that the greedy algorithm is actually a~$(\frac{1 - e^{-\gamma}}{\gamma})$-approximation for the maximization of submodular functions with a cardinality constraint,
where
\[
\gamma = \max_{\slot \in \slots}\frac{G(\{\slot\}) - G(\slots) + G(\slots \setminus \{\slot\})}{G(\{\slot\})}.
\]
The value of~$\gamma$ can be interpreted as an indicator of how far the function is from being additive. We have $\gamma = 1$ if there is an item~$\slot$ not being used in the optimal solution. In our case, $\gamma = 1$ is guaranteed to hold because of the cardinality constraint; thus, the approximation guarantee of Theorem~\ref{thm:greedy} is tight. Nevertheless, we note that the results of our numerical evaluation of Algorithm~\ref{alg:greedy} suggest that the empirical approximation ratio is better than the theoretical guarantee of 0.63 ensured by Theorem~\ref{thm:greedy}.

\begin{remark}\label{remark:general utilities}
Extending the~\Problem{} to more general utility functions is nontrivial, as optimal item utilizations may be irrational numbers that do not admit algebraic representations and, therefore, are not finitely representable (see~\citet{hochbaum1995nonlinear}). In particular, if the utility functions are not necessarily piecewise linear, the evaluation of the objective function becomes challenging, as it 
generalizes the square-root sum problem, which is as follows: given a set of integer numbers~$d_1, d_2, \ldots, d_n$ and an integer number~$k$, decide whether~$\sum\limits_{i = 1}^{n}\sqrt{d_i} \leq k$. To the best of the author's knowledge, the strongest result to date shows that the square-root sum problem belongs to the counting hierarchy~(\cite{allender2009complexity}), and it is still not clear whether the square-root sum problem belongs even to NP. However, we can derive an FPTAS for the extensions of the~$\Problem{}$ where the utility functions are continuous, injective, and Lipschitz continuous, with a dynamic programming formulation similar to the one used in Algorithm~\ref{algo:dp}. We present the details in the Appendix~\ref{sec:Continuous Utilities}.
\end{remark}

\section{Online~\Problem{} in the Random Order Model}\label{sec:online}

Following the definition of offline~\Problem{} in Section \ref{sec:definition}, we investigate an online version of the~\Problem{} where the items arrive sequentially, the utility and capacity of an item are only revealed upon its arrival, and the decision on the amount taken from each item is irrevocable and must be made before the next item arrives. In settings where an adversarial environment fully controls the items' utilities, capacities, and arrival sequence, the knapsack problem admits no~$c$-competitive algorithm for any constant~$c$~\citep{marchetti1995stochastic}; this result extends to the~\Problem{}. Therefore, we investigate the online~\Problem{} in the \textit{random order model}, in which an adversary may set the utility and capacity of the items, but the number of arriving items is known a priori, and the arrival order is uniformly distributed over the set of all permutations. Thus, the competitive ratio is defined over the \textit{expected} objective value of the solutions selected by the online algorithm, following Definition~\ref{def:ratio}.  

Our analysis and algorithm draw on the findings of~\citet{giliberti2021improved} for the continuous knapsack problem
and~\citet{kesselheim2018primal} and~\citet{albers2021improved} for the 0-1 knapsack problem and the general assignment problem (an extension of the knapsack problem where items may be assigned to several different knapsacks), all in the random order model. We discuss the commonalities and differences with these papers in Section~\ref{sec:relatedwork}.

\subsection{Additional Notation and Assumptions}\label{sec:add_notation}
In the random order model, the arrival order of the items is uniformly sampled out of the~$n!$ permutations of~$\slots$. We use~$\slots_l$ to denote the first~$l$ incoming items of~$\slots$ and~$\slots_{l,k} \equiv  \slots_{k} \setminus \slots_{l}$ is the subsequence~$\slot_l, \slot_{l+1},\ldots,\slot_k$, where $\slot_l$ refers to the item arriving at position $l$. We do not include explicit references to the specific realization of the arrival order in the notation, as they are unnecessary for our analysis. Finally, we assume that~$\cardinalitycapacity \geq 2$, as the online~$\Problem$ is equivalent to the secretary problem if~$\cardinalitycapacity = 1$ and therefore admits an $e$-competitive online algorithm~(\cite{ferguson1989solved}).

We use~$\Problem{}(\slots)$ to denote an instance of the~\Problem{} defined over the set of items~$\slots$. For every~$l$ in $\{1,2,\ldots,\numitems\}$, let $s^{(l)} = (\vecx^{(l)},\vecy^{(l)})$ denote a solution produced by a tractable~$\alpha$-approximation algorithm~$\Approx_{\alpha}$ to the~$\Problem{}(\slots_l)$ for some~$\alpha \in (0,1]$, and $x_{\slot_k}^{(l)}$ and $y_{\slot_k}^{(l)}$  denotes the utilization of the knapsack and carnality capacity by $\slot_k$ (the item arriving at position $k$) in~$s^{(l)}$. Our algorithm sets the values of~$x_{\slot_l}$ and~$y_{\slot_l}$ to determine the utilization of~$\slot_l$
based on~$x_{\slot_l}^{(l)}$ and $y_{\slot_l}^{(l)}$.

\subsection{Description of the Algorithm} 
We propose Algorithm \ref{alg:online} (denoted by~$\online$) to solve the online~\Problem{} in the random order model. The algorithm is iterative, and each step~$l$ in $\{1,2,\ldots,\numitems\}$ is associated with the arrival of an item~$\slot_l$. Algorithm~\ref{alg:online} stores the solution using~$\slots_{\online}$, which contains the set of incorporated items, and the vectors~$\vecx$ and~$\vecy$, where~$x_{\slot}$ and~$y_{\slot}$ are the knapsack and cardinality capacities allotted to~$\slot$, respectively. Moreover, $\online$ keeps track of the remaining (residual) knapsack and cardinality capacities using variables ~$\knapsackcapacity_{\online}$ and~$\cardinalitycapacity_{\online}$, respectively.

{
\footnotesize
\begin{algorithm}[ht!]
\footnotesize
    $\slots_{\online} = \emptyset$, 
    $\knapsackcapacity_{\online} = \knapsackcapacity$, 
    $\cardinalitycapacity_{\online} = \cardinalitycapacity$,
    $r^* = 0$, $\vecx = \textbf{0}$, $\vecy = \textbf{0}$
    
    \For{$l = 1$ to $c \cdot \numitems$}{ 

        $r^* = \max(r^*,R_{\slot_l}\left(\maxcapacity_{\slot_l})\right)$
    }
    \For{$l = c \cdot \numitems+1$ to $d \cdot \numitems$}{
    
        \If{$R_{\slot_l}\left(\maxcapacity_{\slot_l}\right) > r^*$ 
        \textbf{and} $\knapsackcapacity_{\online} > 0$ 
        \textbf{and} $\cardinalitycapacity_{\online} = \cardinalitycapacity$ 
        }{
        
            $\slots_{\online} = \slots_{\online} \cup \{\slot_l\}$,
            $\knapsackcapacity_{\online} = \knapsackcapacity_{\online} - \min(\maxcapacity_{\slot_l},\knapsackcapacity_{\online}) $,
            $\cardinalitycapacity_{\online} = \cardinalitycapacity_{\online} - 1$,
            $x_{\slot_l} = \maxcapacity_{\slot_l}$,
            $y_{\slot_l} = 1$
        }
    }
    \For{$l = d \cdot \numitems+1$ to $\numitems$}{
    
        $(\vecx^{(l)},\vecy^{(l)}) = \Approx{}_{\alpha}(\slots_l)$ 
    
        \If{$x^{(l)}_{\slot_l} > 0$ \textbf{and} $\knapsackcapacity_{\online} > 0$ \textbf{and} $\cardinalitycapacity_{\online} \geq 1$ 
        }{
            $x_{\slot_l} = \min(\beta x^{(l)}_{\slot_l},\knapsackcapacity_{\online})$,          $y_{\slot_l} = y^{(l)}_{\slot_l}$
        
            $\slots_{\online} = \slots_{\online} \cup \{\slot_l\}$,
            $\knapsackcapacity_{\online} = \knapsackcapacity_{\online} - x_{\slot_l}$,
            $\cardinalitycapacity_{\online} = \cardinalitycapacity_{\online} - 1$
        }
    }
    return $(\vecx,\vecy)$
\caption{Algorithm~$\online$ for the online~\Problem{} in the random order model}
\label{alg:online}
\end{algorithm}
}
Algorithm \ref{alg:online} consists of three phases, defined by parameters~$c$ and~$d$, $0 \leq c \leq d \leq 1$. Another parameter~$\beta \in (0,1)$ is used by~$\online$ when incorporating items in the third phase (see below). We derive values for~$c, d,$ and~$\beta$ that maximize the competitive ratio of~$\online$ at the end of the proof. 

\paragraph{Sampling phase (lines 2-3):}  In this phase, $\online$ observes the total utility~$R_{\slot_l}\left(\maxcapacity_{\slot_l}\right)$ of each item~$\slot_l$ in~$\slots_{c\cdot\numitems}$ and stores the maximum among these values in~$r^*$, without adding any of these items to~$\slots_{\online}$. Therefore, the residual capacities are equal to the original capacities (i.e., the knapsack is empty) at the end of the sampling phase.

\paragraph{Secretary phase (lines 4-6):} After~$c \cdot \numitems$ steps, Algorithm~\ref{alg:online} switches to the secretary phase, in which we use~$\online_S$ to denote the behavior of~$\online$. The goal of~$\online_S$ is to pick one item with large total utility using~$\utility^*$ as the reference value. More precisely, $\online_S$ incorporates the whole of the \textit{very first} item~$\slot_l \in \slots_{c\numitems+1,d\numitems}$ arriving during the secretary phase such that~$R_{\slot_l}(\maxcapacity_{\slot_l}) > r^*$. Recall that we assume w.l.o.g. that~$\maxcapacity_{\slot} \leq \knapsackcapacity$ for every item~$\slot$, and as the knapsack is empty when the secretary phase starts, $\online_S$ can always fully incorporate~$\slot_l$. Algorithm~$\online_S$ takes at most one item, as it only executes line 6 if $\cardinalitycapacity_{\online} = \cardinalitycapacity$  (i.e., when the knapsack is empty). If all items arriving during the secretary phase have total utility smaller than~$r^*$, the knapsack is empty at the end of step~$d \cdot \numitems$.

\paragraph{Knapsack phase (lines 7-11):} The~$(d \cdot \numitems+1)$-th arrival defines the beginning of the knapsack phase, in which we use~$\online_K$ to denote the behavior of Algorithm~\ref{alg:online}. Upon the~$l$-th arrival, $\online_K$ obtains a solution~$s^{(l)} = (\vecx^{(l)},\vecy^{(l)})$ for the~$\Problem{}(\slots_{l})$ (i.e., the~$\Problem{}$ with set of items restricted to~$\slots_l$) by solving~$\Approx_{\alpha}(\slots_{l})$. The only condition we need for our analysis is that the objective value of~$s^{(l)}$ attains a fraction not smaller than~$\alpha$ from the optimal solution; if~$\Problem{}(\slots_{l})$ admits more than one solution that satisfies this condition, any arbitrary solution may be chosen. If there are still residual knapsack and cardinality capacities upon the arrival of item~$\slot_l$ (i.e., $\knapsackcapacity_{\online} > 0$ and~$\cardinalitycapacity_{\online} \geq 1$), and~$\slot_{l}$ is utilized in~$s^{(l)}$ (i.e., if $x_{\slot_l}^{(l)} > 0$), $\online_K$ incorporates~$\slot_l$. The assigned capacity~$x_{\slot_l}$ to~$\slot_l$ is upper-bounded by the minimum between the residual capacity~$\knapsackcapacity_{\online}$ and~$\beta x_{\slot}^{(l)}$ for some~$\beta \in (0,1]$, i.e., $\online_{}$ scales the utilization in~$s^{(l)}$ when incorporating~$\slot_l$. Consequently, no item selected by~$\online_{K}$ can occupy more than a fraction $\beta$ of the knapsack capacity $\knapsackcapacity$. 

In the next sections, we study the expected utilities collected by Algorithm~\ref{alg:online}. Specifically, we derive lower bounds for the expected utilities collected in the Secretary and Knapsack phases.

\subsection{Probabilities of Item Incorporation}
\label{sec:expectations}

In this section, we investigate the probabilities with which Algorithm~\ref{alg:online} incorporates specific items during the secretary and knapsack phases. These results enable us to derive lower bounds for the expected utility collection. 

\subsubsection{Secretary phase}
\label{sec:secret}
Algorithm~$\online_{S}$ incorporates the first incoming item in~$\slots_{c \cdot \numitems+1,d \cdot \numitems }$ whose total utility is larger than~$\utility^*$. To derive a lower bound for the expected utility collected in~$\online_{S}$, we focus on the case where the item with largest total utility $\overline{\slot}$, i.e.,~$\overline{\slot} \equiv \argmax\limits_{\slot \in \slots} \utilities_{\slot}(\maxcapacity_{\slot})$, is picked. Lemma~\ref{lemma:sec} presents a lower bound for~$\overline{p}$, the probability with which~$\online_{S}$ incorporates~$\overline{\slot}$, which is asymptotic in the number of items~$\numitems$.

\begin{lemma}[Lemma 1 in \cite{albers2021improved}]\label{lemma:sec} 
The probability~$\overline{p}$ that item~$\overline{\slot}$ is incorporated in the secretary phase satisfies
$
  \overline{p} \geq c \ln{\frac{d}{c}} - o(1). 
$
\end{lemma}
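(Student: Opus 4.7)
The plan is to isolate a clean sufficient condition under which $\overline{\slot}$ is incorporated, and then compute the probability of that condition directly from the random order hypothesis. Specifically, I would argue that $\overline{\slot}$ is picked whenever two events occur simultaneously: (a) $\overline{\slot}$ arrives at some position $l$ with $c\numitems + 1 \leq l \leq d\numitems$, and (b) the item with the largest total utility among the first $l-1$ arrivals sits inside the sampling window (positions $1,\dots,c\numitems$). Condition (b) forces $r^*$ to be at least as large as the total utility of every item arriving in positions $c\numitems+1,\dots,l-1$, so Algorithm~$\online_S$ does not incorporate any item strictly before $\overline{\slot}$; together with (a) and the fact that $\overline{\slot}$ itself satisfies $R_{\overline{\slot}}(\maxcapacity_{\overline{\slot}}) > r^*$, this guarantees the incorporation of $\overline{\slot}$. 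Note that since the knapsack is empty when the secretary phase starts and $\maxcapacity_{\overline{\slot}} \leq \knapsackcapacity$, no capacity issues interfere.

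Next I would compute $\Pr[\text{(a)} \cap \text{(b)}]$ by conditioning on the arrival position of $\overline{\slot}$. Under the random order model, $\Pr[\overline{\slot} \text{ at position } l] = 1/\numitems$, and conditional on this, the remaining $\numitems - 1$ items are uniformly permuted over the remaining positions. By exchangeability, the position of the argmax over $\slots \setminus \{\overline{\slot}\}$ restricted to positions $1,\dots,l-1$ is uniform on $\{1,\dots,l-1\}$, hitting the sampling window with conditional probability $c\numitems/(l-1)$. Summing then gives
\[
\overline{p} \;\geq\; \sum_{l = c\numitems + 1}^{d\numitems} \frac{1}{\numitems} \cdot \frac{c\numitems}{l - 1} \;=\; c \sum_{l = c\numitems}^{d\numitems - 1} \frac{1}{l}.
\]

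The final step is the routine harmonic-to-logarithm estimate: bounding this partial sum below by $\int_{c\numitems}^{d\numitems} \frac{dx}{x} = \ln(d/c)$ up to a vanishing error, which yields $\overline{p} \geq c \ln(d/c) - o(1)$ as $\numitems \to \infty$. The main obstacle, and the only non-mechanical step, is verifying the conditional-uniformity claim cleanly: one has to be careful that the event in (b) depends only on the position of the argmax \emph{among items other than $\overline{\slot}$} inside positions $1,\dots,l-1$, which is precisely what exchangeability of the random permutation delivers once $\overline{\slot}$'s position is fixed. Everything else reduces to a direct calculation, and boundary effects of replacing $c\numitems$ and $d\numitems$ by integers are absorbed into the $o(1)$ term.
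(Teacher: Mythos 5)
Your proof is correct. The paper does not prove this lemma itself but imports it verbatim as Lemma~1 of Albers et al.\ (2021), and your argument --- reduce to the event that $\overline{\slot}$ lands in the secretary window while the best of the earlier arrivals lands in the sampling window, use exchangeability to get the factor $c\numitems/(l-1)$, and bound the resulting harmonic sum by $\ln(d/c)$ --- is exactly the classical secretary-style proof given there, so it matches the intended argument.
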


\subsubsection{Knapsack phase}
\label{sec:knapsack}

Next, to derive a lower bound for the expected utility collected in~$\online_{K}$, we focus on the case where the residual capacities are equal to the original capacities at the beginning of the knapsack phase (i.e., when~$\online_{S}$ does not incorporate any item). We define the Bernoulli variable~$\xi$ to indicate the occurrence of this event, so all the results related to the performance of~$\online_{K}$ are conditional on~$\xi = 1$.

\citet{albers2021improved} characterizes the probability of event~$\xi$, which we reproduce in Lemma~\ref{lemma:cdbound}:
\begin{lemma}[Lemma 7 in~\cite{albers2021improved}]\label{lemma:cdbound} 
The residual capacities are equal to the original capacities at the beginning of the knapsack phase with probability $Pr[\xi = 1] = \frac{c}{d}$.
\end{lemma}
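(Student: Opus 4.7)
The plan is to reduce the event $\{\xi = 1\}$ to a simple statement about the arrival position of a single distinguished item, and then invoke the symmetry of the uniform random permutation.

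First, I would characterize $\xi = 1$ combinatorially. By construction of Algorithm~\ref{alg:online}, no item is incorporated in the secretary phase if and only if every item~$\slot_l$ arriving at a position $l \in \{c \cdot \numitems + 1, \ldots, d \cdot \numitems\}$ satisfies $\utilities_{\slot_l}(\maxcapacity_{\slot_l}) \leq r^*$, where $r^* = \max_{\slot_k \in \slots_{c \cdot \numitems}} \utilities_{\slot_k}(\maxcapacity_{\slot_k})$. Appealing to the tie-breaking assumption made in Section~\ref{sec:definition}, the values $\utilities_{\slot}(\maxcapacity_{\slot})$ are pairwise distinct. Therefore, the event $\{\xi = 1\}$ is equivalent to the event that the item with the largest total utility among the first $d \cdot \numitems$ arrivals actually appears at some position in $\{1, \ldots, c \cdot \numitems\}$, i.e.\ within the sampling phase rather than the secretary phase.

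Second, I would exploit the random order model to compute this probability. Condition on an arbitrary subset $T \subseteq \slots$ of size $d \cdot \numitems$ being the set of items that occupy the first $d \cdot \numitems$ positions. Since the overall arrival order is a uniformly random permutation of $\slots$, the relative order of the items within $T$ is uniform over all $(d \cdot \numitems)!$ permutations of $T$. Consequently, the position of the unique maximizer of $\utilities_{\slot}(\maxcapacity_{\slot})$ over $\slot \in T$ is uniformly distributed on $\{1, \ldots, d \cdot \numitems\}$, and the probability that it lies in $\{1, \ldots, c \cdot \numitems\}$ equals $\tfrac{c \cdot \numitems}{d \cdot \numitems} = \tfrac{c}{d}$.

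Third, since this conditional probability does not depend on the particular choice of $T$, marginalizing over the distribution of $T$ yields $\Pr[\xi = 1] = c/d$, completing the argument. The proof is essentially a one-line symmetry observation once the combinatorial characterization of $\{\xi = 1\}$ is in place; the only delicate point is verifying that the strict inequality $\utilities_{\slot_l}(\maxcapacity_{\slot_l}) > r^*$ used in the algorithm together with the distinctness of total utilities lets us collapse the event into a clean ``argmax lies in the sampling prefix'' statement, rather than having to worry about ties.
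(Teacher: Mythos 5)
Your proof is correct and follows essentially the same route as the paper: both arguments reduce the event $\{\xi=1\}$ to the statement that the unique maximizer of $\utilities_{\slot}(\maxcapacity_{\slot})$ among the first $d\cdot\numitems$ arrivals lands in the sampling prefix, invoke the distinct-utilities assumption to rule out ties, and use the uniformity of the relative order of the first $d\cdot\numitems$ items to obtain the probability $\frac{c}{d}$. No gaps.
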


We use~$\knapsackcapacity_{l}$ and~$\cardinalitycapacity_{l}$ to denote the volume of knapsack and cardinality capacity that have been already consumed upon the arrival of item~$\slot_l$. In a slight abuse of notation, we use~$\knapsackcapacity_{l}$ and~$\cardinalitycapacity_{l}$ to refer to deterministic and realized values (e.g., in Algorithm~\ref{alg:online}) and as random variables; the correct interpretation will be clear from the context.

In addition to~$\xi$, we define two Bernoulli random variables for the knapsack phase to analyze the performance of~$\online_{K}$. First, we define $\phi_l=1$ to indicate that the condition $\knapsackcapacity_{l} < (1 - \beta) \knapsackcapacity$ and~$\cardinalitycapacity_{l} < \cardinalitycapacity$ is satisfied upon the arrival of the~$l$-th item, where~$\beta$ is the parameter of~$\online$ used in line 10 of Algorithm~\ref{alg:online}. As we explain later, $\phi_l$ indicates whether the residual knapsack and cardinality capacities upon the~$l$-th arrival are sufficiently large and allow the incorporation of item~$\slot_l$. Second, we use $\overline{\Phi}_{l,\slot}=1$ to indicate that item~$\slot\in \slots$ arrives at the~$l$-th position.

In Lemma~\ref{lemma:lb_l}, we derive a lower bound for~$Pr(\phi_l=1| \xi=1)$ for~$l \geq dn+1$. It is worth noting that when $\knapsackcapacity_{l} < (1 - \beta) \knapsackcapacity$ and~$\cardinalitycapacity_{l} < \cardinalitycapacity$, the residual knapsack capacity at the beginning of step~$l$ is at least $\beta \knapsackcapacity$. Therefore, in this case, $\online_{K}$ can always use $\beta x^{(l)}_{\slot_l}$ capacity from $\slot_l$ in line 10 (recall that $\slot_l$ is the item arriving at the~$l$-th position),  as $x^{(l)}_{\slot_l}\leq \maxcapacity_{\slot_l} \leq \knapsackcapacity$ implies $\beta x^{(l)}_{\slot_l}\leq\beta\knapsackcapacity$. However, $\online_{K}$ may also utilize $\beta x^{(l)}_{\slot_l}$ from item $\slot_l$ even when~$\phi_l=0$, specifically when $(1 - \beta) \knapsackcapacity < \knapsackcapacity_{l} \leq \knapsackcapacity - \beta x^{(l)}_{\slot_l}$and~$\cardinalitycapacity_{l} < \cardinalitycapacity$. Therefore, Lemma~\ref{lemma:lb_l} provides a lower bound for the probability with which $\online_{K}$ could utilize $\beta x^{(l)}_{\slot_l}$ knapsack capacity with~$\slot_l$ under the assumption that the knapsack is empty at the beginning of the knapsack phase.

\begin{lemma}\label{lemma:lb_l}
    For every~$l \geq dn+1$, we have
    \begin{equation}\label{eq:lb_l}
         Pr\left[ \phi_l  = 1 | \xi =1  \right]
     \geq  1 - \left( \frac{1}{1 - \beta} \right)\ln{\frac{l-1}{dn}}.   
        \end{equation}
    
\end{lemma}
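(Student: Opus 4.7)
The plan is to bound $Pr[\phi_l=0\mid\xi=1]$ by a union bound over the two failure modes $\{W_l\ge(1-\beta)\knapsackcapacity\}$ and $\{\cardinalitycapacity_l\ge \cardinalitycapacity\}$, and then apply Markov's inequality to each. Under the conditioning $\xi=1$ no item is incorporated during the secretary phase, so both $W_l$ and $\cardinalitycapacity_l$ accumulate exclusively over the knapsack-phase steps $k\in\{dn+1,\dots,l-1\}$.

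For the knapsack event, line~10 of Algorithm~\ref{alg:online} guarantees $x_{\slot_k}\le\beta x^{(k)}_{\slot_k}$ whenever $\slot_k$ is incorporated, so $W_l\le\beta\sum_{k=dn+1}^{l-1}x^{(k)}_{\slot_k}$. The technical heart of the proof is the claim
\[
\Ex\!\left[x^{(k)}_{\slot_k}\,\middle|\,\xi=1,\slots_k\right]\le\frac{\knapsackcapacity}{k}\qquad\text{for every }k\ge dn+1.
\]
I would establish this by a deferred-decisions argument: condition on $\slots_k$ and on $\slot_k=j$ for some $j\in\slots_k$. The remaining $k-1$ items of $\slots_k$ are then uniformly permuted across positions $[1,k-1]$, so $\slots_{dn}$ is a uniform $dn$-element subset of $\slots_k\setminus\{j\}$ endowed with a uniform random ordering. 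For \emph{any} such realization of $\slots_{dn}$ its (unique) argmax lands in its first $cn$ positions with probability exactly $cn/dn=c/d$, by the symmetry of a uniform permutation. Hence $Pr[\xi=1\mid\slot_k=j,\slots_k]=c/d$ \emph{independently of $j$}, and since $Pr[\xi=1\mid\slots_k]=c/d$ too, Bayes' rule makes $\slot_k$ uniformly distributed over $\slots_k$ under the conditioning $(\xi=1,\slots_k)$. The feasibility bound $\sum_{j\in\slots_k}x^{(k)}_j\le\knapsackcapacity$ then yields the claim. Summing over $k$ and using $\sum_{k=dn+1}^{l-1}1/k\le\ln\frac{l-1}{dn}$ gives $\Ex[W_l\mid\xi=1]\le\beta\knapsackcapacity\ln\frac{l-1}{dn}$, and Markov's inequality provides $Pr[W_l\ge(1-\beta)\knapsackcapacity\mid\xi=1]\le\frac{\beta}{1-\beta}\ln\frac{l-1}{dn}$.

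A strictly parallel argument, with $y^{(k)}_{\slot_k}\in\{0,1\}$ in place of $x^{(k)}_{\slot_k}$ and the cardinality-feasibility bound $\sum_{j\in\slots_k}y^{(k)}_j\le\cardinalitycapacity$, produces $\Ex[\cardinalitycapacity_l\mid\xi=1]\le\cardinalitycapacity\ln\frac{l-1}{dn}$ and hence $Pr[\cardinalitycapacity_l\ge\cardinalitycapacity\mid\xi=1]\le\ln\frac{l-1}{dn}$. A union bound then gives
\[
Pr[\phi_l=0\mid\xi=1]\le\left(\frac{\beta}{1-\beta}+1\right)\ln\frac{l-1}{dn}=\frac{1}{1-\beta}\ln\frac{l-1}{dn},
\]
which, after taking complements, is precisely \eqref{eq:lb_l}. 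I expect the main obstacle to be the symmetry computation establishing $Pr[\xi=1\mid\slot_k=j,\slots_k]=c/d$ independently of~$j$; this is what ensures that conditioning on $\xi=1$ preserves the uniform distribution of $\slot_k$ within $\slots_k$. Once that invariance is in hand, the remainder of the proof is a routine Markov-plus-harmonic-sum estimate.
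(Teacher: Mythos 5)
Your proof is correct and follows essentially the same route as the paper's: a per-step expected-consumption bound of $\beta\knapsackcapacity/k$ and $\cardinalitycapacity/k$ obtained from the uniformity of $\slot_k$ within $\slots_k$ conditional on $\xi=1$, followed by a harmonic-sum estimate, Markov's inequality, and a union bound over the two failure events. The symmetry computation you flag as the main obstacle is exactly the paper's Lemma~\ref{lemma:simple}, which the paper justifies by the same observation you make, namely that $\xi$ depends only on the relative order of the first $d\numitems$ arrivals and not on which items occupy position $l$.
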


Lemma~\ref{lemma:lb_knapsack} uses Eq.\eqref{eq:lb_l} to derive a lower bound expression for the probability with which~$\online_{K}$ incorporates an item~$\slot$ during the knapsack phase.

\begin{lemma}\label{lemma:lb_knapsack} For every~$\slot$ in~$\slots$,
  \begin{equation*}
  \sum_{l = d \numitems+1}^{\numitems} Pr[\overline{\Phi}_{l,\slot} = 1 | \xi = 1]
        Pr[\phi_{l} = 1 | \xi = 1]\geq
(1 - d)\left(\frac{2 - \beta}{1 - \beta}  \right) + \left( \frac{1}{1 - \beta} \right) \ln{d}.
        \end{equation*}
\end{lemma}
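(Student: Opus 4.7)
The plan is to simplify the conditional probability $Pr[\overline{\Phi}_{l,\slot} = 1 \mid \xi = 1]$ via a symmetry argument, apply Lemma~\ref{lemma:lb_l} term-by-term, and then compare the resulting logarithmic sum to an integral. First I will show that $Pr[\overline{\Phi}_{l,\slot} = 1 \mid \xi = 1] = 1/\numitems$ for every $l > d\numitems$ and every item $\slot$. The event $\xi = 1$ is equivalent to saying that the item of maximum total utility among those arriving at positions $1,\ldots,d\numitems$ in fact lies in positions $1,\ldots,c\numitems$. Conditioning on $\overline{\Phi}_{l,\slot} = 1$ for $l > d\numitems$ pins $\slot$ outside the first $d\numitems$ positions, so the content of positions $1,\ldots,d\numitems$ is a uniformly random ordered sample from $\slots \setminus \{\slot\}$. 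Since the position of the maximum in a uniformly random sequence of length $d\numitems$ is itself uniform, $Pr[\xi = 1 \mid \overline{\Phi}_{l,\slot} = 1] = c\numitems/(d\numitems) = c/d = Pr[\xi = 1]$ (cf.\ Lemma~\ref{lemma:cdbound}). Hence $\xi$ and $\overline{\Phi}_{l,\slot}$ are independent, which yields $Pr[\overline{\Phi}_{l,\slot} = 1 \mid \xi = 1] = 1/\numitems$.

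With this identity, the left-hand side of the lemma reduces to $\tfrac{1}{\numitems}\sum_{l=d\numitems+1}^{\numitems} Pr[\phi_l = 1 \mid \xi = 1]$, which by Lemma~\ref{lemma:lb_l} is at least $\tfrac{1}{\numitems}\sum_{l=d\numitems+1}^{\numitems}\bigl(1 - \tfrac{1}{1-\beta}\ln\tfrac{l-1}{d\numitems}\bigr)$. The constant part of the summand contributes exactly $1-d$. For the logarithmic part, monotonicity of $\ln$ gives $\ln\tfrac{l-1}{d\numitems} \leq \int_{l-1}^{l}\ln\tfrac{x}{d\numitems}\,dx$, and hence $\sum_{l=d\numitems+1}^{\numitems}\ln\tfrac{l-1}{d\numitems} \leq \int_{d\numitems}^{\numitems}\ln\tfrac{x}{d\numitems}\,dx = \numitems(d - 1 - \ln d)$ after a one-line antiderivative computation. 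Combining the two pieces and simplifying produces $(1-d)\tfrac{2-\beta}{1-\beta} + \tfrac{\ln d}{1-\beta}$, matching the statement.

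The main obstacle is the independence claim in the first step. Conditioning on $\xi = 1$ does bias the overall arrangement of items (items with larger total utility become overrepresented in positions $1,\ldots,c\numitems$ and in positions $d\numitems+1,\ldots,\numitems$), so it is not immediately obvious why the marginal probability that a given item sits at a given position in the knapsack phase should remain $1/\numitems$. The resolution is to reverse the direction of conditioning: once $\slot$ has been pinned at position $l > d\numitems$, the remaining $\numitems - 1$ items are uniformly arranged, and the event $\xi$ is a function of this arrangement restricted to positions $1,\ldots,d\numitems$ only, at which point the standard secretary-style symmetry delivers independence. The remainder of the proof is a routine sum-vs-integral comparison.
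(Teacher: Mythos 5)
Your proof is correct and follows essentially the same route as the paper: the identity $Pr[\overline{\Phi}_{l,\slot}=1\mid\xi=1]=1/\numitems$ is exactly the paper's Lemma~\ref{lemma:simple} (which you re-derive, somewhat more carefully, via the reverse-conditioning symmetry argument), and the remainder — applying Lemma~\ref{lemma:lb_l} termwise and bounding $\sum_{l=d\numitems}^{\numitems-1}\ln\frac{l}{d\numitems}$ by $\int_{d\numitems}^{\numitems}\ln\frac{x}{d\numitems}\,dx = \numitems(d-1-\ln d)$ — matches the paper's computation step for step.
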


\subsection{Competitive Ratio}\label{sec:competitiveratio}

We finalize the analysis by classifying instances of the~\Problem{} into two categories based on the number of packed items in the optimal offline solutions. Thus, the competitive ratio of~$\online$ is given by the worse (smaller) expected competitive ratio attained by~$\online$ for the two categories. 

\subsubsection{Optimal offline solutions with just one item}
Let~$\Psi_1$ indicate instances of this type. As we assume~$\maxcapacity_{\slot} \leq \knapsackcapacity$ for every item~$\slot$, the optimal solution to a~$\Psi_1$ instance must fully pack the item~$\overline{\slot}$ with the largest total utility. Also, we must have $x^*_{\overline{\slot}} = \maxcapacity_{\overline{\slot}} = \knapsackcapacity$ and~$\utilities^* = \utilities_{\overline{\slot}}(\knapsackcapacity)$; otherwise, the remaining knapsack capacity could be used to increase $\utilities^*$ (note that $\cardinalitycapacity \geq 2$), which contradicts the assumption that an optimal solution uses just one item. Proposition~\ref{prop:ratio_case1} derives a lower bound for the expected utility gain $\Ex^{\Psi_1}_{\online}[\utilities(\vecx)]$ for instances of this type.

\begin{proposition}
    \label{prop:ratio_case1}
The expected utility gain $\Ex^{\Psi_1}_{\online}[\utilities(\vecx)]$ for instances of type~$\Psi_1$ is
\begin{equation}\label{eq:ratio_case1}
    \Ex^{\Psi_1}_{\online}[\utilities(\vecx)] \geq
        \utilities^*\left[
     c \ln{\frac{d}{c}} + \frac{c}{d}\beta  \left((1 - d)\left(\frac{2 - \beta}{1 - \beta}  \right) + \left( \frac{1}{1 - \beta} \right) \ln{d}  \right)- o(1)  
    \right].
\end{equation}
\end{proposition}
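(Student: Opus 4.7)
The plan is to split $\Ex^{\Psi_1}_{\online}[\utilities(\vecx)]$ additively into the utility collected during the secretary phase and during the knapsack phase. These two phases act on disjoint positions and insert disjoint items into $\slots_{\online}$, so no utility is double-counted. The secretary-phase contribution follows directly from Lemma~\ref{lemma:sec}: in the $\Psi_1$ regime we have $\maxcapacity_{\overline{\slot}} = \knapsackcapacity$, so any execution of line~6 that picks $\overline{\slot}$ assigns $x_{\overline{\slot}} = \knapsackcapacity$ and collects utility $\utilities^*$. Hence the expected secretary-phase utility is at least $\overline{p}\,\utilities^* \geq (c\ln(d/c) - o(1))\,\utilities^*$, which matches the first term of~\eqref{eq:ratio_case1}.

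For the knapsack phase, I would restrict attention to the event $\xi = 1$ that the knapsack and cardinality capacities are still at their initial values at position $d\numitems + 1$; by Lemma~\ref{lemma:cdbound} this occurs with probability $c/d$, and on its complement I lower-bound the knapsack-phase utility by zero. Conditional on $\xi = 1$, I keep only the contribution from the positions $l \in \{d\numitems+1,\ldots,\numitems\}$ at which $\overline{\slot}$ arrives and the residual capacities suffice (i.e., $\overline{\Phi}_{l,\overline{\slot}} = 1$ and $\phi_l = 1$), discarding whatever utility $\online_K$ may also collect from other items. Under these events, line~10 sets $x_{\overline{\slot}} = \beta x^{(l)}_{\overline{\slot}}$, because $\beta x^{(l)}_{\overline{\slot}} \leq \beta\maxcapacity_{\overline{\slot}} = \beta\knapsackcapacity \leq \knapsackcapacity_{\online}$. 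I would then argue that in the $\Psi_1$ regime, whenever $\overline{\slot} \in \slots_l$ the $\alpha$-approximation satisfies $x^{(l)}_{\overline{\slot}} = \knapsackcapacity$: the optimum of the sub-instance $\Problem(\slots_l)$ equals $\utilities^* = \utilities_{\overline{\slot}}(\knapsackcapacity)$, and by the uniqueness of total utilities assumed in Section~\ref{sec:definition} together with the $\Psi_1$ hypothesis this value is attained only by the full packing of $\overline{\slot}$. Concavity of $\utilities_{\overline{\slot}}$ with $\utilities_{\overline{\slot}}(0) = 0$ then yields $\utilities_{\overline{\slot}}(\beta\knapsackcapacity) \geq \beta\,\utilities^*$ per realization; summing the indicator over $l$ and invoking Lemma~\ref{lemma:lb_knapsack} with $\slot = \overline{\slot}$ delivers a knapsack-phase contribution of at least $\tfrac{c}{d}\,\beta\bigl((1-d)\tfrac{2-\beta}{1-\beta} + \tfrac{1}{1-\beta}\ln d\bigr)\,\utilities^*$. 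Adding the two phase bounds gives inequality~\eqref{eq:ratio_case1}.

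The main technical obstacle is justifying that an $\alpha$-approximation with $\alpha < 1$ necessarily allocates $x^{(l)}_{\overline{\slot}} = \knapsackcapacity$, since in the worst case a valid $\alpha$-approximation could reach value $\alpha\,\utilities^*$ by spreading mass across items other than $\overline{\slot}$ and leaving $\overline{\slot}$ unallocated. This gap would be bridged by assuming that $\Approx_{\alpha}$ is at least as strong as the trivial solution of packing the single best item---a property enjoyed by the greedy algorithm of Section~\ref{sec:submodularity} and the FPTAS of Section~\ref{sec:FPTAS1}---combined with the flexibility, noted in Section~\ref{sec:online}, that any solution attaining the $\alpha$-approximation guarantee may be chosen. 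On $\Psi_1$ sub-instances, the single-item packing attains $\utilities^*$ exactly, forcing $\Approx_{\alpha}$ to coincide with the unique optimum and thus to set $x^{(l)}_{\overline{\slot}} = \knapsackcapacity$. A secondary subtlety is that Lemma~\ref{lemma:lb_knapsack} bounds a sum of marginal-probability products rather than the sum of joints $Pr[\overline{\Phi}_{l,\overline{\slot}}=1,\phi_l=1\mid\xi=1]$ that the analysis naturally produces; this is handled inside Lemma~\ref{lemma:lb_knapsack} via conditional-independence arguments on the uniform arrival order.
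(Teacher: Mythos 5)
Your proposal is correct and follows essentially the same route as the paper's proof: secretary-phase contribution via Lemma~\ref{lemma:sec}, conditioning on $\xi=1$ via Lemma~\ref{lemma:cdbound}, isolating the event that $\overline{\slot}$ arrives in the knapsack phase with sufficient residual capacity, using concavity to get $\utilities_{\overline{\slot}}(\beta\knapsackcapacity)\geq\beta\utilities^*$, and closing with Lemma~\ref{lemma:lb_knapsack}. The ``technical obstacle'' you flag is resolved in the paper exactly as you suggest---by augmenting $\Approx_{\alpha}$ with a check of all single-item packings, which preserves the $\alpha$-guarantee and forces the full packing of $\overline{\slot}$ on $\Psi_1$ sub-instances---so your treatment is, if anything, slightly more explicit than the paper's on this point.
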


\subsubsection{Optimal offline solutions with more than one item}
Let~$\Psi_2$ indicate this category of instances. We derive a lower bound for the utility gain for such instances in Proposition~\ref{prop:profit_knapsack}.

\begin{proposition}
\label{prop:profit_knapsack}
The expected utility gain $\Ex^{\Psi_2}_{\online}[\utilities(\vecx)]$ for instances of type~$\Psi_2$ is
\begin{equation}\label{eq:ratio_case3}
        \Ex^{\Psi_2}_{\online}[\utilities(\vecx)] 
    \geq 
        \alpha \utilities^*\left[\frac{c}{\cardinalitycapacity} \ln{\frac{d}{c}} + \frac{c}{d}\beta  \left((1 - d)\left(\frac{2 - \beta}{1 - \beta}  \right) + \left( \frac{1}{1 - \beta} \right) \ln{d}  \right)- o(1)
    \right].
\end{equation}
\end{proposition}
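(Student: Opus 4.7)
The plan is to decompose the expected utility collected by $\online$ into the contributions of the secretary and knapsack phases, bounding each separately and paying a factor of $\alpha$ only where the approximation algorithm $\Approx_\alpha$ is invoked. The overall structure parallels the proof of Proposition~\ref{prop:ratio_case1}, but with two modifications: (i) in the secretary-phase term the benchmark $\utilities^*$ is replaced by $\utilities^*/\cardinalitycapacity$, because in a $\Psi_2$ instance $\utilities^*$ is spread across up to $\cardinalitycapacity$ items; and (ii) in the knapsack-phase term a factor $\alpha$ appears, reflecting the use of $\Approx_\alpha$ on each prefix $\slots_l$.

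For the secretary phase, since the optimal offline solution of a $\Psi_2$ instance uses at most $\cardinalitycapacity$ items with total value $\utilities^*$, a pigeonhole argument gives $\utilities_{\overline{\slot}}(\maxcapacity_{\overline{\slot}}) \geq \utilities^*/\cardinalitycapacity$. Combining this with Lemma~\ref{lemma:sec} yields an expected contribution of at least $(\utilities^*/\cardinalitycapacity)(c\ln(d/c)-o(1))$, which is at least $(\alpha \utilities^*/\cardinalitycapacity)(c\ln(d/c)-o(1))$ since $\alpha \in (0,1]$. This produces the first summand of \eqref{eq:ratio_case3}.

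For the knapsack phase, I would first condition on $\xi=1$, which by Lemma~\ref{lemma:cdbound} has probability $c/d$. Under this event, the residual knapsack and cardinality capacities at the start of the phase equal the original ones. For each arrival position $l \in \{dn+1,\ldots,n\}$ and each item $\slot \in \slots$, when $\slot_l = \slot$ and $\phi_l = 1$ the algorithm collects utility $\utilities_\slot(\beta x^{(l)}_\slot) \geq \beta \utilities_\slot(x^{(l)}_\slot)$, where the inequality uses concavity of $\utilities_\slot$ together with $\utilities_\slot(0)=0$. Exploiting the uniformity of the random order, conditional on the unordered set $\slots_l$ each of its elements is equally likely to be $\slot_l$, so
\[
\Ex\!\left[\utilities_{\slot_l}(x^{(l)}_{\slot_l}) \mid \slots_l\right] \;=\; \frac{1}{l}\sum_{\slot \in \slots_l} \utilities_\slot(x^{(l)}_\slot) \;\geq\; \frac{\alpha \utilities^*(\slots_l)}{l},
\]
using the $\alpha$-approximation guarantee of $\Approx_\alpha(\slots_l)$. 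A standard averaging argument on the restriction of the optimal offline solution to a uniformly random $l$-subset of $\slots$ gives $\Ex[\utilities^*(\slots_l)] \geq (l/n)\utilities^*$, so the expected per-step per-item contribution is at least $\beta \alpha \utilities^*/n$. Summing over $l$ and $\slot$ and invoking Lemma~\ref{lemma:lb_knapsack} to aggregate the joint probabilities $Pr[\overline{\Phi}_{l,\slot}=1\mid\xi=1]\,Pr[\phi_l=1\mid\xi=1]$ delivers the knapsack-phase bound $\alpha\utilities^*(c/d)\beta[(1-d)(2-\beta)/(1-\beta) + \ln d/(1-\beta)]$, matching the second summand of \eqref{eq:ratio_case3}.

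The main obstacle is the joint expectation $\Ex[\utilities_\slot(x^{(l)}_\slot)\,\mathbb{1}[\slot_l=\slot,\phi_l=1]\mid \xi=1]$: the event $\phi_l$ depends on $\online_K$'s past decisions and on the ordering of items in $\slots_{l-1}$, so it is not \emph{a priori} independent of the identity of $\slot_l$ or of $x^{(l)}_\slot$. I would address this in the spirit of \cite{kesselheim2018primal} and \cite{albers2021improved}: condition on the unordered set $\slots_l$, use the symmetry of the last position within $\slots_l$ to decouple $\slot_l$ from $\phi_l$, and absorb the conditioning on $\xi=1$ (whose definition depends only on relative orderings in $\slots_{dn}$) into these symmetry arguments before invoking the product decomposition that underlies Lemma~\ref{lemma:lb_knapsack}. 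Combining the secretary and knapsack bounds then yields~\eqref{eq:ratio_case3}.
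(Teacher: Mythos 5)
Your proposal is correct and follows essentially the same route as the paper's proof: the same secretary/knapsack decomposition, the same pigeonhole bound $\utilities_{\overline{\slot}}(\maxcapacity_{\overline{\slot}})\geq \utilities^*/\cardinalitycapacity$, the same restriction-of-the-optimum averaging argument giving $\Ex[\utilities(\vecx^{(l)})]\geq \frac{l\alpha}{n}\utilities^*$, concavity to extract the factor $\beta$, and Lemmas~\ref{lemma:sec}, \ref{lemma:cdbound}, \ref{lemma:simple}, and~\ref{lemma:lb_knapsack} combined with $\alpha\leq 1$ to absorb the secretary term. Your explicit flagging of the need to decouple $\phi_l$ from the identity of $\slot_l$ via conditioning on the unordered set $\slots_l$ is, if anything, more careful than the paper's treatment, which asserts the product decomposition directly.
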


\subsubsection{Competitive factor}\label{sec:competitive_factor}

We note that the lower bound in Eq.~\eqref{eq:ratio_case3} is strictly smaller than the lower bounds in Eq.~\eqref{eq:ratio_case1} for any choice of~$\beta$, $c$, and $d$, and the competitive factor~$\alpha$ is fixed for the selected algorithm~$\Approx_{\alpha}$. Also, $o(1)$ is asymptotic regarding the number of items $n$ and $n \rightarrow \infty$ in the worst-case scenario. Therefore, for a given value of $\cardinalitycapacity$, we can obtain the values of~$c$, $d$, and~$\beta$ that maximize the asymptotic competitive ratio of~$\online$ by solving the following problem:
\begin{equation}
\label{eq:perf_obj}
\max\limits_{0 \leq c \leq d \leq 1, 0 < \beta < 1} \quad 
\frac{c}{\cardinalitycapacity} \ln{\frac{d}{c}} + \frac{c}{d}\beta  \left((1 - d)\left(\frac{2 - \beta}{1 - \beta}  \right) + \left( \frac{1}{1 - \beta} \right) \ln{d}  \right).
\end{equation}

We have two remarks about Eq.\eqref{eq:perf_obj}. First, the value of Eq.\eqref{eq:perf_obj} decreases with the increase of $\cardinalitycapacity$. Therefore, we can derive an asymptotic competitive ratio that applies to all values of $\cardinalitycapacity$ by studying the case where $\cardinalitycapacity \rightarrow \infty$. Second, Eq.\eqref{eq:perf_obj} is a complex nonlinear function of parameters $c$, $d$, and~$\beta$, the global maxima of which is hard to identify. However, $\Approx_{\alpha}$ parameterized by any feasible values of $c$, $d$, and~$\beta$ provides a lower bound for its true performance. Therefore, we solve for local maxima of Eq.\eqref{eq:perf_obj} with $\cardinalitycapacity \rightarrow \infty$, based on which we derive the asymptotic competitive ratio applicable to all values of $\cardinalitycapacity$ presented in Theorem~\ref{thm:ratio}.

\begin{theorem}\label{thm:ratio}
    Algorithm \ref{alg:online} is $\frac{10.427}{\alpha}$-competitive for the online~\Problem{} in the random order model if parameterized by $c  = d \approx 0.695$ and~$\beta \approx 0.560$. 
\end{theorem}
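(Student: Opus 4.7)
The plan is to choose $c$, $d$, and $\beta$ to maximize the smaller of the two lower bounds in Propositions~\ref{prop:ratio_case1} and~\ref{prop:profit_knapsack}, and then apply Definition~\ref{def:ratio}. As the discussion preceding the theorem already observes, Eq.\eqref{eq:ratio_case3} is pointwise dominated by Eq.\eqref{eq:ratio_case1} (it carries an extra $\alpha \leq 1$ prefactor and divides its first summand by $\cardinalitycapacity \geq 1$), so the binding constraint is the $\Psi_2$ bound and the problem collapses to maximizing Eq.\eqref{eq:perf_obj}. To obtain a guarantee uniform in $\cardinalitycapacity$, I would send $\cardinalitycapacity \to \infty$, which annihilates the summand $(c/\cardinalitycapacity)\ln(d/c)$ and leaves the residual
\[
h(c, d, \beta) \;=\; \frac{c}{d}\,\beta\left[\frac{(1-d)(2-\beta) + \ln d}{1-\beta}\right].
\]

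Next, I would exploit the linearity in $c$. For every $(d,\beta)$ at which the bracketed factor is positive, $h$ is strictly increasing in $c$ on $[0,d]$, so the constraint $c \leq d$ binds at optimality and I set $c = d$. This collapses the problem to the two-variable optimization of $g(d,\beta) := \beta[(1-d)(2-\beta) + \ln d]/(1-\beta)$. The first-order condition in $d$ yields the clean relation $d = 1/(2-\beta)$; substituting this into the first-order condition in $\beta$ reduces the system to a single univariate implicit equation
\[
(1-\beta)(2 - 2\beta + \beta^2) \;=\; (2-\beta)\ln(2-\beta),
\]
whose root in $(0,1)$ is $\beta^{\ast} \approx 0.560$, giving $c^{\ast} = d^{\ast} = 1/(2-\beta^{\ast}) \approx 0.695$.

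Evaluating $g$ at $(d^{\ast},\beta^{\ast})$ and using the identities $(1-d^{\ast})(2-\beta^{\ast}) = 1-\beta^{\ast}$ and $\ln d^{\ast} = -\ln(2-\beta^{\ast})$ yields $g(d^{\ast},\beta^{\ast}) = \beta^{\ast} - \beta^{\ast}\ln(2-\beta^{\ast})/(1-\beta^{\ast}) \approx 0.0959$, whence $1/g(d^{\ast},\beta^{\ast}) \approx 10.427$. Combining with Proposition~\ref{prop:profit_knapsack} gives $\Ex_{\online}[\utilities(\vecx)] \geq (\alpha/10.427 - o(1))\utilities^{\ast}$ uniformly in $\cardinalitycapacity$, which by Definition~\ref{def:ratio} is exactly the claimed $(10.427/\alpha)$-competitiveness. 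The one delicate step is the reduction $c = d$: I must verify that the bracketed factor is strictly positive at $(d^{\ast},\beta^{\ast})$ (a direct check gives $1 - \ln(2-\beta^{\ast})/(1-\beta^{\ast}) \approx 0.17 > 0$) and that the stationary point is a local maximum of $g$ rather than a saddle, which one confirms by a Hessian sign check or by evaluating $g$ at nearby feasible points. Consistent with the caveat preceding the theorem statement, this argument exhibits a feasible parameterization attaining the bound, not a claim of global optimality.
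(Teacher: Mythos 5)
Your proposal is correct and follows essentially the same route as the paper: the paper likewise observes that the $\Psi_2$ bound of Proposition~\ref{prop:profit_knapsack} is the binding one, sends $\cardinalitycapacity \to \infty$ to drop the secretary term, and reports a local maximizer of Eq.~\eqref{eq:perf_obj} at $c=d\approx 0.695$, $\beta\approx 0.560$ yielding the value $\approx 1/10.427$. The only difference is that you make explicit the first-order analysis (the reduction $c=d$, the relation $d=1/(2-\beta)$, and the univariate equation for $\beta$) that the paper leaves as an unstated numerical step, and your closing remark that any feasible parameterization suffices matches the paper's own caveat.
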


We note that when we consider the case~$\cardinalitycapacity \rightarrow \infty$, the first term of Eq.\eqref{eq:perf_obj} vanishes, which essentially deems the secretary phase inconsequential in deriving the general competitive ratio for all values of~$\cardinalitycapacity$  in Theorem~\ref{thm:ratio}. This property is also reflected in the choice of parameters~$c = d$, which eliminates the secretary phase. In addition to the general competitive ratio in Theorem~\ref{thm:ratio}, we can derive competitive ratios specific for a given value of $\cardinalitycapacity$, which enables us to improve our evaluation of the performance of Algorithm \ref{alg:online}. When the value of $\cardinalitycapacity$ is small,  Corollary~\ref{cor:ratioC2} exemplifies the improvement in the performance bound. 

\begin{corollary}\label{cor:ratioC2} Algorithm \ref{alg:online} is $\frac{5.295}{\alpha}$-competitive for the online~\Problem{} in the random order model  parameterized by $c  = 0.3775$, $d = 0.915$, and~$\beta = 0.79$ if~$\cardinalitycapacity = 2$.
\end{corollary}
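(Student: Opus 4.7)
My plan is to derive the corollary as a direct specialization of the asymptotic analysis summarized in \S\ref{sec:competitive_factor}, restricting attention to $\cardinalitycapacity = 2$ and evaluating the bound at the stated parameter triple. As noted in the discussion preceding Theorem~\ref{thm:ratio}, for any admissible $c,d,\beta$ the lower bound in Eq.~\eqref{eq:ratio_case3} is strictly smaller than the one in Eq.~\eqref{eq:ratio_case1}, so the competitive ratio is governed by the $\Psi_2$ case. Hence it suffices to find $(c,d,\beta)$ making the bracketed factor in Eq.~\eqref{eq:ratio_case3} at least $1/5.295$ up to $o(1)$ terms; Definition~\ref{def:ratio} then yields the claimed $\frac{5.295}{\alpha}$-competitive ratio.

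First I would specialize Eq.~\eqref{eq:ratio_case3} by setting $\cardinalitycapacity = 2$, obtaining the objective
\[
f(c,d,\beta) \;=\; \frac{c}{2}\ln\frac{d}{c} \;+\; \frac{c}{d}\beta\left[(1-d)\,\frac{2-\beta}{1-\beta} \,+\, \frac{\ln d}{1-\beta}\right].
\]
Unlike the proof of Theorem~\ref{thm:ratio}, where the limit $\cardinalitycapacity\to\infty$ kills the first summand and makes the choice $c=d$ optimal (i.e., skips the secretary phase), with $\cardinalitycapacity=2$ the factor $1/\cardinalitycapacity = 1/2$ keeps the secretary contribution $(c/2)\ln(d/c)$ meaningful. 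Hence the maximizer should have $c$ strictly smaller than $d$, which explains the asymmetric choice in the statement and the fact that the secretary phase becomes a genuine contributor to the collected utility.

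Next I would evaluate $f$ at the proposed triple by direct substitution, computing the secretary term $\tfrac{0.3775}{2}\ln(0.915/0.3775)$ and the knapsack term $\tfrac{0.3775}{0.915}\cdot 0.79\cdot\bigl[0.085\cdot(1.21/0.21) + \ln(0.915)/0.21\bigr]$. Routine arithmetic gives values near $0.1671$ and $0.0218$, whose sum $\approx 0.1889$ exceeds $1/5.295 \approx 0.18886$. Combining with the $\alpha$ and $o(1)$ factors in Eq.~\eqref{eq:ratio_case3} yields
\[
\Ex^{\Psi_2}_{\online}[\utilities(\vecx)] \;\geq\; \alpha\,\utilities^*\bigl(1/5.295 - o(1)\bigr),
\]
and the stronger $\Psi_1$-bound handles the remaining instances.

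The only non-routine aspect is locating the triple $(0.3775, 0.915, 0.79)$ itself, since $f$ is smooth but nonconvex over the feasible region $\{0\leq c\leq d\leq 1,\,0<\beta<1\}$ and its KKT system does not admit a clean closed-form solution. However, as the text observes in \S\ref{sec:competitive_factor}, \emph{any} feasible triple yields a valid lower bound for the competitive ratio, so for the proof itself a numerical search (e.g.\ gradient ascent from several random starts) followed by the direct substitution above is sufficient; no global optimality certificate is required.
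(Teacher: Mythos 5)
Your proposal is correct and follows essentially the same route as the paper: the paper gives no separate proof for this corollary but derives it exactly as you describe, by noting that the $\Psi_2$ bound of Eq.~\eqref{eq:ratio_case3} dominates, specializing Eq.~\eqref{eq:perf_obj} to $\cardinalitycapacity = 2$, and numerically locating a feasible (not certified globally optimal) triple whose objective value exceeds $1/5.295 \approx 0.18886$. Your arithmetic checks out ($f(0.3775, 0.915, 0.79) \approx 0.18887$), and your observation that the finite $\cardinalitycapacity$ keeps the secretary term alive, forcing $c < d$, matches the paper's own remark contrasting this case with Theorem~\ref{thm:ratio}.
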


We also show the improvement in the competitive ratio in the case where the cardinality constraint is trivially satisfied in the knapsack phase, i.e., when~$\cardinalitycapacity > (1-d)\numitems$.

\begin{corollary}\label{cor:ratio2}
    Algorithm \ref{alg:online} is $\frac{6.401}{\alpha}$-competitive for the online~\Problem{}  
    in the random order model 
    if parameterized by $c  = d = \beta \approx 0.431$  in the special case where $\cardinalitycapacity \geq 0.569 \numitems$. 
\end{corollary}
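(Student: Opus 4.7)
The plan is to specialize the analysis of Section~\ref{sec:competitiveratio} to the regime $\cardinalitycapacity \geq (1-d)\numitems$, where the cardinality constraint is automatically satisfied throughout the knapsack phase because at most $(1-d)\numitems$ items arrive after position $d\numitems$. This trivial satisfaction renders the cardinality component of the event $\phi_l$ redundant, so a sharper version of Lemma~\ref{lemma:lb_l} becomes available in which $\Pr[\phi_l = 1 \mid \xi = 1]$ is governed purely by the knapsack-usage dynamics under the $\beta$-scaling.

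First I would verify the auto-satisfaction: along every trajectory in the knapsack phase, the consumed cardinality $\cardinalitycapacity_l$ increments by at most one per arrival, and there are at most $(1-d)\numitems$ arrivals, so $\cardinalitycapacity_l < \cardinalitycapacity$ always holds whenever $\cardinalitycapacity \geq (1-d)\numitems$. I would then revisit Lemma~\ref{lemma:lb_l}, dropping the cardinality component from the event, and rework the integral estimate in Lemma~\ref{lemma:lb_knapsack} to obtain a tightened per-item incorporation probability.

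Next I would adapt Propositions~\ref{prop:ratio_case1} and~\ref{prop:profit_knapsack} under the specialization $c = d$, which collapses the secretary phase. In that case, Lemma~\ref{lemma:cdbound} gives $\Pr[\xi = 1] = 1$, so the coefficient $c/d$ multiplying the knapsack-phase contribution in both~\eqref{eq:ratio_case1} and~\eqref{eq:ratio_case3} becomes~$1$, and the secretary-phase terms $c\ln(d/c)$ and $\tfrac{c}{\cardinalitycapacity}\ln(d/c)$ vanish. The resulting lower bounds on $\Ex^{\Psi_1}_{\online}[\utilities(\vecx)]$ and $\Ex^{\Psi_2}_{\online}[\utilities(\vecx)]$ then depend only on $d$ and $\beta$, yielding a cleaner optimization problem than~\eqref{eq:perf_obj}.

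Finally, I would optimize the simplified expression over $d$ and $\beta$ without taking the $\cardinalitycapacity \to \infty$ step used in the proof of Theorem~\ref{thm:ratio}. Solving the first-order conditions numerically should identify $d = \beta \approx 0.431$ as a local maximizer, and I would verify that this choice is consistent with the stated assumption $\cardinalitycapacity \geq 0.569\numitems = (1-d)\numitems$. The main obstacle will be producing the sharper probability bound in the revised Lemma~\ref{lemma:lb_l} with precise enough constants: matching the competitive factor~$6.401$ requires carefully reworking the integral rather than merely dropping the cardinality term from the original proof, and I would need to check that the tightened bound remains non-negative for all $l$ in the relevant range so that the integral estimate in Lemma~\ref{lemma:lb_knapsack} remains valid.
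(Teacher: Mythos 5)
Your proposal is correct and follows essentially the same route as the paper: the paper's proof introduces the event $\overline{\phi}_l$ (the knapsack condition alone), reproves Lemma~\ref{lemma:lb_l} via Markov's inequality without the union-bound term for cardinality (which replaces the coefficient $\frac{1}{1-\beta}$ by $\frac{\beta}{1-\beta}$), reworks the integral in Lemma~\ref{lemma:lb_knapsack}, adapts both Propositions~\ref{prop:ratio_case1} and~\ref{prop:profit_knapsack}, and then optimizes to find $c=d=\beta\approx 0.431$. Your choice to impose $c=d$ up front rather than at the final optimization step is immaterial, since with $c=d$ the secretary-phase term $c\ln(d/c)$ vanishes and the $\cardinalitycapacity\to\infty$ limit becomes irrelevant, exactly as in the paper.
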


The~$\alpha$ in our analytical results stands for the approximation guarantee of the approximation algorithm to the~\Problem{} used in Algorithm~\ref{alg:online}. Theorem \ref{thm:greedy} shows that $\alpha=1-\frac{1}{e}$ if we embed Algorithm~\ref{alg:greedy} into Algorithm~\ref{alg:online}. More generally, Theorem~\ref{thm:selectionfptas} shows how to construct an efficient~$\alpha$-approximation algorithm for any~$\alpha < 1.0$.

\begin{remark}
Algorithm \ref{alg:online} and the results in this section extend to generalizations of the~\Problem{} whose offline version admits efficient algorithms with approximation guarantees (see Remark~\ref{remark:general utilities}); in particular, the results hold when all utility functions are concave, continuous but not necessarily piecewise linear, and smooth (e.g., Lipschitz continuous). 
\end{remark}
\begin{remark}
The incorporation of~$\beta$ allows us to improve the performance guarantees of Algorithm \ref{alg:online}. However, the approximation guarantees of~$\online$ do not extend to the 0-1 knapsack version of the~$\Problem{}$, in which~$\beta$ must be equal to one, and thus, our approximation guarantees are not defined. Although our analysis can be modified to handle this case, the theoretical guarantees are not as strong as~\citet{albers2021improved}.
\end{remark}

\section{Numerical Studies}\label{sec:experiments}

We investigate the empirical performance of Algorithms~\ref{alg:greedy} (the greedy algorithm) and~\ref{alg:online}
(the online algorithm). We implement both in Python 3.10.4, and use Pyomo with Gurobi 10.0.2 to solve the mixed-integer linear programming formulations~\citep{hart2011pyomo,bynum2021pyomo,gurobi}. The experiments are executed on an Apple  M1 Pro with 32 GB of RAM. 

We created test instances based on experiments in~\citet{de2003polyhedral}. Specifically, instances are constructed for the following component-based formulation~\ref{model:Knapsack}, in which variables~$x_{\slot,i}\in [0,1]$ represent the utilized percentage of component~$\component_{\slot,i}$; and $a_{\slot,i}$ and $r_{\slot,i}$ are the weight and utility if selecting the whole component~$\component_{\slot,i}$. 
\[
\begin{array}{ccll}\label{model:Knapsack}
\tag{\textbf{\texttt{CKP-K}}}
 &\max
& \sum\limits_{\slot \in \slots} \sum\limits_{i=1}^{\numcomponents_{\slot}} \utility_{\slot,i} \cdot x_{\slot,i}  
\\
&(a) & \sum\limits_{\slot \in \slots} \sum\limits_{i=1}^{\numcomponents_{\slot}}  a_{\slot,i} \cdot x_{\slot,i}   \leq \knapsackcapacity \\
&(b)& \sum\limits_{\slot \in \slots} y_{\slot} \leq \cardinalitycapacity  & \\
&(c)&x_{\slot,i} \leq y_{\slot} &\forall (\slot,i) \in \slots \times \{1,2,\ldots,\numcomponents_{\slot}\} \\
&& y_{\slot}\in \{0,1\} & \forall \slot \in \slots \\
&& x_{\slot,i} \in [0,1] & \forall (\slot, i)\in \slots \times \{1,2,\ldots,\numcomponents_{\slot}\}
\end{array}
\]
The values of~$r_{\slot,i}$ and~$a_{\slot,i}$ are generated such that the utility-to-weight ratios $\frac{r_{\slot,i}}{ a_{\slot,i}}$ in each item are concave. We provide more detail about our instance generation process in the following subsections.

\subsection{Performance of the Greedy Algorithm}\label{sec:experiments_greedy}
 To test the performance of Algorithm \ref{alg:greedy}, we construct dataset \texttt{A} as follows. We generate ten instances for each value of~$|\slots|$ in $\{10,20,30,\ldots, 90, 100, 250, 500\}$ and~$\cardinalitycapacity$ in $\{2,\lfloor 0.3|\slots| \rfloor,\lfloor 60.6|\slots| \rfloor\}$, yielding a total of 360 instances. Within the test instances, each item~$\slot$ consists of two components, i.e., its piecewise linear utility function consists of two line segments. For each item, we first sample independently and uniformly at random two values, denoted as $r$ and~$r'$, from $[10, 25]$ and two values, denoted as $a$ and~$a'$, from $[5,20]$. The first component of~$\slot$ has utility $\utility_{\slot,1} = \max(r,r')$ and weight  $a_{\slot,1} =  \min(a,a')$, whereas the second component has utility $\utility_{\slot,2} = \min(r,r')$ and weight $\maxcapacity_{\slot,2} = \max(a,a')$; thus, the utility-to-weight structure of~$\slot$  is concave piecewise linear by construction. 
The knapsack capacity is selected as 
$W \equiv \max\left(0.3 \sum\limits_{\slot \in \slots} (a_{\slot,1}+a_{\slot,2}), 1+\max\limits_{\slot \in \slots}(a_{\slot,1}+a_{\slot,2})\right)$.  

Figure~\ref{fig:greedy} shows the empirical performance of Algorithm \ref{alg:greedy} for the instances of dataset~\texttt{A}. The results are aggregated by the number of items~$|\slots|$ , indicated in the~$x$ axis of the plot, and each curve represents a cardinality capacity. The~$y$ axis presents the empirical approximation ratio attained by the greedy algorithm, which is the ratio between the objective value achieved by Algorithm \ref{alg:greedy} and the optimal objective value of~\ref{model:Knapsack}. The shaded area represents the 90\% confidence interval for the average approximation ratio. A larger approximation ratio indicates better performance; in particular, a ratio of 1 indicates that the algorithm delivers an optimal solution.

\begin{figure}[!h]
	\centering
 		\caption{Performance of Algorithm \ref{alg:greedy} in all instances of dataset~\texttt{A}.}
	\label{fig:greedy}
 \includegraphics[width=0.5\textwidth]{./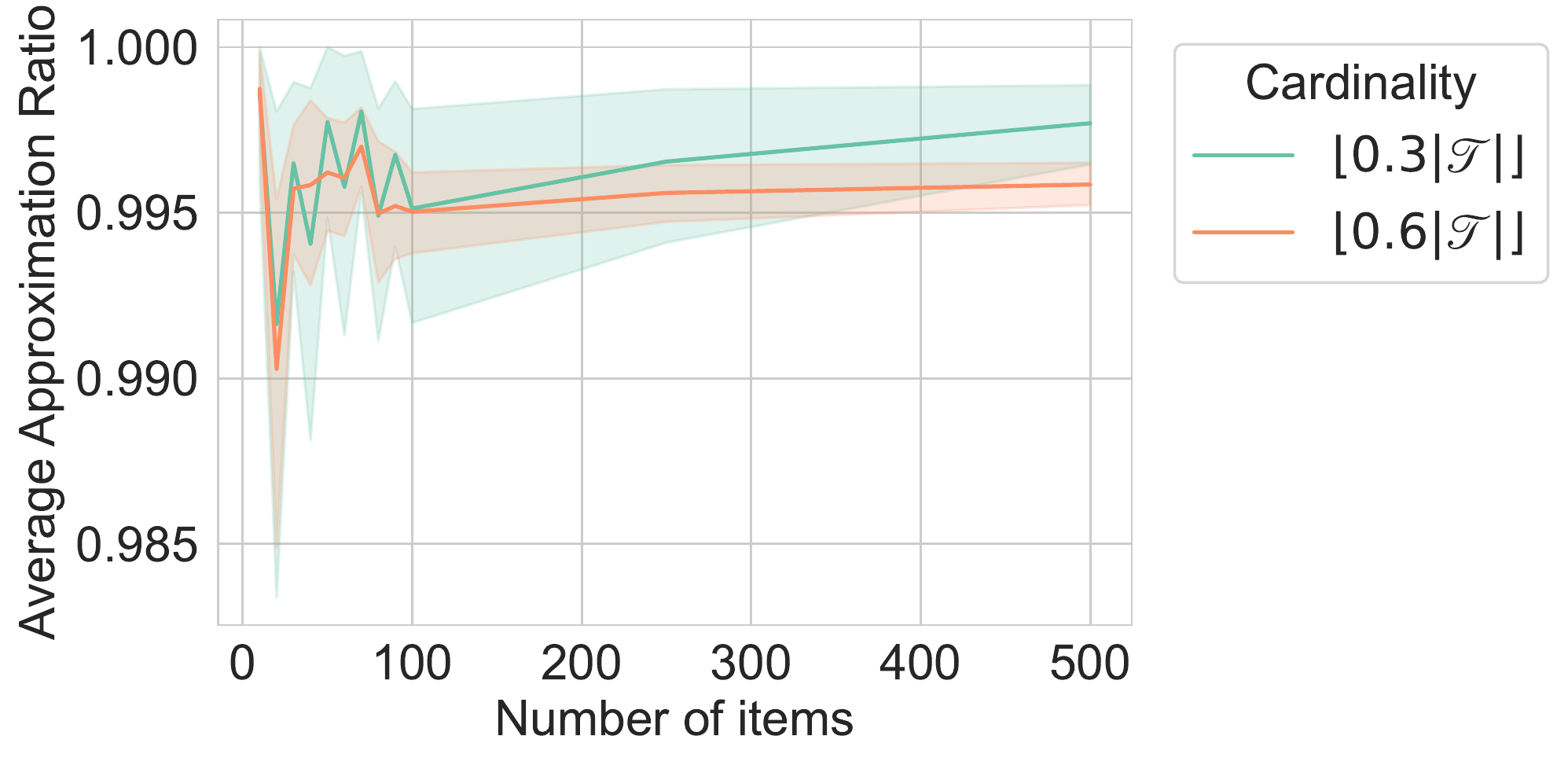}
\end{figure}

The results show strong performance of Algorithm \ref{alg:greedy}. The average approximation ratio is above 0.995, and the minimum ratio across all instances is larger than 0.95. We omit the results for~$\cardinalitycapacity = 2$ because Algorithm~\ref{alg:greedy} delivers optimal solutions for all instances of dataset~\texttt{A} in these cases.

\subsection{Empirical Performance of the Online Algorithm}
To evaluate Algorithm \ref{alg:online}, we construct dataset~\texttt{B} to test the performance of~$\online$ in scenarios where the optimal solutions will always consist of a single item. In dataset~\texttt{B}, each item~$\slot$ consists of two components, and we use the same procedure in~\texttt{A} to generate the first~$|\slots|-1$ items. The knapsack capacity~$\knapsackcapacity$ is also defined as in~\texttt{A}, but only based on the first~$|\slots| - 1$ items. For the last item~$\slot'$, the first component has weight~$a_{\slot',1}$ sampled uniformly from~$\left[5,0.49W\right]$ and utility~$\utility_{\slot',1} = 10a_{\slot',1}$, and the second component has weight~$a_{\slot',2} = \knapsackcapacity  -  a_{\slot',1}$ and utility~$\utility_{\slot',2} =  7a_{\slot',1}$. Note that the maximum utility density of the first~$|\slots|-1$ items is at most 5, so by construction, \textit{the optimal solution for any instance of \texttt{B} consists solely of~$\slot'$}, which occupies the entire knapsack capacity. This contrasts with dataset~\texttt{A}, where the utilities of the items are relatively uniform, and optimal solutions are more likely to have several items. The cardinality capacities are also extracted from~$\cardinalitycapacity \in \{2,\lfloor 0.3 |\slots| \rfloor,\lfloor 0.6 |\slots| \rfloor\}$, so dataset~\texttt{B} has 360 instances. 

For each instance of datasets \texttt{A} and \texttt{B}, we sample uniformly at random 20 arrival orders, so we report the results from 14,400 executions of~$\online$. Moreover, instead of using an approximation algorithm to solve the offline instance~$\Problem{}(\slots_l)$ (in line 8 of algorithm~\ref{alg:online}), we use an exact mixed-integer linear programming formulation and solve the problem to optimality using Gurobi. Finally, we note that the worse-case behavior of Algorithm \ref{alg:online} can be improved if we force the algorithm to incorporate as much from the last item as the remaining capacity allows. However, this modification does not allow us to improve the performance guarantees of our algorithm, so 
we evaluate the performance of~$\online$ without incorporating this modification in our experiments.

We present in Figure~\ref{fig:online} the results of our experiments involving Algorithm \ref{alg:online}.  Each plot in Figure~\ref{fig:online} reports the \textit{average empirical ratio} $\frac{\mathcal{A}(I)}{O^*(I)}$ together with the 90\% confidence interval ($y$-axis), where $O^*(I)$ is the optimal offline solution value, and $\mathcal{A}(I)$ is the objective value of the solution given by Algorithm \ref{alg:online} for instance $I$. We present the average empirical ratio instead of the competitive ratio (presented in Definition~\ref{def:ratio}) because Algorithm \ref{alg:online} may deliver solutions with zero utility (e.g., when it does not pick any item), for which the competitive ratio is undefined. Therefore, a larger empirical ratio indicates better performance, with 0 and 1 being the minimum and the maximum achievable values, respectively. Finally, the results are aggregated by~$|\slots|$ ($x$-axis), and each curve is associated with a cardinality capacity.

\begin{figure}[!h]
	\centering
 		\caption{Performance of Algorithm \ref{alg:online} in all instances of datasets~\texttt{A} and~\texttt{B}.}
	\label{fig:online}
	\begin{subfigure}[b]{0.45\textwidth}
	\includegraphics[width=1.0\textwidth]{./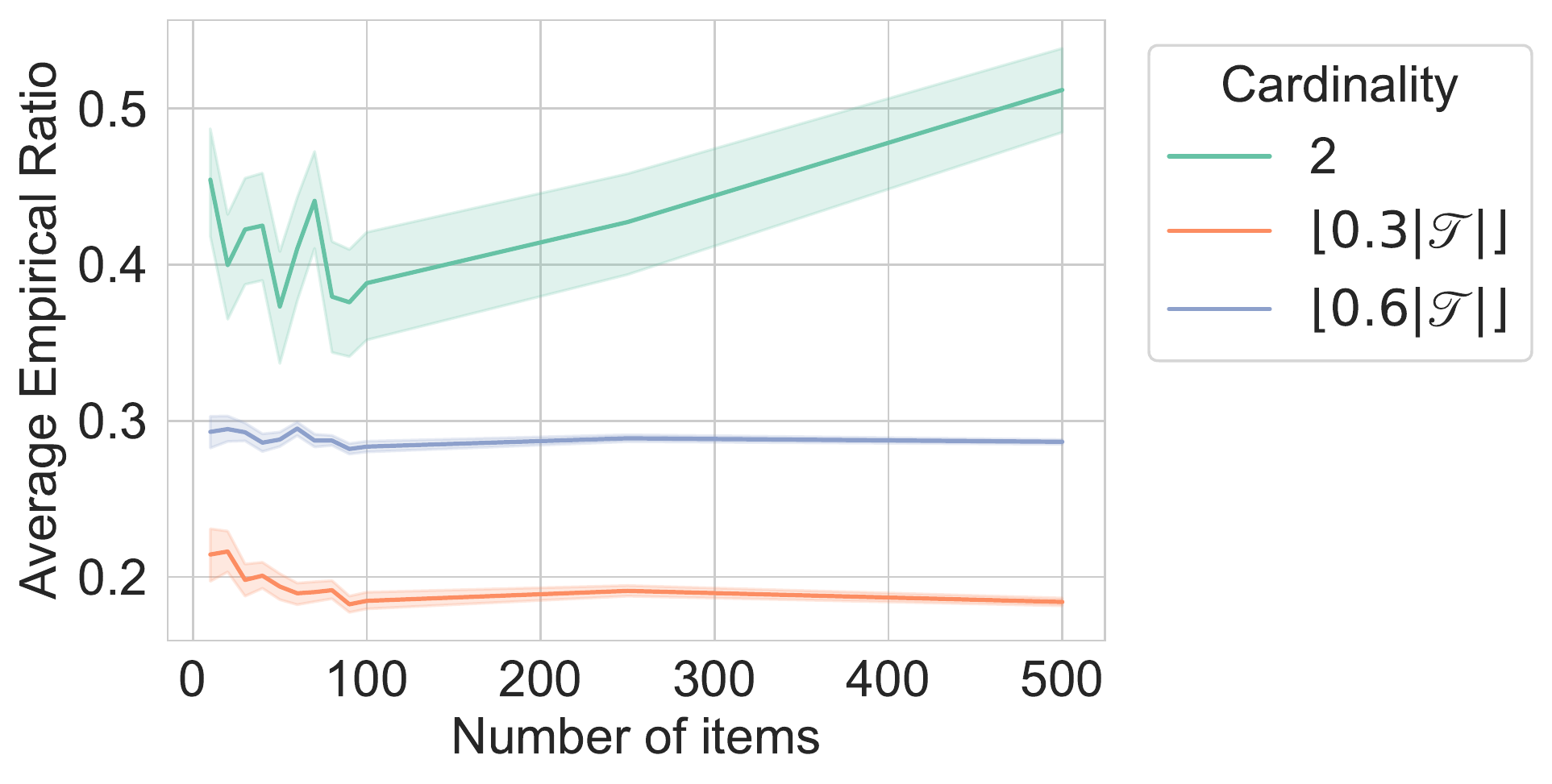}
	\subcaption{Dataset \texttt{A}}
	\end{subfigure}
		\begin{subfigure}[b]{0.45\textwidth}
     	\includegraphics[width=1.0\textwidth]{./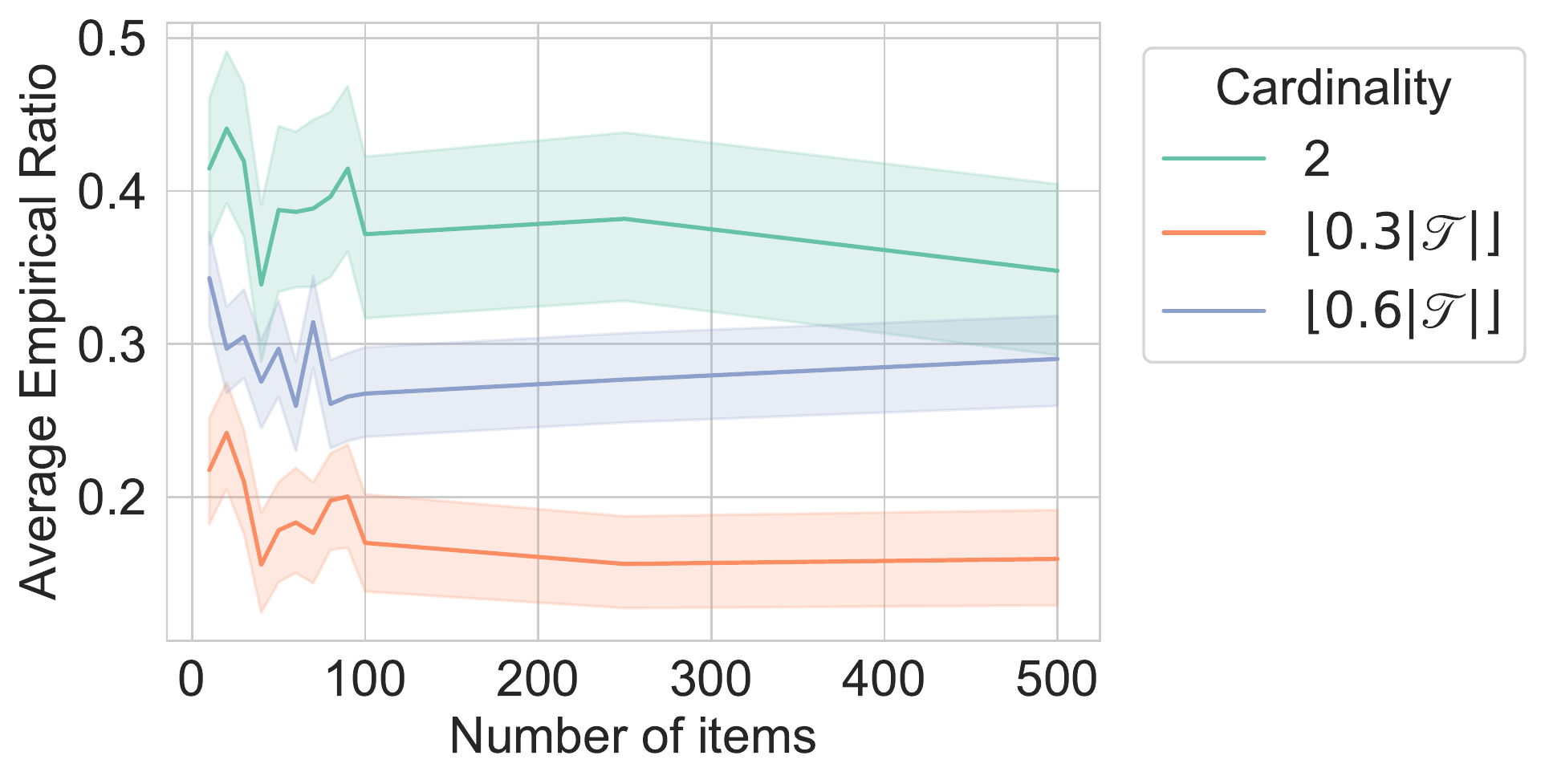}
	\subcaption{Dataset \texttt{B}}
	\end{subfigure}%
\end{figure}

Overall, the results show that the empirical performance of Algorithm \ref{alg:online} when~$\cardinalitycapacity = 2$ is clearly better than the theoretical ratio of 0.156 presented in Corollary~\ref{cor:ratioC2}, especially for dataset \texttt{A}. As expected, the results for dataset \textit{B} are slightly worse, and this happens because the value of~$c$ tailored for~$\cardinalitycapacity = 2$ makes~$\online$ ignore the most valuable item arrives during the sampling phase (and is thus ignored by the algorithm) in~37\% of the cases. Moreover, the knapsack phase does not trigger the incorporation of any item in more than 90\% of the cases (i.e., once the most valuable item appears, the knapsack phase always returns the optimal offline solution). Therefore, a ``hit-or-miss'' aspect is associated with the instances of dataset \texttt{B}, as shown in the histogram presented in Figure~\ref{fig:histograms}. Still, this worst-case behavior is compensated by excellent performance in other instances.

\begin{figure}[!h]
	\centering
 		\caption{The histogram of empirical ratios of Algorithm \ref{alg:online} for datasets~\texttt{A} and~\texttt{B}.}
	\label{fig:histograms}
	\begin{subfigure}[b]{0.45\textwidth}
	\includegraphics[width=1.0\textwidth]{./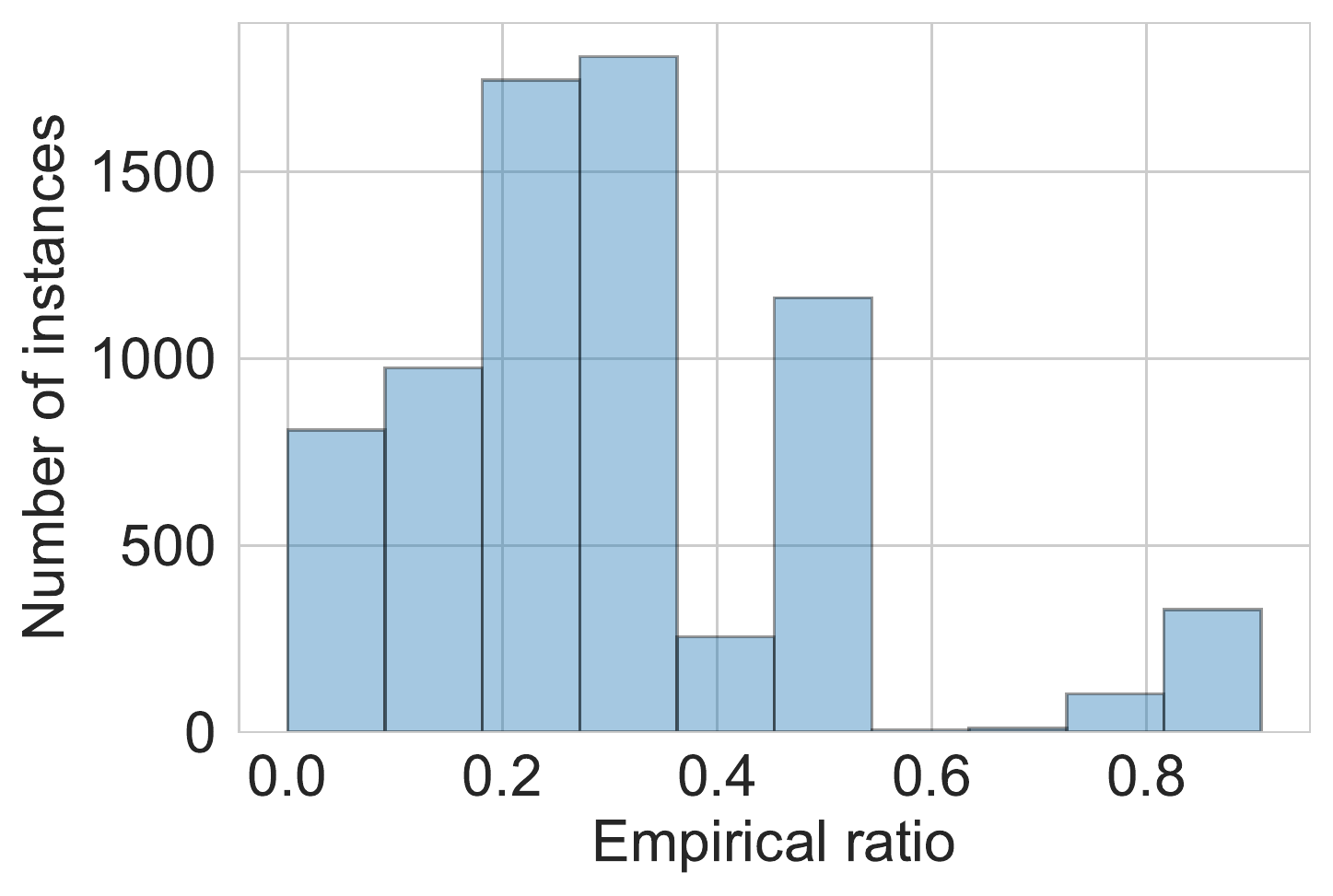}
	\subcaption{Dataset \texttt{A}}
	\end{subfigure}
		\begin{subfigure}[b]{0.45\textwidth}
     	\includegraphics[width=1.0\textwidth]{./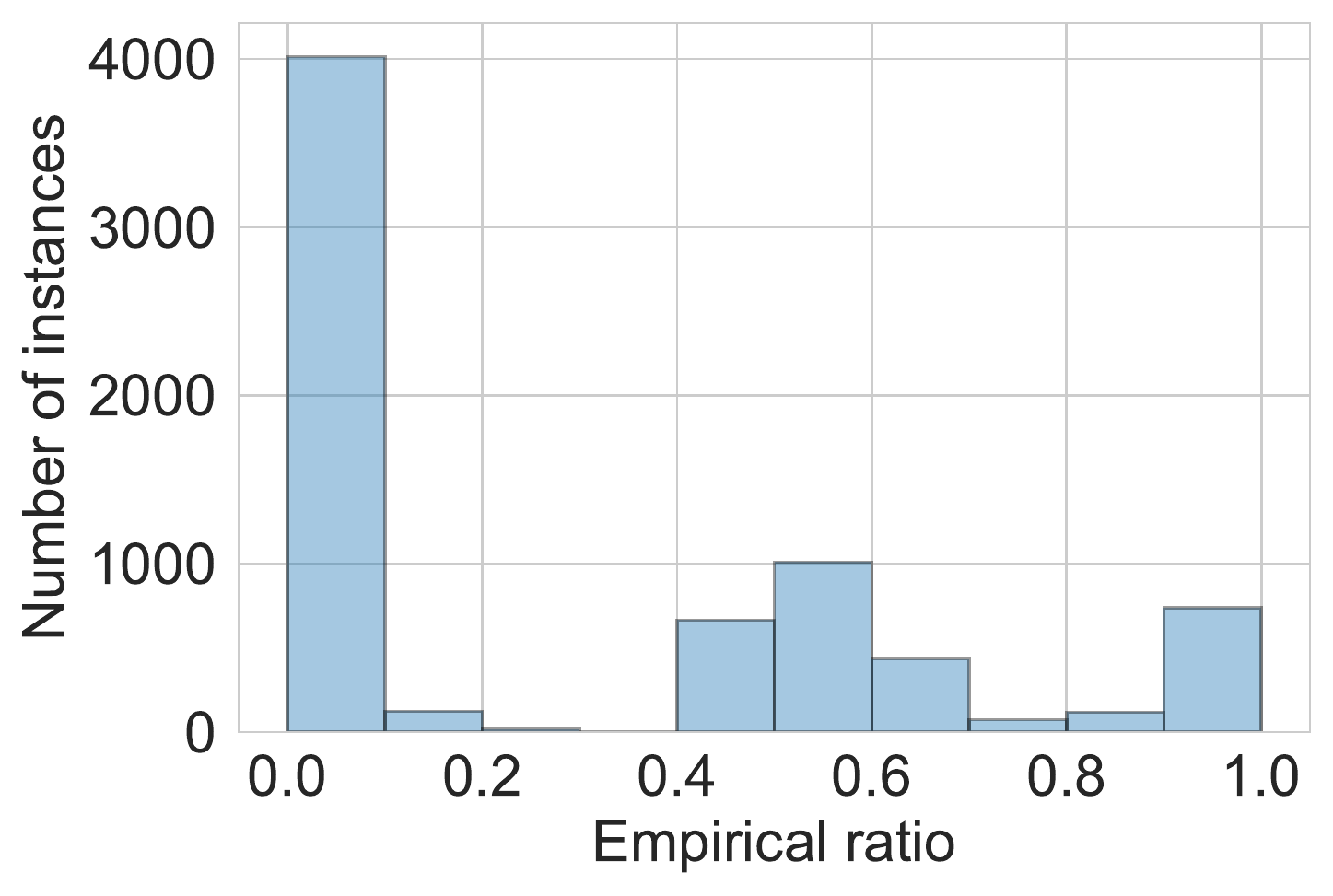}
	\subcaption{Dataset \texttt{B}}
	\end{subfigure}%
\end{figure}

The empirical performance of~$\online$ for~$\cardinalitycapacity = \lfloor 0.3 |\slots| \rfloor$ and~$\cardinalitycapacity = \lfloor 0.6 |\slots| \rfloor$ (presented in the curves labeled~30\% and~60\% in Figure~\ref{fig:online}, respectively) is approximately two times better than the theoretical ratios of 0.096 and 0.156 derived from our analysis, respectively. Moreover, the performance of~$\online$ is relatively uniform across both datasets in these cases (differently from~$\cardinalitycapacity = 2$, where dataset \texttt{A} is easier than dataset \texttt{B}). This can be explained by the fact that the higher cardinality capacity allows~$\online$ to incorporate more items, so several smaller ones may be used to partially recover the utility that is lost when high-value items arrive late.

\section{Conclusions}\label{sec:conclusions}

This article introduces and investigates the cardinality-constrained continuous knapsack problem with concave piecewise linear utilities. We develop an FPTAS and show that the~\Problem{} has a submodular structure that ensures a~$(1 - 1/e)$-approximation guarantee for a greedy procedure. We also present a $\frac{10.427}{\alpha}$-competitive algorithm for an online extension of the~\Problem{} in the random order model, where~$\alpha$ is the approximation guarantee of any algorithm used to solve offline instances of the~\Problem{} as part of our algorithm; additionally, we tailor our analysis to obtain stronger competitive ratios in special cases of the problem.
Our work contributes to the literature on offline knapsack problems by incorporating constraints and utility structures motivated by business and economic challenges. Additionally, our work provides a rigorous analysis of an algorithm for the online extension of the problem in the random order model. Results of online knapsack problems in the random order model are very recent. Our article is among the first to incorporate multiple constraints and nonlinear utilities and contributes to this growing research area.

\bibliographystyle{apalike}
\bibliography{references}

\newpage

\begin{APPENDICES}

\section{Proofs of Results in Section~\ref{sec:piecewise}}
\label{sec:reformulation}

Let the continuous variable~$x_{\slot,i}$ denote the utilization of component~$\component_{\slot,i}$. The utilization of item $\slot$ can be rewritten as $x_{\slot}=\sum\limits_{i = 1}^{\numcomponents_{\slot}}x_{\slot,i}$. Finally, we  linearize~$\| x_{\slot} \|_0$ by introducing binary decision variables $y_{\slot}$  to indicate the activation of $x_{\slot}$, that is, $y_{\slot}=1$ if~$x_{\slot} > 0$ and $y_{\slot}=0$ otherwise. These transformations allow us to reformulate~\ref{model:original} as the following MIP:  
\[
\begin{array}{ccll}\label{model:ConstAppSch}
\tag{\textbf{\texttt{CKP-r}}}
 &\max
& \sum\limits_{\slot \in \slots} \sum\limits_{i=1}^{\numcomponents_{\slot}} \utility_{\slot,i} \cdot x_{\slot,i}  
\\
&(a) & \sum\limits_{\slot \in \slots} \sum\limits_{i=1}^{\numcomponents_{\slot}}  x_{\slot,i}   \leq \knapsackcapacity \\
&(b)& \sum\limits_{\slot \in \slots} y_{\slot} \leq \cardinalitycapacity  & \\
&(c)&x_{\slot,i} \leq \maxcapacity_{\slot,i} \cdot y_{\slot} &\forall (\slot,i) \in \slots \times \{1,2,\ldots,\numcomponents_{\slot}\} \\
&& y_{\slot}\in \{0,1\} & \forall \slot \in \slots \\
&& x_{\slot,i} \in \mathbb{R}^+ & \forall (\slot,i) \in \slots \times \{1,2,\ldots,\numcomponents_{\slot}\}
\end{array}
\]
Our decomposition strategy represents the utility of each item~$\slot$ as the linear expression $\sum\limits_{i = 1}^{\numcomponents_{\slot}}\utility_{\slot,i} x_{\slot,i}$. Constraints~\ref{model:ConstAppSch}-(a) and~\ref{model:ConstAppSch}-(b) enforce the knapsack and  cardinality constraints, respectively. Constraint~\ref{model:ConstAppSch}-(c) connects the utilization of each component~$\component_{\slot,i}$ with the respective activation variable~$y_{\slot}$ and capacity~$\maxcapacity_{\slot,i}$; particularly, $y_{\slot}$ must be equal to 1 if any component of~$\slot$ has non-zero utilization, so it models~$\| x_{\slot} \|_0$. We abuse notation and use~$x_{\slot,i}$ and~$x_{\slot}$ when referring to the utilization of component~$\component_{\slot,i}$ and item~$\slot$, respectively; note that $x_{\slot} = \sum\limits_{i = 1}^{\numcomponents_{\slot}} x_{\slot,i}$.  Proposition \ref{prop:eqiv} shows the equivalence of Formulations \ref{model:original} and \ref{model:ConstAppSch}.

We first show the following two lemmas in order to prove Proposition \ref{prop:eqiv}.
\begin{lemma}\label{lemma:seq1}
    Optimal solutions to \ref{model:ConstAppSch} must satisfy the following properties:
\begin{align}
&x_{\slot,i}>0 \Longrightarrow x_{\slot,i-1}=\maxcapacity_{\slot,i-1}, &\forall i>1, \slot\in \slots \label{eq:seq}
\end{align}
\end{lemma}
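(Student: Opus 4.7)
I would prove Lemma~\ref{lemma:seq1} by contradiction using a simple exchange argument that exploits the concavity of the underlying piecewise linear utility function. Suppose that $(\vecx,\vecy)$ is an optimal solution to~\ref{model:ConstAppSch} for which there exist an item $\slot \in \slots$ and an index $i > 1$ with $x_{\slot,i} > 0$ but $x_{\slot,i-1} < \maxcapacity_{\slot,i-1}$. The goal is to construct a strictly better feasible solution, contradicting the presumed optimality.

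The perturbation is the natural one: set $\epsilon \coloneqq \min\bigl(x_{\slot,i},\ \maxcapacity_{\slot,i-1} - x_{\slot,i-1}\bigr) > 0$ and define a new solution $(\vecx',\vecy')$ that agrees with $(\vecx,\vecy)$ everywhere except $x'_{\slot,i-1} = x_{\slot,i-1} + \epsilon$ and $x'_{\slot,i} = x_{\slot,i} - \epsilon$. The next step is to verify feasibility: constraint~\ref{model:ConstAppSch}-(a) is untouched because the total utilization $\sum_{i'} x'_{\slot,i'}$ is unchanged; constraint~\ref{model:ConstAppSch}-(b) is untouched because $\vecy$ is unchanged; constraint~\ref{model:ConstAppSch}-(c) for indices $i-1$ and $i$ holds because $x'_{\slot,i-1} \leq \maxcapacity_{\slot,i-1}$ by the choice of $\epsilon$ and $x'_{\slot,i} \geq 0 \leq \maxcapacity_{\slot,i}\, y_{\slot}$, using that $y_{\slot} = 1$ (which is forced by the original inequality $x_{\slot,i} > 0$ through~\ref{model:ConstAppSch}-(c)).

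Finally, I would invoke concavity of $\utilities_{\slot}(\cdot)$, which by the decomposition stated in \S\ref{sec:piecewise} yields $\utility_{\slot,i-1} > \utility_{\slot,i}$. The change in the objective value is
\[
\sum_{\slot' \in \slots}\sum_{i'} \utility_{\slot',i'} x'_{\slot',i'} - \sum_{\slot' \in \slots}\sum_{i'} \utility_{\slot',i'} x_{\slot',i'} = \epsilon\,(\utility_{\slot,i-1} - \utility_{\slot,i}) > 0,
\]
contradicting the optimality of $(\vecx,\vecy)$ and establishing the implication~\eqref{eq:seq}.

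I do not expect any real obstacle here; the only subtlety worth highlighting is that the activation variable $y_{\slot}$ is already equal to $1$ in the original solution (since $x_{\slot,i}>0$), so shifting mass between components of the same item neither activates a new item nor deactivates $\slot$, and hence constraint~\ref{model:ConstAppSch}-(b) and the cardinality budget play no role in the argument.
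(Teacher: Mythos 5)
Your proposal is correct and follows essentially the same exchange argument as the paper: shift $\epsilon=\min(x_{\slot,i},\,\maxcapacity_{\slot,i-1}-x_{\slot,i-1})$ of mass from component $i$ to component $i-1$ and use $\utility_{\slot,i-1}>\utility_{\slot,i}$ to contradict optimality. Your additional feasibility checks (and the observation that $y_{\slot}=1$ is already forced) are fine elaborations of the same proof.
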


\begin{proof}{Proof of Lemma \ref{lemma:seq1}:}
This result can be shown by contradiction. Suppose that an optimal solution $\vecx^0$ to~\ref{model:ConstAppSch} violates condition Eq.\eqref{eq:seq}; that is, we have some $x^0_{\slot,i}>0$ and $x^0_{\slot,i-1}<\maxcapacity_{\slot,i-1}$. We can construct a feasible solution $\vecx^1$ with $(x^1_{\slot,i-1}, x^1_{\slot,i})=(x^0_{\slot,i-1}+\delta, x^0_{\slot,i}-\delta)$ where $\delta=\min(\maxcapacity_{\slot,i-1}-x^0_{\slot,i-1},x^0_{\slot,i})>0$. As $\utility_{\slot,i-1} > \utility_{\slot,i}$, $\vecx^1$ gives a larger objective value than $\vecx^0$, which contradicts the optimality of $\vecx^0$. This concludes the proof of Lemma~\ref{lemma:seq1}.
\medskip
\end{proof}

\begin{lemma}\label{lemma:seq2} For every instance of the~\Problem{}, there exists an optimal solution to~\ref{model:ConstAppSch} such that
\begin{align}
&y_{\slot}=0 \Longleftrightarrow \sum\limits_{i= 1}^{\numcomponents_{\slot}} x_{\slot,i}=0, &\forall \slot\in \slots \label{eq:activate}
\end{align}
\end{lemma}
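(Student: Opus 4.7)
The plan is to start from an arbitrary optimal solution $(\vecx^*, \vecy^*)$ to \ref{model:ConstAppSch} and show that a minor modification yields another optimal solution satisfying the stated equivalence. The forward implication ($y_\slot = 0 \Rightarrow \sum_{i} x^*_{\slot,i} = 0$) is immediate from constraint \ref{model:ConstAppSch}-(c), since $y_\slot = 0$ forces $x_{\slot,i} \leq \maxcapacity_{\slot,i} \cdot 0 = 0$ for every $i$. So the only content is the reverse implication, which must be enforced via post-processing.

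To establish the reverse implication, I would define a modified solution $(\vecx^\dagger, \vecy^\dagger)$ as follows: set $\vecx^\dagger = \vecx^*$ and, for each item $\slot \in \slots$, set $y^\dagger_\slot = y^*_\slot$ if $\sum_{i=1}^{\numcomponents_\slot} x^*_{\slot,i} > 0$, and $y^\dagger_\slot = 0$ otherwise. Concretely, the only coordinates where $\vecy^\dagger$ differs from $\vecy^*$ are items $\slot$ with $y^*_\slot = 1$ and $\sum_i x^*_{\slot,i} = 0$; for such items, we flip $y_\slot$ down to $0$.

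The next step is to verify that $(\vecx^\dagger, \vecy^\dagger)$ is feasible and optimal. The objective function depends only on $\vecx$, and since $\vecx^\dagger = \vecx^*$, the objective values coincide, so $(\vecx^\dagger, \vecy^\dagger)$ achieves the optimal value if feasible. Constraint \ref{model:ConstAppSch}-(a) only involves $\vecx$ and is thus unaffected. Constraint \ref{model:ConstAppSch}-(b) becomes $\sum_\slot y^\dagger_\slot \leq \sum_\slot y^*_\slot \leq \cardinalitycapacity$, since each $y^\dagger_\slot$ is at most the corresponding $y^*_\slot$. For constraint \ref{model:ConstAppSch}-(c), the only case requiring attention is when $y^\dagger_\slot = 0$ but $y^*_\slot = 1$: by construction this happens exactly when $\sum_i x^*_{\slot,i} = 0$, which implies $x^*_{\slot,i} = 0$ for all $i$ (since $x^*_{\slot,i} \geq 0$), and thus $x^\dagger_{\slot,i} = 0 \leq \maxcapacity_{\slot,i} \cdot y^\dagger_\slot$ holds trivially. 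By construction, $(\vecx^\dagger, \vecy^\dagger)$ then satisfies Eq.~\eqref{eq:activate}.

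I do not expect a real obstacle here; the argument is essentially a cleanup of slack in the activation variables, exploiting the fact that the objective and the knapsack constraint depend only on $\vecx$, while the cardinality constraint is monotone in $\vecy$. The only subtlety worth stating explicitly is that constraint \ref{model:ConstAppSch}-(c) is preserved precisely because we only flip $y_\slot$ to $0$ on items whose components already carry zero utilization in $\vecx^*$.
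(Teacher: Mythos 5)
Your proposal is correct and follows essentially the same route as the paper's proof: the forward implication comes directly from constraint \ref{model:ConstAppSch}-(c), and the reverse implication is enforced by flipping $y_{\slot}$ to $0$ on items with zero total utilization, which preserves the objective and all constraints. Your version simply spells out the feasibility check for each constraint more explicitly than the paper does.
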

\begin{proof}{Proof of Lemma \ref{lemma:seq2}:}
Let~$(\vecx,\vecy)$ be an optimal solution that does not satisfy the condition Eq.\eqref{eq:activate}. First, from Constraint~\ref{model:ConstAppSch}-(c), we have that $y_{\slot}=0$ implies $\sum\limits_{i= 1}^{\numcomponents_{\slot}} x_{\slot,i}=0$. Moreover, if $\sum\limits_{i= 1}^{\numcomponents_{\slot}} x_{\slot,i}=0$ and~$y_{\slot} = 1$, we can obtain a solution that achieves the same objective value by setting~$y_{\slot} = 0$ without changing~$\vecx$. 
This concludes the proof of Lemma~\ref{lemma:seq2}.
\medskip
\end{proof}

Equipped with the lemmas above, we can prove the equivalence of Formulations \ref{model:original} and \ref{model:ConstAppSch}.
\begin{proposition}
\label{prop:eqiv}
    \ref{model:original} and \ref{model:ConstAppSch} are equivalent, in the sense that their optimal objective values are the same and optimal solutions to one can be converted into optimal solutions to the other.
\end{proposition}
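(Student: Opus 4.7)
{Proof sketch of Proposition~\ref{prop:eqiv}:}
The plan is to exhibit value-preserving maps in both directions between feasible solutions of~\ref{model:original} and feasible solutions of~\ref{model:ConstAppSch} satisfying the structural conditions of Lemmas~\ref{lemma:seq1} and~\ref{lemma:seq2}, and then invoke these lemmas to conclude that the two formulations have the same optimal value.

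First I would go from~\ref{model:original} to~\ref{model:ConstAppSch}. Given any feasible~$\vecx$ for~\ref{model:original}, define $y_{\slot} = 1$ if $x_{\slot} > 0$ and $y_{\slot} = 0$ otherwise, and split~$x_{\slot}$ across its components by filling them in order: set $x_{\slot,i} = \min\bigl(\maxcapacity_{\slot,i},\, x_{\slot} - \sum_{i'<i}\maxcapacity_{\slot,i'}\bigr)^+$. By construction, condition~\eqref{eq:seq} is satisfied, $\sum_i x_{\slot,i} = x_{\slot}$, and each~$x_{\slot,i}\leq \maxcapacity_{\slot,i}\cdot y_{\slot}$, so constraints~\ref{model:ConstAppSch}-(a)--(c) follow from~\ref{model:original}-(a)--(c). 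The utility-matching step uses the fact that the piecewise linear representation $\utilities_{\slot}(\cdot)$ is exactly $\sum_{i} \utility_{\slot,i} x_{\slot,i}$ whenever the components are filled in order from~$i=1$ upward, which is a direct consequence of the definition of the decomposition in~\S\ref{sec:piecewise}.

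Next I would go in the reverse direction. By Lemmas~\ref{lemma:seq1} and~\ref{lemma:seq2}, \ref{model:ConstAppSch} admits an optimal solution~$(\vecx,\vecy)$ in which the components of each item are filled in order and $y_{\slot}$ is the~$0$-norm indicator of $\sum_i x_{\slot,i}$. Define $\tilde{x}_{\slot} = \sum_{i=1}^{\numcomponents_{\slot}} x_{\slot,i}$; then $\tilde{x}_{\slot} \leq \sum_i \maxcapacity_{\slot,i} = \maxcapacity_{\slot}$, so~\ref{model:original}-(c) holds, while~\ref{model:original}-(a) and~\ref{model:original}-(b) follow from~\ref{model:ConstAppSch}-(a) and~\ref{model:ConstAppSch}-(b) together with $\|\tilde{x}_{\slot}\|_0 = y_{\slot}$ by~\eqref{eq:activate}. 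Because condition~\eqref{eq:seq} holds, we again have $\utilities_{\slot}(\tilde{x}_{\slot}) = \sum_i \utility_{\slot,i} x_{\slot,i}$, so the objective values coincide.

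The two mappings together imply that the optimal values of~\ref{model:original} and~\ref{model:ConstAppSch} are equal and that optima of one yield optima of the other. The main obstacle I expect is the second direction: a generic optimal solution to~\ref{model:ConstAppSch} need not satisfy~\eqref{eq:seq} or~\eqref{eq:activate}, so I must explicitly invoke Lemmas~\ref{lemma:seq1} and~\ref{lemma:seq2} (or a small argument restating them) to justify that one may restrict attention to solutions with the ordered-filling and activation-consistency properties before transferring them back to~\ref{model:original}. Once that is in place the identity $\utilities_{\slot}(\tilde{x}_{\slot}) = \sum_i \utility_{\slot,i} x_{\slot,i}$ is immediate from the piecewise linear decomposition and the result follows.
\hfill$\blacksquare$
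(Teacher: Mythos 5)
Your proposal is correct and follows essentially the same route as the paper's own proof: map feasible solutions of~\ref{model:original} to~\ref{model:ConstAppSch} by filling components in order of decreasing utility, and map optimal solutions of~\ref{model:ConstAppSch} (restricted via Lemmas~\ref{lemma:seq1} and~\ref{lemma:seq2} to the ordered-filling, activation-consistent form) back by summing component utilizations. The only cosmetic difference is that you subtract cumulative capacities with a positive part rather than previously assigned component values, which is equivalent under ordered filling.
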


\begin{proof}{Proof of Proposition~\ref{prop:eqiv}:}
Any optimal solution to \ref{model:ConstAppSch} that satisfies the conditions in Lemma~\ref{lemma:seq2} can be converted into a feasible solution to \ref{model:original} by constructing $x_{\slot}=\sum\limits_{i= 1}^{\numcomponents_{\slot}} x_{\slot,i}$, which satisfies the structure of $\utilities_{\slot}(\cdot)$ because of condition Eq.\eqref{eq:seq}. Moreover, due to condition Eq.\eqref{eq:activate}, we have $y_{\slot}=0$ if $\sum\limits_{i= 1}^{\numcomponents_{\slot}} x_{\slot,i}=0$ and  $y_{\slot}=1$ if $\sum\limits_{i= 1}^{\numcomponents_{\slot}} x_{\slot,i}>0$. Thus, $y_{\slot}$ in these optimal solutions to \ref{model:ConstAppSch} is equivalent to $\| x_{\slot} \|_0$. Finally, $(\vecx,\vecy)$ satisfy the constraints and attain the same objective value in~\ref{model:original}, so the optimal objective value of~\ref{model:ConstAppSch} is less than or equal to that of~\ref{model:original}.

Next, we show that any feasible solution $\vecx$ to \ref{model:original} can be converted into a feasible solution to \ref{model:ConstAppSch}. We construct $x_{\slot,1}=\min(\maxcapacity_{\slot,i},x_{\slot})$ and $x_{\slot,i}=\min(\maxcapacity_{\slot,i},x_{\slot}-\sum\limits_{i' = 1}^{i-1} x_{\slot,i'})$, $\forall 1<i<\numcomponents_{\slot}$. Additionally, we set $y_{\slot}=1$ if $x_{\slot}>0$ and $y_{\slot}=0$ if $x_{\slot}=0$. Direct substitution shows that the constructed solution is feasible and attains the same objective value of~$\vecx$ to \ref{model:ConstAppSch}. Therefore, the optimal objective value of~\ref{model:original} is less than or equal to that of~\ref{model:ConstAppSch}. It follows that the optimal objective values of~\ref{model:original} and~\ref{model:ConstAppSch} are the same. 

Based on the conversion discussed above, we can convert any optimal solutions to~\ref{model:original} to optimal solutions to \ref{model:ConstAppSch} satisfying conditions in Lemma~\ref{lemma:seq2}. If the cardinality constraint in \ref{model:ConstAppSch} is not binding at these optimal solutions, i.e., $\sum\limits_{\slot \in \slots} y_{\slot} < \cardinalitycapacity$, we can construct alternative optimal solutions to \ref{model:ConstAppSch} by allocating the extra cardinality capacity by making some $y_{\slot}=0$ to $y_{\slot}=1$, which does not change the objective value. For any optimal solutions to \ref{model:ConstAppSch}, we can convert them to optimal solutions to~\ref{model:original} by constructing $x_{\slot}=\sum\limits_{i= 1}^{\numcomponents_{\slot}} x_{\slot,i}$ that attains the same objective value. Therefore, optimal solutions to one model can be converted to optimal solutions to the other model. 
\medskip
\end{proof}

\begin{proof}{Proof of Remark~\ref{prop:partialcomponent}:} We use the notation of Formulation~\ref{model:ConstAppSch} to 
prove Remark~\ref{prop:partialcomponent}. Note that in any optimal solution~$\vecx^*$, a component~$\component_{\slot,i}$ is utilized if~$x_{\slot',i'} > 0$ and partially utilized if $0 < x_{\slot,i} < \maxcapacity_{\slot,i}$, and a utilized. 

We prove the first result by contradiction. Suppose that for any optimal solution $(\vecx^*,\vecy^*)$ to an instance of the~$\Problem$, we have at least two components being partially utilized. Let~$\component_{\slot,i}$ and~$\component_{\slot',i'}$ be two components being partially utilized, i.e., $0 < x^*_{\slot,i}< \maxcapacity_{\slot,i}$ and~$0 < x^*_{\slot',i'} < \maxcapacity_{\slot',i'}$. We assume w.l.o.g. that $\utility_{\slot,i} \geq \utility_{\slot',i'}$. We construct an alternative solution~$\vecx'$ such that~$x'_{\slot,i} = x^*_{\slot,i} + \underline{\epsilon}$ and $x'_{\slot',i'} = x^*_{\slot',i'} - \underline{\epsilon}$,  where~$\underline{ \epsilon} = \min(\maxcapacity_{\slot,i}-x^*_{\slot,i}, x^*_{\slot',i'})$; all other component utilization in~$\vecx'$ are the same as in~$\vecx^*$, and the values of $y_{\slot}^*$ are unchanged.  The objective value of~$\vecx'$ increases the objective of~$\vecx^*$  by~$\underline{\epsilon}(\utility_{\slot,i} - \utility_{\slot',i'})$; from the optimality of~$\vecx^*$, we must have $\utility_{\slot,i} = \utility_{\slot',i'}$. As a result, $\vecx'$ gives the same objective value as $\vecx^*$ but has at most one component partially utilized due to the change $\underline{ \epsilon}$, which leads to a contradiction. The proof of the second result uses similar arguments, i.e., a contradiction can be proven through the same substitution process.
\end{proof}

\section{Proofs of Results in Section \ref{sec:FPTAS1}}

\begin{proof}{Proof of Proposition~\ref{prop:greedyoptimal}:}  We show that the identification of an optimal solution to an instance~$I$ of the~$\Problem{}^T$ 
is equivalent to solving an instance~$I'$ of the continuous knapsack problem where the knapsack capacity~$\knapsackcapacity$ is the same as in~$I$ and each component~$\component_{\slot,i}$ of~$I$ becomes an item in~$I'$ with capacity~$\maxcapacity_{\slot,i}$ and utility~$\utility_{\slot,i}$. The following greedy algorithm can identify an optimal solution to~$I'$. We construct a 
sequence~$\components \coloneqq (\component'_{1}$,  $\component'_{2}, \ldots, )$ of items of~$I'$, sorted by utility in descending order. Given~$\components$, we solve the problem iteratively by picking in each step the unused item with the highest utility, using as much of its capacity as possible; observe that the consumption of an item is bounded either by its capacity~$\maxcapacity_{\slot,i}$ or by the remaining knapsack capacity, which is updated in each step. 

As~$\utility_{\slot,i} > \utility_{\slot,i+1}$ for $i \in \{1,\ldots,\numcomponents_{\slot}-1\}$, the components of the same item are selected sequentially by the greedy algorithm. Finally, standard arguments show that any solution that does not follow the structure of the solutions produced by the greedy algorithm is sub-optimal, so it follows that the~$\Problem^T$ can be solved efficiently.    
\end{proof}

\section{Proofs of Results in Section \ref{sec:submodularity}}

\begin{proof}{Proof of Proposition \ref{prop:pointwisesubmodular}:} 
For a set of items~$\slots$, let~$\numcomponents(\slots) = \sum\limits_{\slot \in \slots}\numcomponents_{\slot}$ be the total number of components associated with items in~$\slots$, and let~$\components(\slots) \coloneqq \Big(\component_{\slots}(1)$,  $\component_{\slots}(2), \ldots, \component_{\slots}(\numcomponents(\slots))\Big)$ denote the sequence of these components sorted in descending order of utility. Also, we use~$\utility_{\slots}(i)$  and~$\maxcapacity_{\slots}(i)$ to denote the  utility and the capacity of~$\component_{\slots}(i)$, respectively.

According to Remark~\ref{prop:partialcomponent} and Proposition~\ref{prop:greedyoptimal}, there exists an optimal solution~$\vecx$ to~$G(\slots)$ that utilizes the components of items in $\slots$ in descending order of their utility, with at most one of these components being only partially utilized. For convenience, we define a vector~$\vecz(\vecx) \in \mathbb{R}^{\numcomponents(\slots)}$ indexed by the elements of~$\components(\slots)$ to represent the utilization of each component~$\component_{\slots}(i) \in \components(\slots)$, i.e., $\vecz$ is a permutation of~$\vecx$ based on component utility. More precisely, given~$\vecx$, we define~$\vecz(\vecx)$ as 
{
\footnotesize
\begin{equation*}
    \begin{cases}
    z_i(\vecx)=\maxcapacity_{\slots}(i) & \text{if }   i<l(\vecx),\\
    z_i(\vecx)\in (0,\maxcapacity_{\slots}(i)] & \text{if } i=l(\vecx),\\
    z_i(\vecx)=0 & \text{if }   i>l(\vecx),
    \end{cases}
\end{equation*}}
\noindent where $l(\vecx) \equiv \max\{i:z_i(\vecx) > 0, i =1,2,\ldots,n(\slots)\}$ is the number of utilized components in~$\vecx$. 

We use the ordering of~$\components(\slots)$ to define a non-increasing step function~$V_{\slots,\vecx}(t)$, which gives the utility of the component to be used once~$t$ capacity has been distributed to the first elements of~$\components(\slots)$. More precisely, $V_{\slots,\vecx}(t)$ is computed as follows 
{
\footnotesize
\begin{equation*}
     V_{\slots,\vecx}(t) =\begin{cases}
    \utility_{\slots}(1) & \text{if } 0<t\leq z_{1}(\vecx),\\
    \utility_{\slots}(2) & \text{if } z_{1}(\vecx)<t\leq z_{1}(\vecx)+z_{2}(\vecx),\\
    \dots \\
    \utility_{\slots}(k) & \text{if } \sum\limits_{i = 1}^{k-1}z_{i}(\vecx)<t\leq \sum\limits_{i = 1}^{k}z_{i}(\vecx),\\
     \dots \\
      \utility_{\slots}(l(\vecx)) & \text{if } \sum\limits_{i = 1}^{l(\vecx)-1}z_{i}(\vecx)<t\leq \sum\limits_{i = 1}^{l(\vecx)}z_{i}(\vecx),\\
    0 & \text{if } t > \sum\limits_{i = 1}^{l(\vecx)}z_{i}(\vecx).
    \end{cases}
\end{equation*}
}
Observe that $V_{\slots,\vecx}(t)$ consisting of $l(\vecx)+1$ steps, and we define $V_{\slots,\vecx}(t)=0$ for every $t \geq \| \vecx \|_1$.

Next, we proceed with the proof. We consider an arbitrary instance of the $\Problem{}^T$  parameterized by a set of items~$\slots$. Let~$\slot$ be an item of~$\slots$, and let $\underline{ \slots}$ and~$\overline{\slots}$ be subsets of~$\slots \setminus \{\slot\}$ such that~$\underline{\slots} \subset \overline{\slots}$. 

\noindent\textbf{\underline{Monotonicity}} 
The non-decreasing monotonicity of~$G(\cdot)$ follows from the fact that any feasible solution to the~\Problem{} parameterized by~$\underline{\slots}$ is also feasible and has the same objective value in the~$\Problem{}^T$ parameterized by~$\overline{\slots}$. 

\noindent\textbf{\underline{Submodularity}} 
    Let~$\underline{\vecx}$, $\overline{\vecx}$, 
    $\underline{\vecx'}$, and~$\overline{\vecx'}$
    denote optimal solutions to~$\underline{\slots}$, $\overline{\slots}$, $\underline{\slots} \cup \{\slot\}$, and~$\overline{\slots} \cup \{\slot\}$,  respectively. According to Remark~\ref{prop:partialcomponent}, $\underline{\vecx}$ and~$\underline{\vecx'}$ differ if~$\slot$ has one or more components whose utilities are larger than the utility of some components with non-zero utilization in~$\underline{\vecx}$. Namely, this happens if the utility~$ \utilities_{\slot} (\underline{x'_{\slot}})$ gained by the utilization of item $\slot$ in~$\underline{\vecx'}$ is larger than~$\int\limits_{\knapsackcapacity-\underline{x'_{\slot}}}^{\knapsackcapacity}V_{\underline{\slots},\underline{\vecx}}(t)dt$.   Otherwise, if the incorporation of~$\slot$ does not lead to improvement, we have $\underline{x_{\slot}'} = 0$. Therefore, the utility gains~$\Delta(\underline{\slots},\slot)$ brought by the incorporation of~$\slot$ to~$\underline{\slots}$  is given by 
    \[
    \Delta(\underline{\slots},\slot) 
    \equiv 
    G(\underline{\slots} \cup \{\slot\}) - 
    G(\underline{\slots}) 
    = 
  \utilities_{\slot} (\underline{x'_{\slot}})
    - \int\limits_{\knapsackcapacity-\underline{x_{\slot}'}}^{\knapsackcapacity}V_{\underline{\slots},\underline{\vecx}}(t) dt.
    \]
    Similarly, $\Delta(\overline{\slots},\slot)$  is given by
    \[
    \Delta(\overline{\slots},\slot) \equiv 
    G(\overline{\slots} \cup \{\slot\}) - 
    G(\overline{\slots}) = 
  \utilities_{\slot} (\overline{x'_{\slot}})
    -
    \int\limits_{\knapsackcapacity-\overline{x_{\slot}'}}^{\knapsackcapacity}V_{\overline{\slots},\overline{\vecx}}(t) dt.
    \]
    Therefore, we have 
    \begin{eqnarray*}
    \Delta(\overline{\slots},\slot) 
    &=&
        \utilities_{\slot} (\overline{x'_{\slot}})
        - 
        \int\limits_{\knapsackcapacity-\overline{x'_{\slot}} }^{\knapsackcapacity}V_{\overline{\slots},\overline{\vecx}}(t) dt \\
    &\leq&
        \utilities_{\slot} (\overline{x'_{\slot}})
        - 
        \int\limits_{\knapsackcapacity-\overline{x'_{\slot}} }^{\knapsackcapacity}V_{\underline{ \slots},\underline{\vecx}}(t) dt \\
    &\leq&    
       \utilities_{\slot} (\underline{x'_{\slot}})
        - 
        \int\limits_{\knapsackcapacity-\underline{x_{\slot}'}}^{\knapsackcapacity}V_{\underline{\slots},\underline{\vecx}}(t) dt
    =
        \Delta(\underline{\slots},\slot). 
    \end{eqnarray*}
    The first inequality follows from the fact that~$V_{\overline{\slots},\overline{\vecx}}(t) \geq V_{\underline{\slots},\underline{\vecx}}(t)$ for every~$t$ in~$[0,\knapsackcapacity]$. Next, we show that the second inequality holds. As~$\components(\underline{\slots} \cup \{\slot\}) \subset \components(\overline{\slots} \cup \{\slot\})$, the position of component~$\component_{\slot,k}$ in~$\components(\underline{\slots} \cup \{\slot\})$     is not greater than its position in~$\components(\overline{\slots} \cup \{\slot\})$. Therefore, it follows from Proposition~\ref{prop:greedyoptimal} that $\underline{x_{\slot}'} \geq \overline{x_{\slot}'}$ and the second inequality must hold; otherwise, one could construct a  solution to~$\underline{\slots} \cup \{\slot\}$ that is strictly better than~$\underline{\vecx'}$ by reducing the utilization of $\slot$ from~$\underline{x'_{\slot}}$ to~$\overline{x'_{\slot}}$,   thus contradicting the optimality of~$\underline{\vecx'}$.   Therefore, we have $\Delta(\overline{\slots},\slot) \leq \Delta(\underline{\slots},\slot) $ for~$\underline{\slots} \subset \overline{\slots} \subseteq \slots \setminus \{\slot\}$ and thus $G(\cdot)$ is submodular. 
\end{proof}

\section{Continuous Utility Functions}\label{sec:Continuous Utilities}

\begin{theorem}\label{thm:FPTAS for smooth generalization}
    The generalization of the~$\Problem$ where the utility functions are injective, continuous, and Lipschitz continuous admits an FPTAS.
\end{theorem}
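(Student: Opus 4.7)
The plan is to reduce the smooth case to the piecewise linear case already handled by Theorem~\ref{thm:selectionfptas}, by replacing each $\utilities_{\slot}(\cdot)$ with a polynomially fine concave piecewise linear approximation $\tilde{\utilities}_{\slot}(\cdot)$. Let $L$ denote a uniform bound on the Lipschitz constants of the utility functions; I treat $L$ as part of the input. For each item~$\slot\in\slots$, I would construct $\tilde{\utilities}_{\slot}$ by linearly interpolating $\utilities_{\slot}$ at $K+1$ equally spaced breakpoints in $[0,\maxcapacity_{\slot}]$. Concavity of $\utilities_{\slot}$ ensures that $\tilde{\utilities}_{\slot}$ is concave and that $\tilde{\utilities}_{\slot}(x)\le\utilities_{\slot}(x)$ pointwise, while Lipschitz continuity yields $\utilities_{\slot}(x)-\tilde{\utilities}_{\slot}(x)\le L\maxcapacity_{\slot}/K\le L\knapsackcapacity/K$ for every $x\in[0,\maxcapacity_{\slot}]$.

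Next, I would calibrate $K$ to obtain a small relative error. Let $\utilities^*$ denote the optimum value of the original instance. Since $\maxcapacity_{\slot}\le\knapsackcapacity$, packing any single item fully is feasible, and since an optimal solution activates at most $\cardinalitycapacity$ items each contributing at most $\utilities_{\slot}(\maxcapacity_{\slot})$ in utility, it follows that
\[
r^{\max}\coloneqq\max_{\slot\in\slots}\utilities_{\slot}(\maxcapacity_{\slot})\,\le\,\utilities^*\,\le\,\cardinalitycapacity\cdot r^{\max},
\]
so $r^{\max}$ serves as a $\cardinalitycapacity$-approximation of $\utilities^*$ computable in linear time. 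Setting $K=\lceil 2\cardinalitycapacity^{2}L\knapsackcapacity/(\epsilon\,r^{\max})\rceil$ and summing the pointwise error over the at most $\cardinalitycapacity$ active items of an optimal solution $\vecx^*$ yields $\utilities^*-\sum_{\slot}\tilde{\utilities}_{\slot}(x^*_{\slot})\le\cardinalitycapacity L\knapsackcapacity/K\le(\epsilon/2)\utilities^*$.

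Finally, I would invoke Theorem~\ref{thm:selectionfptas} on the piecewise linear instance defined by $\{\tilde{\utilities}_{\slot}\}_{\slot\in\slots}$ with tolerance $\epsilon/2$, obtaining a solution $\tilde{\vecx}$ with $\sum_{\slot}\tilde{\utilities}_{\slot}(\tilde x_{\slot})\ge(1-\epsilon/2)\sum_{\slot}\tilde{\utilities}_{\slot}(x^*_{\slot})$. Combining this bound with $\tilde{\utilities}_{\slot}\le\utilities_{\slot}$ and the previous estimate gives $\sum_{\slot}\utilities_{\slot}(\tilde x_{\slot})\ge(1-\epsilon/2)^{2}\utilities^*\ge(1-\epsilon)\utilities^*$. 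The piecewise linear approximation contains $nK$ components in total, which is polynomial in $n$, $1/\epsilon$, and the input parameters, so by Theorem~\ref{thm:selectionfptas} the composite algorithm runs in polynomial time.

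The main obstacle is keeping every step algorithmically tractable despite the transition from discrete components to continuous functions. Concretely, one must ensure that (i)~$\utilities_{\slot}$ can be evaluated at the chosen breakpoints in polynomial time—this is where the combined injectivity and Lipschitz assumptions matter, as together they rule out the square-root-sum-style evaluation obstructions noted in Remark~\ref{remark:general utilities} by guaranteeing that function values are well-defined and regular enough to be computed to arbitrary precision from a standard oracle—and (ii)~the Lipschitz constant $L$ is treated as part of the input encoding, so that the running time remains polynomial in the problem parameters and $1/\epsilon$, as is standard for FPTASes on continuous knapsack variants.
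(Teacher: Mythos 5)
Your reduction to the piecewise linear case is a genuinely different route from the paper's proof, which instead runs a dynamic program directly over items, discretizes the \emph{utility} axis into $O(n/\epsilon)$ levels of width $\delta=\frac{\epsilon}{n}\max_{\slot}\utilities_{\slot}(\maxcapacity_{\slot})$, and recovers the corresponding utilizations by inverting each $\utilities_{\slot}$ through binary search on $[0,\maxcapacity_{\slot}]$ (this is where injectivity, concavity, and Lipschitz continuity enter). Unfortunately, your version has a genuine gap in the complexity accounting: the number of breakpoints $K=\lceil 2\cardinalitycapacity^{2}L\knapsackcapacity/(\epsilon\,r^{\max})\rceil$ is polynomial in the \emph{numerical values} of $L$, $\knapsackcapacity$, and $1/r^{\max}$, not in their binary encoding lengths. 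The ratio $L\knapsackcapacity/r^{\max}$ can be exponential in the input size: take $\utilities_{\slot}(x)=1-e^{-Lx}$ with $\maxcapacity_{\slot}=\knapsackcapacity$, which is injective, concave, and $L$-Lipschitz, yet has $r^{\max}\leq 1$ while $L\knapsackcapacity$ can be $2^{\Theta(B)}$ for inputs of bit length $O(B)$. Since Theorem~\ref{thm:selectionfptas} runs in time $O(|\slots|\frac{m^{2}}{\epsilon}\cardinalitycapacity)$ with $m=nK$, your composite algorithm is only pseudo-polynomial, so it is not an FPTAS as claimed; declaring $L$ ``part of the input'' does not repair this, because the dependence is on the magnitude of $L\knapsackcapacity/r^{\max}$ rather than on its bit length.

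The underlying problem is that you grid the domain uniformly while the error you need to control lives in the range. The natural fix is to place the interpolation breakpoints at preimages $\utilities_{\slot}^{-1}(k\delta)$ of a uniform grid on the utility axis with $\delta=\Theta(\epsilon\,r^{\max}/(\cardinalitycapacity n))$, so that each item contributes only polynomially many segments and the per-item interpolation error is at most $\delta$ by monotonicity. But computing those preimages requires a binary search terminated at some domain precision $\epsilon'$, with the resulting utility error bounded via the Lipschitz constant---which is precisely the mechanism of the paper's proof. In other words, once the pseudo-polynomial blow-up is repaired, your argument essentially collapses onto the paper's range-discretization-plus-inversion scheme. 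Your calibration via $r^{\max}\leq\utilities^{*}\leq\cardinalitycapacity\cdot r^{\max}$, the observation that concavity makes the interpolant a concave under-estimator, and the final $(1-\epsilon/2)^{2}\geq 1-\epsilon$ bookkeeping are all correct and would carry over unchanged.
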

\begin{proof}{Proof of Theorem~\ref{thm:FPTAS for smooth generalization}:} The dynamic programming formulation in this case  computes~$d(i,l,\fullutility)$ for triples in $\mathcal{M} \equiv \{0, 1,\ldots, \numitems\} \times \{0, 1,\ldots,\cardinalitycapacity\} \times \left[0,\sum\limits_{\slot' \in \slots} \utilities_{\slot'}(\maxcapacity_{\slot}')  \right]$. The state space~$\mathcal{M}$ is similar to the one defined in~\S\ref{sec:FPTAS1}, except that the first coordinate is defined over items (rather than components). The interpretation of the states and the initialization procedures for~$d$ and~$\fullutility^*$ are as in Algorithm~\ref{algo:dp}. Namely, $d(i,l,\fullutility)$ is the smallest amount of knapsack capacity used by a feasible solution to~$\Problem{}$ containing~$l$ out of the first $i$ items in~$\slots$ attaining objective value~$\fullutility$, based on any arbitrary ordering of~$\slots$.  Moreover, we use~$d(i,l,\fullutility) = \knapsackcapacity+1$ if there is no set of~$l \leq i$ items attaining~$\fullutility$. The construction of~$d$ relies on an iterative procedure to sequentially incorporate items in~$\slots$. The value of~$d(i,l,\fullutility)$ is computed by the following recursive expression:
\begin{equation}\label{eq:dpexpression}
d(i,l,\fullutility) = \min\limits_{\fullutility' \in [0,\mu] }\left\{d(i-1,l,\fullutility), d(i-1,l-1,\fullutility - \fullutility') + \utilities^{-1}_{\slot_i}(\fullutility') \right\}.    
\end{equation}
We use~$\utilities^{-1}_{\slot_i}(\fullutility')$ to denote the inverse function of~$\utilities_{\slot_i}$, i.e.,  $\maxcapacity_{\slot_i}' \coloneqq \utilities^{-1}_{\slot_i}(\fullutility')$ is the utilization of~$\slot_i$ in $[0,\maxcapacity_{\slot_i}]$ such that~$\utilities_{\slot_i}(\maxcapacity_{\slot_i}') = \fullutility'$. The main challenge is the estimation of~$\utilities^{-1}_{\slot_i}(\fullutility')$. Namely, if~$\utilities^{-1}_{\slot_i}(\fullutility')$ can be exactly computed, the discretization procedure and the approximation guarantees are similar to the ones presented in~\S\ref{sec:FPTAS1}. Otherwise, we can explore the concavity of~$\utilities_{\slot}$ and estimate~$\maxcapacity_{\slot_i}' = \utilities^{-1}_{\slot_i}(\fullutility')$ for any given~$\fullutility'$ through binary search in~$[0,\maxcapacity_{\slot_i}]$.  The same challenges involving numeric precision apply to this procedure, though, so we may be forced to stop the search upon the identification of some interval~$[a,b]$ such that~$\maxcapacity_{\slot_i}' \in [a,b]$ and $b-a \leq \epsilon'$ for some~$\epsilon' > 0$. Once such an interval is identified, we set~$\maxcapacity_{\slot_i}' = a$ and finalize the search. If the utility functions 
are~$L$-Lipschitz continuous with respect to the~$L_1$ norm, we have
\[
\| \utilities_{\slot_i}(\maxcapacity_{\slot_i}) - \utilities_{\slot_i}(\maxcapacity_{\slot_i} - \epsilon') \|_1 \leq L(b-a) = \epsilon' L,
\]
and by adding up to~$\cardinalitycapacity$ numbers with such error, we have a maximum error of~$\epsilon =|\cardinalitycapacity| \epsilon' L$, i.e., the error in the objective function can be controlled through through~$\epsilon'$. Under such an assumption, we can replace the (infinite) set~$\Theta \coloneqq \left[0,\sum\limits_{\slot' \in \slots} \utilities_{\slot'}(\maxcapacity_{\slot}')  \right]$ defining the third dimension of~$\mathcal{M}$ for a discrete set~$\Theta_{\epsilon} = \left\{0,\delta,2\delta,\ldots, \lfloor \frac{n}{\epsilon} \rfloor\right\}$, where 
$\delta = \frac{\epsilon}{\numitems}\max\limits_{\slot \in \slots}\utilities_{\slot}(\maxcapacity_{\slot})$, and hence obtain an FPTAS as in~\S\ref{sec:FPTAS1}.
\medskip
\end{proof}

\section{Proofs of Results in Section~\ref{sec:knapsack}}

\begin{proof}{Proof of Lemma~\ref{lemma:cdbound}:} 
For any given set of items~$\slots_{d \cdot \numitems} \subseteq \slots$,
let~$\slots_{d \cdot \numitems}^*$ be the subset of items in~$\slots_{d \cdot \numitems}$ that attain the maximum total utility~$r_{d \cdot \numitems}^* \coloneqq \max\limits_{\slot \in \slots_{d \cdot \numitems}}\utilities(\maxcapacity_{\slot})$ across all items in~$\slots_{d \cdot \numitems}$, i.e., 
$\slots_{d \cdot \numitems}^* = \left\{\slot \in \slots_{d \cdot \numitems}: \utilities(\maxcapacity_{\slot}) = r_{d \cdot \numitems}^*\right\}$. If an item in~$\slots_{d \cdot \numitems}^*$ arrives in~$\slots_{c \cdot \numitems}$, the knapsack will be empty at the end of the secretary phase, as~$\online_{S}$ could only incorporate an item with utility strictly larger than~$r_{d \cdot \numitems}^*$. Otherwise, if there exists just one item~$\slot$ in~$\slots_{d \cdot \numitems}^*$  and~$\slot$ arrives in~$\slots_{c \cdot \numitems+1, d \cdot \numitems}$, then~$\online_{S}$ necessarily incorporates an item. 
From the uniformity of the arrivals, all permutations of~$\slots_{d \cdot \numitems}$ are equally likely to occur, so the occurrence probability of the event that a specific item~$\slot$ in~$\slots_{d \cdot \numitems}^*$ arrives in the first~$c \cdot \numitems$ steps is~$\frac{c}{d}$. Finally, as~$\slots_{d \cdot \numitems}$ is finite, $\slots_{d \cdot \numitems}^*$ must contain at least one element, and from our assumption that all items have distinct total utilities, it follows that~$|\slots_{d \cdot \numitems}^*| = 1$. Therefore, the knapsack is empty at the beginning of the knapsack phase with probability~$\frac{c}{d}$.
\end{proof}

Lemma~\ref{lemma:simple} presents basic properties of~$\overline{\Phi}_{l,\slot}$ used in our arguments.
\begin{lemma}\label{lemma:simple}
For every~$l$ in~$\{1,2,\ldots,\numitems\}$ and~$\slot$ in~$\slots$, $Pr(\overline{\Phi}_{l,\slot}=1)=\frac{1}{\numitems}$, and for a given~$\slots_{l}$,  $Pr(\overline{\Phi}_{l,\slot}=1|\slots_l)=\frac{1}{l}$ for every~$\slot$ in $\slots_l$. Moreover,
for~$l \geq dn+1$, those probabilities are independent of~$\xi$, i.e., 
we have $Pr(\overline{\Phi}_{l,\slot}=1|\xi)=\frac{1}{\numitems}$ and $Pr(\overline{\Phi}_{l,\slot}=1|\xi, \slots_l)=\frac{1}{l}$. 
\end{lemma}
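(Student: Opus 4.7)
The plan is to prove the three claims in sequence, with all arguments stemming from the uniformity of the random permutation of items. For the first claim, $\Pr(\overline{\Phi}_{l,\slot}=1) = 1/n$: the arrival order is drawn uniformly over the $n!$ permutations of $\slots$, and exactly $(n-1)!$ of them place item $\slot$ at position $l$, giving $1/n$. For the second claim, $\Pr(\overline{\Phi}_{l,\slot}=1 \mid \slots_l) = 1/l$ for $\slot \in \slots_l$: conditional on the (unordered) set $\slots_l$ of items arriving in the first $l$ positions, uniformity of the permutation implies that each of the $l!$ orderings of $\slots_l$ over positions $1,\dots,l$ is equally likely, so each element of $\slots_l$ is at position $l$ with probability $1/l$.

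For the independence from $\xi$ when $l \geq dn+1$, the central observation is that $\xi = 1$ occurs precisely when the item of maximum total utility among $\slots_{dn}$ is at one of the first $cn$ positions; this follows from the proof of Lemma~\ref{lemma:cdbound}. Hence $\xi$ is a function purely of the \emph{ordering} of items across positions $1,\dots,dn$. Next, I would condition on the event $\overline{\Phi}_{l,\slot}=1$ (and optionally additionally on $\slots_l$) and argue, by exchangeability, that the $dn$ items placed at positions $1,\dots,dn$ are still uniformly permuted among those positions. This immediately gives $\Pr(\xi=1 \mid \overline{\Phi}_{l,\slot}=1) = \Pr(\xi=1 \mid \overline{\Phi}_{l,\slot}=1, \slots_l) = cn/dn = c/d$, which by Lemma~\ref{lemma:cdbound} matches $\Pr(\xi=1)$. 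Applying Bayes' rule then yields $\Pr(\overline{\Phi}_{l,\slot}=1 \mid \xi) = \Pr(\overline{\Phi}_{l,\slot}=1) = 1/n$ and $\Pr(\overline{\Phi}_{l,\slot}=1 \mid \xi, \slots_l) = \Pr(\overline{\Phi}_{l,\slot}=1 \mid \slots_l) = 1/l$.

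The main obstacle is to articulate the exchangeability step cleanly. Intuitively, because $l \geq dn+1$, fixing $\slot$ at position $l$ (or even fixing the entire set $\slots_l$) leaves the ordering of whichever $dn$ items land in positions $1,\dots,dn$ uniformly distributed, and this is exactly the randomness that $\xi$ depends on. I would formalize this by decomposing a uniform permutation into: (i) the choice of which item occupies position $l$; (ii) the choice of which $dn$-subset of the remaining items lands in positions $1,\dots,dn$; and (iii) the uniform ordering of that subset over positions $1,\dots,dn$. Since $\xi$ is measurable only with respect to step (iii) conditional on step (ii), and step (iii) is uniform regardless of the outcomes of (i) and (ii), the desired conditional independence follows directly.
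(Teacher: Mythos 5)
Your proposal is correct and follows essentially the same route as the paper: the first two claims from uniformity of the permutation, and the independence from $\xi$ via the observation that $\xi$ is determined solely by the relative order of whichever items occupy positions $1,\dots,dn$, which remains uniform after conditioning on the occupant of position $l>dn$ (or on $\slots_l$). Your version merely makes the paper's terse independence claim explicit through the permutation decomposition and Bayes' rule, which is a welcome formalization rather than a different argument.
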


\begin{proof}{Proof of Lemma \ref{lemma:simple}:}
 From the assumption on the uniformity of the arrival orders, item~$\slot\in \slots$ arrives at the~$l$-th position with probability~$\frac{1}{\numitems}$ and thus the first result holds. For any given~$\slots_{l}$, every item~$\slot \in \slots_l$ is equally likely to arrives at position~$l$, i.e., $Pr(\overline{\Phi}_{l,\slot}=1|\slots_l)=\frac{1}{l}, \forall \slot \in \slots_l$. 

 For~$l \geq d \cdot \numitems+1$, $Pr(\overline{\Phi}_{l,\slot}=1)$ and $Pr(\overline{\Phi}_{l,\slot}=1|\slots_l)$ are independent from the outcome of $\xi$. This observation holds because the outcome of $\xi$ is affected only by the relative arrival order of the items in~$\slots_{d \cdot \numitems}$, i.e., it does not depend on the items composing~$\slots_{d \cdot \numitems}$. (see Lemma~\ref{lemma:cdbound}). 
\end{proof}

\begin{proof}{Proof of Lemma \ref{lemma:lb_l}:}
Let~$\knapsackcapacity^{(l)}$ and~$\cardinalitycapacity^{(l)}$ be random variables indicating the amount of knapsack and cardinality capacities consumed by~$\online$ in step~$l$, respectively.  By definition, we have~$\knapsackcapacity_{l} =  \sum\limits_{k = 0}^{l-1}\knapsackcapacity^{(k)}$ and~$\cardinalitycapacity_{l} =   \sum\limits_{k = 0}^{l-1}\cardinalitycapacity^{(k)}$.

For a given~$\slots_l$ and the corresponding solution~$s^{(l)} = (\vecx^{(l)},\vecy^{(l)})$ to~$\Problem{}(\slots_l)$ obtained by $\Approx_{\alpha}(\slots_l)$, we must have~$\sum\limits_{\slot \in \slots_l}x^{(l)}_{\slot} \leq W$ and~$\sum\limits_{\slot \in \slots_l}y^{(l)}_{\slot} \leq \cardinalitycapacity$. Note that $s^{(l)}$ is the same for all arrival orders of $\slots_l$, i.e., $s^{(l)}$ is invariant to the permutation defining the arrival sequence of~$\slots_l$. 

 Furthermore, if $\online_{K}$ incorporates~$\slot_l$ in the knapsack phase ($l \geq dn+1$), the utilizations~$x_{\slot_l}$ and~$y_{\slot_l}$ of the knapsack and cardinality for item~$\slot_l$, respectively, are at most $\beta x^{(l)}_{\slot_l}$ and $y^{(l)}_{\slot_l}$, respectively (see line 10 of Algorithm~\ref{alg:online}).

Therefore, following Lemma \ref{lemma:simple}, among all arrival sequences that are permutations of a set~$\slots_l$, the expected utilizations of the knapsack and cardinality capacities at step $l \geq d \cdot \numitems + 1$ in the knapsack phase given $\xi = 1$ satisfy $\Ex[\knapsackcapacity^{(l)}|\xi = 1,\slots_l] \leq \sum\limits_{\slot' \in \slots_l}Pr(\overline{\Phi}_{l,\slot'}=1|\xi = 1,\slots_l) \cdot \beta x^{(l)}_{\slot'} \leq  \beta \frac{\knapsackcapacity}{l}$ and $\Ex[\cardinalitycapacity^{(l)}|\xi = 1,\slots_l] \leq \sum\limits_{\slot' \in \slots_l} Pr(\overline{\Phi}_{l,\slot'}=1|\xi = 1,\slots_l) \cdot y^{(l)}_{\slot'} \leq \frac{\cardinalitycapacity}{l}$. Since these two bounds apply to any $\slots_l\subseteq \slots$ and all subsets~$\slots_l$ are equally likely to be observed, we can remove the dependence on~$\slots_l$, i.e., we have $\Ex[\knapsackcapacity^{(l)}|\xi = 1] \leq  \beta \frac{\knapsackcapacity}{l}$ and $\Ex[\cardinalitycapacity^{(l)}|\xi = 1] \leq \frac{\cardinalitycapacity}{l}$.

In the case where no item is incorporated during the secretary phase (i.e., $\xi=1$), we  have~$\knapsackcapacity_{d\numitems+1} = \cardinalitycapacity_{d\numitems+1} = 0$,  i.e., $\online$ only starts to consume items at step~$dn+1$. Therefore, the expected utilization of the knapsack and cardinality capacity at the beginning of step~$l$ are upper-bounded by
\[
\Ex[\knapsackcapacity_{l}| \xi=1] 
    = 
        \sum_{k = dn + 1}^{l-1}\Ex[\knapsackcapacity^{(k)}| \xi=1] 
    \leq 
        \sum_{k = dn + 1}^{l-1}\beta\frac{\knapsackcapacity}{k} 
    \leq 
        \beta\int\limits_{dn}^{l-1}\frac{\knapsackcapacity}{k}dk 
    \leq 
        \beta\knapsackcapacity \ln{\frac{l-1}{dn}},
\]
and
\[
\Ex[\cardinalitycapacity_{l}| \xi=1] 
    = 
        \sum_{k = dn + 1}^{l-1}\Ex[\cardinalitycapacity^{(k)}| \xi=1] 
    \leq 
        \sum_{k = dn + 1}^{l-1}\frac{\cardinalitycapacity}{k} 
    \leq 
        \int\limits_{dn}^{l-1}\frac{\cardinalitycapacity}{k} dk
    \leq 
        \cardinalitycapacity \ln{\frac{l-1}{dn}},
\]
respectively, where the second-to-last inequalities on both expressions follow because~$\sum\limits_{i=a}^{b}\frac{1}{i} \leq \int\limits_{a-1}^{b}\frac{1}{i} di$ for $a, b\in \mathbb{Z}^+$. Finally, we use union bound and Markov's inequality to derive a lower bound for the probability with which~$\phi_{l} = 1$ given~$\xi=1$:
\begin{eqnarray*}
Pr\left[\phi_l = 1 | \xi=1 \right] 
    &=& 
        Pr\left[ (\knapsackcapacity_{l} < (1-\beta)\knapsackcapacity)  \cap (\cardinalitycapacity_{l} < \cardinalitycapacity) | \xi=1 \right] 
    \\
    &=& 
        1 - Pr\left[ (\knapsackcapacity_{l} \geq (1-\beta)\knapsackcapacity)  \cup 
            (\cardinalitycapacity_{l} \geq \cardinalitycapacity)  | \xi=1
            \right] 
    \\
    &\underbrace{\geq}_{\text{Union bound}}& 
        1 - Pr\left[ \knapsackcapacity_{l} \geq (1-\beta)\knapsackcapacity | \xi=1\right]  - Pr\left[ 
            \cardinalitycapacity_{l} \geq \cardinalitycapacity   | \xi=1
            \right] 
    \\
    &\underbrace{\geq}_{\text{Markov's inequality}}&
        1 - \frac{\Ex[\knapsackcapacity_{l}| \xi=1]}{(1-\beta)\knapsackcapacity} - 
        \frac{\Ex[\cardinalitycapacity_{l}| \xi=1]}{\cardinalitycapacity}
    \\
    &\geq& 
        1 - \left( \frac{\beta}{1 - \beta} + 1 \right)\ln{\frac{l-1}{dn}} 
    \\
    &=& 
        1 - \left( \frac{1}{1 - \beta} \right)\ln{\frac{l-1}{dn}}.  
\end{eqnarray*} 
This concludes the proof of Lemma~\ref{lemma:lb_l}. 
\medskip
\end{proof}

\begin{proof}{Proof of Lemma \ref{lemma:lb_knapsack}:} Based on Lemma \ref{lemma:simple} and Lemma~\ref{lemma:lb_l}, we have
\begin{eqnarray*}
        \sum_{l = d \numitems+1}^{\numitems} Pr[\overline{\Phi}_{l,\slot} = 1 | \xi = 1]
        Pr[\phi_{l} = 1 | \xi = 1] 
    &=& 
        \sum_{l = d \numitems+1}^{\numitems} \frac{1}{\numitems}
         Pr[\phi_{l} = 1 | \xi = 1] 
    \\
    &\geq&
         \frac{1}{\numitems}\sum\limits_{l = d \numitems + 1}^{\numitems} \left(1 - \left( \frac{1}{1 - \beta} \right)\ln{\frac{l-1}{dn}}\right) \\
    &=&
        (1 - d) -  \frac{1}{\numitems}\left( \frac{1}{1 - \beta} \right)\sum\limits_{l = d \numitems + 1}^{\numitems} \ln{\frac{l-1}{d\numitems}} \\
    &=&
        (1 - d) -  \frac{1}{\numitems}\left( \frac{1}{1 - \beta} \right)\sum\limits_{l = d \numitems}^{\numitems-1} \ln{\frac{l}{d\numitems}} 
\end{eqnarray*}

Since $\ln{\frac{l}{d\numitems}}$ is an increasing function in~$l$ and ~$\sum\limits_{l=a}^{b}\ln{\frac{l}{d\numitems}} \leq \int\limits_{a}^{b+1}\ln{\frac{l}{d\numitems}} dl$ for $a, b\in \mathbb{Z}^+$, we have
\[
\sum\limits_{l = d \numitems}^{\numitems-1} \ln{\frac{l}{d\numitems}} \leq 
\int\limits_{l = d \numitems}^{\numitems} \ln{\frac{l}{d\numitems}} dl
    = 
    \numitems\left(\ln{\frac{\numitems}{d\numitems}} - 1\right) - d\numitems\left(\ln{\frac{d\numitems}{d\numitems}} - 1\right) 
    = 
    d\numitems  - \numitems - \numitems\ln{d}.
\]
Therefore, as $\frac{1}{1-\beta} > 0$ for~$\beta \in (0,1)$, we have
\begin{eqnarray*}
\sum_{l = d \numitems+1}^{\numitems} Pr[\overline{\Phi}_{l,\slot} = 1 | \xi = 1]
        Pr[\phi_{l} = 1 | \xi = 1] 
    &\geq&
    (1 - d) + \left( \frac{1}{1 - \beta} \right)\left(1 -  d  + \ln{d} \right) 
    \\
    &=&
        (1 - d)\left(\frac{2 - \beta}{1 - \beta}  \right) + \left( \frac{1}{1 - \beta} \right) \ln{d}.     
\end{eqnarray*}
This concludes the proof of Lemma \ref{lemma:lb_knapsack}. 
\medskip
\end{proof}

\section{Proofs of Results in Section~\ref{sec:competitiveratio}}

\begin{proof}{Proof of Proposition \ref{prop:ratio_case1}:} From Lemma~\ref{lemma:sec}, $\online_{S}$ entirely packs the item~$\overline{\slot}$ with the largest totally utility with probability at least~$\overline{p}$. Moreover, item~$\overline{\slot}$ can be also packed by  $\online_{K}$. We note that any approximation algorithm~$\Approx_{\alpha}$ for the problem can capture the optimal solution for instances of type~$\Psi_1$ (i.e., fully packing the item~$\overline{\slot}$)  by incorporating an additional iteration to assess~$\utilities_{\slot}(\knapsackcapacity)$ for each~$\slot$ in~$\slots$ and identify~$\overline{\slot}$; this extra step can be executed in polynomial time and preserves the $\alpha$-approximation performance of the original algorithm. Therefore, if~$\online_{S}$ does not incorporate any item (i.e., $\xi = 1$), item $\overline{\slot}$ arrives at step $l \geq d \cdot \numitems + 1$ (i.e.,  $\overline{\Phi}_{l,\overline{\slot}}=1$), and sufficient capacity is left at step $l$ (i.e., $\phi_l=1$), $\online_{K}$ incorporates~$\beta \maxcapacity_{\overline{\slot}}=\beta \knapsackcapacity$ from $\overline{\slot}$ and obtain utility~$\utilities_{\overline{\slot}}(\beta\maxcapacity_{\overline{\slot}})=\utilities_{\overline{\slot}}(\beta\knapsackcapacity)$ at step $l$.  
Moreover, it follows from the concavity of~$\utilities_{\overline{\slot}}$ that $\utilities_{\overline{\slot}}(\beta\knapsackcapacity) \geq \beta \utilities_{\overline{\slot}}(\knapsackcapacity) = \beta\utilities^*$.

By combining the possible utility gains obtained from~$\overline{\slot}$ in the secretary and knapsack phases, we obtain
\begin{eqnarray}\label{eq:ratio_case1_proof}
    \Ex^{\Psi_1}_{\online}[\utilities(\vecx)] 
    &\geq& 
        \utilities^*\overline{p} +  \beta \utilities^* Pr[\xi = 1]\left(\sum_{l = d \numitems+1}^{\numitems} Pr[\overline{\Phi}_{l,\overline{\slot}} = 1 | \xi = 1]
        Pr[\phi_{l} = 1 | \xi = 1] \right)  \nonumber\\
    &\geq & 
        \utilities^*\left[c \ln{\frac{d}{c}} - o(1)\right] +  \beta \utilities^* \frac{c}{d}\left(\sum_{l = d \numitems+1}^{\numitems} Pr[\overline{\Phi}_{l,\overline{\slot}} = 1 | \xi = 1]
        Pr[\phi_{l} = 1 | \xi = 1] \right)  \nonumber\\
    &\geq & 
        \utilities^*\left[c \ln{\frac{d}{c}} - o(1)\right] +  \beta \utilities^* \frac{c}{d}\left((1 - d)\left(\frac{2 - \beta}{1 - \beta}  \right) + \left( \frac{1}{1 - \beta} \right) \ln{d} \right)  
    \nonumber\\
    &=& 
        \utilities^*\left[
     c \ln{\frac{d}{c}} + \frac{c}{d}\beta  \left((1 - d)\left(\frac{2 - \beta}{1 - \beta}  \right) + \left( \frac{1}{1 - \beta} \right) \ln{d}\right)- o(1)  
    \right]. \nonumber
\end{eqnarray}
The second inequality follows the results of Lemma \ref{lemma:sec} and Lemma \ref{lemma:cdbound}, and the third inequality follows from Lemma \ref{lemma:lb_knapsack}. This concludes the proof of Proposition \ref{prop:ratio_case1}. 
\medskip
\end{proof}

\begin{proof}{Proof of Proposition~\ref{prop:profit_knapsack}:}

We study the utility gain from both the secretary and knapsack phases. 

For the secretary phase, we focus on the expected utility obtained by~$\online_{S}$ from incorporating~$\overline{\slot}$, the most valuable item. We have shown that~$\online_{S}$ incorporates~$\overline{\slot}$ with probability $\overline{p}$ in Lemma \ref{lemma:sec} and obtains utility~$\utilities_{\overline{\slot}}(\maxcapacity_{\overline{\slot}})$. By definition, $R_{\slot}(x^*_{\slot}) \leq R_{\slot}(\maxcapacity_{\slot}) \leq
R_{\overline{\slot}}(\maxcapacity_{\overline{\slot}})$ for any~$\slot \in \slots$ and any optimal solution~$(\vecx^*,\vecy^*)$; as there are at most~$\cardinalitycapacity$ items being picked, we must have~$\utilities^*=\sum\limits_{\slot\in \slots:y^*_{\slot}=1}R_{\slot}(x^*_{\slot}) \leq\cardinalitycapacity \cdot
R_{\overline{\slot}}(\maxcapacity_{\overline{\slot}})$, and thus
$\utilities_{\overline{\slot}}(\maxcapacity_{\overline{\slot}}) \geq \frac{\utilities^*}{\cardinalitycapacity}$.

In the next part, we characterize the expected utility gain for the knapsack phase. This derivation is similar to the one presented in~\citet{kesselheim2018primal} while considering the impact of the approximation factor~$\alpha$ of~$\Approx_{\alpha}$.

    Let~$s^* = (\vecx^*,\vecy^*)$ be an optimal solution to~$\Problem{}(\slots)$. At any step~$l$, we consider a solution~$s^{(l)*} = (\vecx^{(l)*},\vecy^{(l)*})$ such that~$\left(x^{(l)*}_{\slot},y^{(l)*}_{\slot} \right) = \left( x^*_{\slot}, y^*_{\slot}\right)$ for~$\slot \in \slots_l$ and~$\left(x^{(l)*}_{\slot},y^{(l)*}_{\slot}\right) = (0,0)$ otherwise; let~$\utilities(\vecx^{(l)*})$    denote the utility of~$\vecx^{(l)*}$.    As the arrival sequences are uniformly distributed, each item has a probability of~$\frac{l}{n}$ to appear in $\slots_l$, and thus we have $\Ex[\utilities(\vecx^{(l)*})] = \frac{l}{n}\utilities^*$. 

    Let~$s^{(l)}= (\vecx^{(l)},\vecy^{(l)})$ be the solution identified by~$\Approx_{\alpha}(\slots_l)$. The utility of~$s^{(l)}$ is at least~$\alpha \utilities(\vecx^{(l)*})$ for any realization of~$\slots_l$ because~$\utilities(\vecx^{(l)*})$ is a lower bound for the optimal objective value of~$\Problem{}(\slots_l)$ and~$\Approx_{\alpha}(\slots_l)$ is an~$\alpha$-approximation algorithm to~$\Problem{}(\slots_l)$. Therefore, we must also have $\Ex[\utilities(\vecx^{(l)})] \geq \frac{l\alpha}{n}\utilities^*$.

    From the uniformity of the arrival orders, the expected utility of~$\slot_l$ in solution~$s^{(l)}$ is~$\frac{\Ex[\utilities(\vecx^{(l)})]}{l}$. Therefore, if the knapsack has enough residual capacity (i.e., $\phi_l=1$), the expected utility obtained by~$\online_{K}$ from item~$\slot_l$ is ~$\beta\frac{\Ex[\utilities(\vecx^{(l)})]}{l} \geq \frac{\beta\alpha}{n}\utilities^*$. This observation applies to all arrival position~$l$ in the knapsack phase. Therefore, the expected utility collected during the knapsack phase given~$\xi$ is
\begin{align*}
\Ex^{\Psi_2}_{\online_K}[\utilities(\vecx)|\xi = 1] 
    &= 
        \sum_{l = d \numitems+1}^{\numitems} \left( Pr[\phi_{l} = 1 | \xi = 1] \beta\frac{\Ex[\utilities(\vecx^{(l)})]}{l}\right) \\
    &\geq  
        \sum_{l = d \numitems+1}^{\numitems} \left( Pr[\phi_{l} = 1 | \xi = 1] \frac{\beta\alpha}{n}\utilities^*\right) \\
    &=
        (\beta\alpha\utilities^*)\sum_{l = d \numitems+1}^{\numitems}  \frac{1}{n}Pr[\phi_{l} = 1 | \xi = 1] \\
    &\geq 
        (\beta\alpha\utilities^*)\big[ (1 - d)\left(\frac{2 - \beta}{1 - \beta}  \right) + \left( \frac{1}{1 - \beta} \right)  \ln{d}\big].
\end{align*}
The last inequality follows from the fact that
$Pr[\overline{\Phi}_{l,\slot} = 1 | \xi = 1] = \frac{1}{\numitems}$ for every~$\slot$ in~$\slots$ (see
Lemma \ref{lemma:simple}) and Lemma \ref{lemma:lb_knapsack}, which give a lower bound for~$\sum\limits_{l = d \numitems+1}^{\numitems} Pr[\overline{\Phi}_{l,\slot} = 1 | \xi = 1]
        Pr[\phi_{l} = 1 | \xi = 1]$.  

By combining the expected returns from both phases, we obtain 
\begin{eqnarray*}
        \Ex^{\Psi_2}_{\online}[\utilities(\vecx)] 
    &\geq& 
        \frac{\utilities^*}{\cardinalitycapacity}\overline{p} + 
        Pr[\xi = 1]\left[ (1 - d)\left(\frac{2 - \beta}{1 - \beta}  \right) + \left( \frac{1}{1 - \beta} \right)  \ln{d}\right]\alpha\beta\utilities^*\\
    &\geq& 
        \frac{\alpha \utilities^*}{\cardinalitycapacity}\left(c \ln{\frac{d}{c}} - o(1)\right) + 
        \frac{c}{d}  \left((1 - d)\left(\frac{2 - \beta}{1 - \beta}  \right) + \left( \frac{1}{1 - \beta} \right) \ln{d}  \right)\alpha\beta\utilities^*
     \\
    &=& 
        \alpha \utilities^*\left[\frac{c}{\cardinalitycapacity} \ln{\frac{d}{c}} + \frac{c}{d}\beta  \left((1 - d)\left(\frac{2 - \beta}{1 - \beta}  \right) + \left( \frac{1}{1 - \beta} \right) \ln{d}  \right)- \frac{1}{\cardinalitycapacity}o(1)
    \right].\\
        &\geq& 
        \alpha \utilities^*\left[\frac{c}{\cardinalitycapacity} \ln{\frac{d}{c}} + \frac{c}{d}\beta  \left((1 - d)\left(\frac{2 - \beta}{1 - \beta}  \right) + \left( \frac{1}{1 - \beta} \right) \ln{d}  \right)- o(1)
    \right],    
\end{eqnarray*}
where the second inequality follows because~$\alpha \leq 1$. This concludes the proof of Proposition~\ref{prop:profit_knapsack}.  
 \end{proof}

\begin{proof}{Proof of Corollary\ref{cor:ratio2}:} 

The derivations are similar to the ones presented for the general case but require a few adaptations. We define $\overline{\phi}_l=1$ to indicate that the condition $\knapsackcapacity_{l} < (1 - \beta) \knapsackcapacity$ is satisfied upon the arrival of the~$l$-th item. Lemma~\ref{lemma:lb_l2} is the equivalent of Lemma~\ref{lemma:lb_l} when~$\cardinalitycapacity > (1-d)\numitems$ and its proof follows the proof for Lemma~\ref{lemma:lb_l}. Similarly, Lemma~\ref{lemma:lb_knapsack2} is the adaption of Lemma~\ref{lemma:lb_knapsack}.

\begin{lemma}\label{lemma:lb_l2}
    For every~$l \geq dn+1$, if $\cardinalitycapacity \geq d \cdot \numitems$,
    we have
    \begin{equation*}\label{eq:lb_l2}
         Pr\left[ \overline{\phi}_l  = 1 | \xi =1  \right]
     \geq  1 - \left( \frac{\beta}{1 - \beta} \right)\ln{\frac{l-1}{dn}}.   
        \end{equation*}
    
\end{lemma}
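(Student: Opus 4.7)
The plan is to follow the structure of the proof of Lemma~\ref{lemma:lb_l}, but simplify by observing that the cardinality-capacity condition plays no role under the additional assumption. Specifically, since $\overline{\phi}_l$ encodes only the knapsack condition $\knapsackcapacity_l < (1-\beta)\knapsackcapacity$, there is no need to union-bound against a cardinality-overflow event as was done in the general case, and the $\ln\frac{l-1}{dn}$ term from the cardinality side disappears.

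First, I would reuse the per-step expected consumption bound already established inside the proof of Lemma~\ref{lemma:lb_l}: conditional on $\xi=1$ and on any realization of the set $\slots_l$, the expected knapsack capacity consumed at step $l \geq dn+1$ satisfies $\Ex[\knapsackcapacity^{(l)} \mid \xi=1] \leq \beta\knapsackcapacity/l$. Summing from $dn+1$ to $l-1$ and using the telescoping bound $\sum_{k=dn+1}^{l-1} 1/k \leq \int_{dn}^{l-1} dk/k = \ln\frac{l-1}{dn}$, this gives
\[
\Ex[\knapsackcapacity_l \mid \xi=1] \;\leq\; \beta\knapsackcapacity \ln\frac{l-1}{dn}.
\]
This part is identical to the corresponding step in Lemma~\ref{lemma:lb_l}, so no new work is needed.

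Next, I would apply Markov's inequality directly without any union bound:
\[
Pr[\overline{\phi}_l = 1 \mid \xi=1] \;=\; 1 - Pr[\knapsackcapacity_l \geq (1-\beta)\knapsackcapacity \mid \xi=1]
\;\geq\; 1 - \frac{\Ex[\knapsackcapacity_l \mid \xi=1]}{(1-\beta)\knapsackcapacity}
\;\geq\; 1 - \frac{\beta}{1-\beta}\ln\frac{l-1}{dn},
\]
which is the claimed bound. The role of the hypothesis $\cardinalitycapacity \geq d\cdot n$ (in the spirit of the $\cardinalitycapacity \geq (1-d)n$ condition of Corollary~\ref{cor:ratio2}) is to guarantee that the cardinality constraint can never bind during the knapsack phase, since at most $(1-d)n$ items arrive after step $dn$; hence the event $\{\cardinalitycapacity_l < \cardinalitycapacity\}$ holds with probability one, and no term for it needs to appear.

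There is no real obstacle here, since the argument is a strict specialization of the one for Lemma~\ref{lemma:lb_l}. The only thing worth being careful about is verifying that the assumption in the statement really implies the cardinality constraint is non-binding throughout the knapsack phase, so that the cardinality side of the union bound can be dropped cleanly. Once this is established, the gain is exactly one factor of $\ln\frac{l-1}{dn}$ in the lower bound, replacing $\frac{1}{1-\beta}$ by $\frac{\beta}{1-\beta}$ as stated.
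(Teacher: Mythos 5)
Your proposal is correct and matches the paper's own proof essentially verbatim: both reuse the bound $\Ex[\knapsackcapacity_l \mid \xi=1] \leq \beta\knapsackcapacity\ln\frac{l-1}{dn}$ from the proof of Lemma~\ref{lemma:lb_l} and then apply Markov's inequality directly, dropping the union bound over the cardinality event. Your added observation that the hypothesis on $\cardinalitycapacity$ is what makes the cardinality constraint non-binding in the knapsack phase (and that the natural form of this hypothesis is $\cardinalitycapacity \geq (1-d)\numitems$, since only $(1-d)\numitems$ items arrive after step $d\numitems$) is a correct and worthwhile clarification that the paper leaves implicit.
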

\begin{proof}{Proof of Lemma \ref{lemma:lb_l2}:} In this case, we only need the upper bound for~$\Ex[\knapsackcapacity_{l}| \xi=1]$, given by $\Ex[\knapsackcapacity_{l}| \xi=1] 
    \leq 
        \beta\knapsackcapacity \ln{\frac{l-1}{dn}}$, and use Markov's inequality to derive a lower bound for~$Pr\left[ \overline{\phi}_l  = 1 | \xi =1  \right]$:
\begin{eqnarray*}
Pr\left[\overline{\phi}_l = 1 | \xi=1 \right] 
    &=& 
        Pr\left[ (\knapsackcapacity_{l} < (1-\beta)\knapsackcapacity)   | \xi=1 \right] 
    \\
    &=& 
        1 - Pr\left[ (\knapsackcapacity_{l} \geq (1-\beta)\knapsackcapacity)   | \xi=1
            \right] 
    \\
    &\underbrace{\geq}_{\text{Markov's inequality}}&
        1 - \frac{\Ex[\knapsackcapacity_{l}| \xi=1]}{(1-\beta)\knapsackcapacity} 
    \\
    &\geq& 
        1 - \left( \frac{\beta}{1 - \beta} \right)\ln{\frac{l-1}{dn}}.  
\end{eqnarray*} 
This concludes the proof of Lemma~\ref{lemma:lb_l2}. 
\medskip
\end{proof}

\begin{lemma}\label{lemma:lb_knapsack2} For every~$\slot$ in~$\slots$, if~$\cardinalitycapacity \geq d \numitems$, we have
  \begin{equation*}
  \sum_{l = d \numitems+1}^{\numitems} Pr[\overline{\Phi}_{l,\slot} = 1 | \xi = 1]
        Pr[\overline{\phi}_{l} = 1 | \xi = 1]\geq
(1 - d)\left(\frac{1}{1 - \beta}  \right) + \left( \frac{\beta}{1 - \beta} \right) \ln{d}.
        \end{equation*}
\end{lemma}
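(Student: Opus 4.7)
The plan is to mirror the proof of Lemma~\ref{lemma:lb_knapsack} almost verbatim, substituting Lemma~\ref{lemma:lb_l2} in place of Lemma~\ref{lemma:lb_l}. The key structural difference between the two settings is that when $\cardinalitycapacity \geq d\numitems$, the cardinality-capacity condition cannot be violated during the knapsack phase (since at most $(1-d)\numitems$ items arrive after step $d\numitems$), so only the knapsack-capacity condition enters into the event $\overline{\phi}_l$. Consequently the failure-probability bound of Lemma~\ref{lemma:lb_l2} carries the factor $\frac{\beta}{1-\beta}$ rather than $\frac{1}{1-\beta}$, which is the only coefficient that propagates through.

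The concrete steps are as follows. First I would invoke Lemma~\ref{lemma:simple} to write $Pr[\overline{\Phi}_{l,\slot}=1\mid\xi=1] = \frac{1}{\numitems}$ for each $\slot \in \slots$, reducing the quantity of interest to $\frac{1}{\numitems}\sum_{l=d\numitems+1}^{\numitems} Pr[\overline{\phi}_l=1\mid\xi=1]$. Then I would plug in the bound from Lemma~\ref{lemma:lb_l2} and split the sum into a constant piece contributing $(1-d)$ and a logarithmic piece $\frac{\beta}{\numitems(1-\beta)}\sum_{l=d\numitems+1}^{\numitems}\ln\frac{l-1}{d\numitems}$. Re-indexing gives $\sum_{l=d\numitems}^{\numitems-1}\ln\frac{l}{d\numitems}$, which is exactly the quantity already bounded in the proof of Lemma~\ref{lemma:lb_knapsack} via the integral estimate $\sum_{l=d\numitems}^{\numitems-1}\ln\frac{l}{d\numitems} \leq \int_{d\numitems}^{\numitems}\ln\frac{l}{d\numitems}\,dl = d\numitems - \numitems - \numitems\ln d$. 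Substituting back and grouping the $(1-d)$ contributions yields
\[
(1-d) + \frac{\beta}{1-\beta}\left(1 - d + \ln d\right) = (1-d)\cdot\frac{1}{1-\beta} + \frac{\beta}{1-\beta}\ln d,
\]
which matches the claimed lower bound exactly.

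I do not anticipate any real obstacle; the derivation is a mechanical rerun of the earlier computation with a single coefficient swapped. The one sanity check worth making is that $\frac{\beta}{1-\beta} > 0$ for $\beta \in (0,1)$, so that the direction of the integral inequality (which upper-bounds a quantity being subtracted from $(1-d)$) translates correctly into the desired lower bound after all substitutions.
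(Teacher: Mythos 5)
Your proposal is correct and follows the paper's own proof essentially verbatim: both reduce the sum via $Pr[\overline{\Phi}_{l,\slot}=1\mid\xi=1]=\frac{1}{\numitems}$ from Lemma~\ref{lemma:simple}, apply the bound of Lemma~\ref{lemma:lb_l2}, re-index and bound $\sum_{l=d\numitems}^{\numitems-1}\ln\frac{l}{d\numitems}$ by the integral $d\numitems-\numitems-\numitems\ln d$, and regroup to obtain $(1-d)\frac{1}{1-\beta}+\frac{\beta}{1-\beta}\ln d$. Your final algebraic consolidation and the sign check on $\frac{\beta}{1-\beta}$ are both accurate, so nothing is missing.
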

\begin{proof}{Proof of Lemma \ref{lemma:lb_knapsack2}:} Based on Lemma \ref{lemma:simple} and Lemma~\ref{lemma:lb_l2}, we have
\begin{eqnarray*}
        \sum_{l = d \numitems+1}^{\numitems} Pr[\overline{\Phi}_{l,\slot} = 1 | \xi = 1]
        Pr[\overline{\phi}_{l} = 1 | \xi = 1] 
    &=& 
        \sum_{l = d \numitems+1}^{\numitems} \frac{1}{\numitems}
         Pr[\overline{\phi}_{l} = 1 | \xi = 1]  
    \\
    &\geq&
         \frac{1}{\numitems}\sum\limits_{l = d \numitems + 1}^{\numitems} \left(1 - \left( \frac{\beta}{1 - \beta} \right)\ln{\frac{l-1}{dn}}\right) \\
    &=&
        (1 - d) -  \frac{1}{\numitems}\left( \frac{\beta}{1 - \beta} \right)\sum\limits_{l = d \numitems + 1}^{\numitems} \ln{\frac{l-1}{d\numitems}} \\
    &=&
        (1 - d) -  \frac{1}{\numitems}\left( \frac{\beta}{1 - \beta} \right)\sum\limits_{l = d \numitems}^{\numitems-1} \ln{\frac{l}{d\numitems}} \\
    &\geq&
        (1 - d) + \left( \frac{\beta}{1 - \beta} \right)\left(1 -  d  + \ln{d} \right) 
        \\
    &=&
        (1 - d)\left(\frac{1}{1 - \beta}  \right) + \left( \frac{\beta}{1 - \beta} \right) \ln{d}.     
\end{eqnarray*}
This concludes the proof of Lemma~\ref{lemma:lb_knapsack2}.  
\medskip
\end{proof}

Proposition~\ref{prop:ratio_case1_2} and Proposition~\ref{prop:profit_knapsack_2} are the adaptations of Proposition~\ref{prop:ratio_case1} and  Proposition~\ref{prop:profit_knapsack}, respectively.

\begin{proposition}
    \label{prop:ratio_case1_2}
The expected utility gain $\Ex^{\Psi_1}_{\online}[\utilities(\vecx)]$ for instances of type~$\Psi_1$ if~$\cardinalitycapacity \geq d \numitems$
is
\begin{equation*}
    \Ex^{\Psi_1}_{\online}[\utilities(\vecx)] \geq
        \utilities^*\left[
     c \ln{\frac{d}{c}} + \frac{c}{d}\beta  \left((1 - d)\left(\frac{1}{1 - \beta}  \right) + \left( \frac{\beta}{1 - \beta} \right) \ln{d}  \right)- o(1)  
    \right].
\end{equation*}
 \end{proposition}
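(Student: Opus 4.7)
The plan is to mirror the proof of Proposition~\ref{prop:ratio_case1} step by step, replacing each use of Lemma~\ref{lemma:lb_knapsack} by its counterpart Lemma~\ref{lemma:lb_knapsack2}, and otherwise leaving the argument structurally intact. The only substantive change is that, under the condition on $\cardinalitycapacity$, the residual-cardinality component of the event controlling whether $\online_K$ can take $\beta x^{(l)}_{\slot_l}$ from item $\slot_l$ becomes automatic, so the relevant event reduces from $\phi_l = 1$ to $\overline{\phi}_l = 1$.

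Concretely, I would first isolate the single item $\overline{\slot}$ attaining $\utilities_{\overline{\slot}}(\maxcapacity_{\overline{\slot}}) = \utilities^*$ that an optimal $\Psi_1$-solution fully packs (so $\maxcapacity_{\overline{\slot}} = \knapsackcapacity$), and split the expected gain into the secretary and knapsack contributions. For the secretary phase, Lemma~\ref{lemma:sec} yields $\overline{p} \geq c\ln(d/c) - o(1)$ and is unaffected by the cardinality assumption, giving expected gain at least $\utilities^*\,\overline{p}$. For the knapsack phase, I would condition on $\xi = 1$ (probability $c/d$ by Lemma~\ref{lemma:cdbound}) and observe that, since the secretary phase then contributes no item and at most $(1-d)\numitems$ items can arrive afterward, the hypothesis on $\cardinalitycapacity$ implies the cardinality constraint is vacuous throughout the knapsack phase. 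Hence the sufficient condition for $\online_K$ to actually take $\beta x^{(l)}_{\slot_l}$ from $\slot_l$ collapses to $\overline{\phi}_l = 1$.

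I would then invoke the approximation-algorithm assumption, as in Proposition~\ref{prop:ratio_case1}, to guarantee that $\Approx_{\alpha}$ identifies $\overline{\slot}$ whenever it appears in $\slots_l$, so that if $\overline{\Phi}_{l,\overline{\slot}} = 1$ and $\overline{\phi}_l = 1$ then $\online_K$ collects $\utilities_{\overline{\slot}}(\beta\knapsackcapacity) \geq \beta\utilities^*$ by concavity of $\utilities_{\overline{\slot}}$. Summing over $l \in \{d\numitems+1,\ldots,\numitems\}$ and applying Lemma~\ref{lemma:lb_knapsack2} bounds the knapsack-phase contribution below by
\[
\beta\utilities^*\,\frac{c}{d}\left[(1-d)\left(\frac{1}{1-\beta}\right) + \left(\frac{\beta}{1-\beta}\right)\ln d\right].
\]
Combining this with the secretary-phase bound and collecting the $-o(1)$ term produces the claimed inequality.

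The proof is essentially mechanical once the correct identification of events is in place; the only subtlety, and the one place I would pause, is the transition from $\phi_l$ to $\overline{\phi}_l$. I would justify it by noting that under $\xi = 1$ the number of items $\online$ has incorporated by the start of step $l$ is at most $l - d\numitems - 1$, which never exceeds the residual cardinality by the hypothesis on $\cardinalitycapacity$, so $\cardinalitycapacity_l < \cardinalitycapacity$ holds deterministically and the union bound over the two Markov-inequality terms used to prove Lemma~\ref{lemma:lb_l} collapses to the single knapsack-capacity term of Lemma~\ref{lemma:lb_l2}.
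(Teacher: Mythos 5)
Your proposal is correct and follows essentially the same route as the paper: the paper's proof of Proposition~\ref{prop:ratio_case1_2} is exactly the proof of Proposition~\ref{prop:ratio_case1} with $\phi_l$ replaced by $\overline{\phi}_l$ and Lemma~\ref{lemma:lb_knapsack} replaced by Lemma~\ref{lemma:lb_knapsack2}, which is what you do. Your added justification for why the cardinality event becomes deterministic (at most $l - d\numitems - 1 \leq (1-d)\numitems$ items incorporated under $\xi=1$) is more explicit than the paper's, and in fact correctly pinpoints that the operative hypothesis is $\cardinalitycapacity \geq (1-d)\numitems$ rather than the $\cardinalitycapacity \geq d\numitems$ written in the proposition statement.
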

\begin{proof}{Proof of Proposition \ref{prop:ratio_case1_2}:} By replacing~$\phi_l$ with~$\overline{\phi}_l$ in the proof of Proposition~\ref{prop:ratio_case1}, we obtain the following lower bound:
\begin{eqnarray}\label{eq:ratio_case1_proof_2}
    \Ex^{\Psi_1}_{\online}[\utilities(\vecx)] 
    &\geq& 
        \utilities^*\overline{p} +  \beta \utilities^* Pr[\xi = 1]\left(\sum_{l = d \numitems+1}^{\numitems} Pr[\overline{\Phi}_{l,\overline{\slot}} = 1 | \xi = 1]
        Pr[\overline{\phi}_{l} = 1 | \xi = 1] \right)  \nonumber\\
    &\geq & 
        \utilities^*\left[c \ln{\frac{d}{c}} - o(1)\right] +  \beta \utilities^* \frac{c}{d}\left(\sum_{l = d \numitems+1}^{\numitems} Pr[\overline{\Phi}_{l,\overline{\slot}} = 1 | \xi = 1]
        Pr[\overline{\phi}_{l} = 1 | \xi = 1] \right)  \nonumber\\
    &\geq & 
        \utilities^*\left[c \ln{\frac{d}{c}} - o(1)\right] +  \beta \utilities^* \frac{c}{d}\left((1 - d)\left(\frac{1}{1 - \beta}  \right) + \left( \frac{\beta}{1 - \beta} \right) \ln{d} \right)  
    \nonumber\\
    &=& 
        \utilities^*\left[
     c \ln{\frac{d}{c}} + \frac{c}{d}\beta  \left((1 - d)\left(\frac{1}{1 - \beta}  \right) + \left( \frac{\beta}{1 - \beta} \right) \ln{d}\right)- o(1)  
    \right]. \nonumber 
\end{eqnarray}
This concludes the proof of Proposition~\ref{prop:ratio_case1_2}.  
\medskip
\end{proof}

\begin{proposition}
\label{prop:profit_knapsack_2}
The expected utility gain $\Ex^{\Psi_2}_{\online}[\utilities(\vecx)]$ for instances of type~$\Psi_2$ if~$\cardinalitycapacity \geq d \numitems$ is
\begin{equation*}
        \Ex^{\Psi_2}_{\online}[\utilities(\vecx)] 
    \geq 
        \alpha \utilities^*\left[\frac{c}{\cardinalitycapacity} \ln{\frac{d}{c}} + \frac{c}{d}\beta  \left((1 - d)\left(\frac{1}{1 - \beta}  \right) + \left( \frac{\beta}{1 - \beta} \right) \ln{d}  \right)- o(1)
    \right].
\end{equation*}
\end{proposition}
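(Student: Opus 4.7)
The plan is to mirror the proof of Proposition~\ref{prop:profit_knapsack}, replacing the event $\phi_l$ (joint knapsack and cardinality slack) with the weaker event $\overline{\phi}_l$ (knapsack slack alone), and invoking Lemma~\ref{lemma:lb_knapsack2} in place of Lemma~\ref{lemma:lb_knapsack}. The hypothesis $\cardinalitycapacity \geq d \numitems$ is used solely to ensure that during the knapsack phase the cardinality budget never becomes binding, so that $\overline{\phi}_l = 1$ is itself sufficient for $\online_K$ to execute line~10 and take $\beta x^{(l)}_{\slot_l}$ of item $\slot_l$.

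For the secretary phase, the argument is identical to that of Proposition~\ref{prop:profit_knapsack}: since the instance is of type $\Psi_2$, the optimal solution packs at most $\cardinalitycapacity$ items, so the most valuable item $\overline{\slot}$ satisfies $R_{\overline{\slot}}(\maxcapacity_{\overline{\slot}}) \geq \utilities^*/\cardinalitycapacity$, and Lemma~\ref{lemma:sec} yields $\overline{p} \geq c\ln(d/c) - o(1)$. Thus the expected secretary contribution is at least $\frac{\utilities^*}{\cardinalitycapacity}\overline{p}$. For the knapsack phase, I would follow the \citet{kesselheim2018primal}-style estimate: restrict an offline optimum to $\slots_l$ to obtain $\vecx^{(l)*}$ with $\Ex[\utilities(\vecx^{(l)*})] = (l/\numitems)\utilities^*$, invoke the $\alpha$-approximation guarantee of $\Approx_\alpha$ to conclude $\Ex[\utilities(\vecx^{(l)})] \geq (l\alpha/\numitems)\utilities^*$, and use uniformity of the arrival order to show that the expected value of $\slot_l$ in $s^{(l)}$ is at least $\alpha\utilities^*/\numitems$. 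Conditional on $\overline{\phi}_l=1$, $\online_K$ captures a $\beta$-fraction, contributing at least $\beta\alpha\utilities^*/\numitems$ in expectation per step.

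Summing over $l = d\numitems+1,\ldots,\numitems$ and applying Lemma~\ref{lemma:lb_knapsack2} gives
\[
\Ex^{\Psi_2}_{\online_K}[\utilities(\vecx)\mid \xi = 1]
\;\geq\;
\beta\alpha\utilities^*\left[(1-d)\tfrac{1}{1-\beta} + \tfrac{\beta}{1-\beta}\ln d\right].
\]
Multiplying by $Pr[\xi = 1] = c/d$ (Lemma~\ref{lemma:cdbound}) and adding the secretary contribution yields
\[
\Ex^{\Psi_2}_{\online}[\utilities(\vecx)]
\;\geq\;
\utilities^*\!\left[\tfrac{c}{\cardinalitycapacity}\ln\tfrac{d}{c} + \tfrac{c}{d}\beta\alpha\left((1-d)\tfrac{1}{1-\beta} + \tfrac{\beta}{1-\beta}\ln d\right) - \tfrac{1}{\cardinalitycapacity}o(1)\right].
\]
Since $\alpha \leq 1$ and $\ln(d/c) \geq 0$, lower-bounding the leading term by $\alpha\tfrac{c}{\cardinalitycapacity}\ln\tfrac{d}{c}$ lets us factor $\alpha$ outside the brackets, and the residual $o(1)/\cardinalitycapacity$ is absorbed into $o(1)$, delivering the claimed inequality.

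This argument is a bookkeeping adaptation of the general-case proof, so I do not foresee genuine difficulty. The only conceptual step that deserves care is verifying that under $\cardinalitycapacity \geq d\numitems$ the cardinality budget remains non-binding during the entire knapsack phase, so that $\overline{\phi}_l = 1$ really suffices in place of $\phi_l = 1$; once this is justified, the substitution of Lemma~\ref{lemma:lb_knapsack2} for Lemma~\ref{lemma:lb_knapsack} propagates through the calculation verbatim.
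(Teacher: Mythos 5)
Your proposal is correct and follows essentially the same route as the paper's proof: replace $\phi_l$ by $\overline{\phi}_l$, invoke Lemma~\ref{lemma:lb_knapsack2} in place of Lemma~\ref{lemma:lb_knapsack}, reuse the Kesselheim-style expected-value argument for the knapsack phase and the $\utilities^*/\cardinalitycapacity$ bound for the secretary phase, multiply by $Pr[\xi=1]=c/d$, and factor out $\alpha$ using $\alpha\leq 1$. Your remark that the only point requiring care is checking that the cardinality budget cannot bind during the knapsack phase (so that $\overline{\phi}_l=1$ suffices) is exactly the justification the paper relies on for this regime.
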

\begin{proof}{Proof of Proposition~\ref{prop:profit_knapsack_2}:} By replacing~$\phi_l$ with~$\overline{\phi}_l$ in the proof of Proposition~\ref{prop:profit_knapsack}, we obtain 
\begin{align*}
\Ex^{\Psi_2}_{\online_K}[\utilities(\vecx)|\xi = 1] 
    &= 
        \sum_{l = d \numitems+1}^{\numitems} \left( Pr[\overline{\phi}_{l} = 1 | \xi = 1] \beta\frac{\Ex[\utilities(\vecx^{(l)})]}{l}\right) \\
    &\geq  
        \sum_{l = d \numitems+1}^{\numitems} \left( Pr[\overline{\phi}_{l} = 1 | \xi = 1] \frac{\beta\alpha}{n}\utilities^*\right) \\
    &=
        (\beta\alpha\utilities^*)\sum_{l = d \numitems+1}^{\numitems}  \frac{1}{n}Pr[\overline{\phi}_{l} = 1 | \xi = 1] \\
    &\geq 
        (\beta\alpha\utilities^*)\big[ (1 - d)\left(\frac{1}{1 - \beta}  \right) + \left( \frac{\beta}{1 - \beta} \right)  \ln{d}\big].
\end{align*}
By combining the expected returns from both phases, we obtain 
\begin{eqnarray*}
        \Ex^{\Psi_2}_{\online}[\utilities(\vecx)] 
    &\geq& 
        \frac{\utilities^*}{\cardinalitycapacity}\overline{p} + 
        Pr[\xi = 1]\left[ (1 - d)\left(\frac{1}{1 - \beta}  \right) + \left( \frac{1}{1 - \beta} \right)  \ln{d}\right]\alpha\beta\utilities^*\\
    &\geq& 
        \frac{\alpha \utilities^*}{\cardinalitycapacity}\left(c \ln{\frac{d}{c}} - o(1)\right) + 
        \frac{c}{d}  \left((1 - d)\left(\frac{1}{1 - \beta}  \right) + \left( \frac{1}{1 - \beta} \right) \ln{d}  \right)\alpha\beta\utilities^*
     \\
    &=& 
        \alpha \utilities^*\left[\frac{c}{\cardinalitycapacity} \ln{\frac{d}{c}} + \frac{c}{d}\beta  \left((1 - d)\left(\frac{1}{1 - \beta}  \right) + \left( \frac{\beta}{1 - \beta} \right) \ln{d}  \right)- \frac{1}{\cardinalitycapacity}o(1)
    \right].\\
        &\geq& 
        \alpha \utilities^*\left[\frac{c}{\cardinalitycapacity} \ln{\frac{d}{c}} + \frac{c}{d}\beta  \left((1 - d)\left(\frac{1}{1 - \beta}  \right) + \left( \frac{\beta}{1 - \beta} \right) \ln{d}  \right)- o(1)
    \right].     
\end{eqnarray*}
This concludes the proof of Proposition~\ref{prop:profit_knapsack_2}.  
\medskip
 \end{proof}

Finally, we derive the competitive ratio by deriving local optimal solutions to the following problem with $\cardinalitycapacity \rightarrow \infty$: 
\[
\max\limits_{0 \leq c \leq d \leq 1, 0 < \beta \leq 1, d \geq \frac{\cardinalitycapacity}{n}} \quad 
\frac{c}{\cardinalitycapacity} \ln{\frac{d}{c}} + \frac{c}{d}\beta  \left((1 - d)\left(\frac{1}{1 - \beta}  \right) + \left( \frac{\beta}{1 - \beta} \right) \ln{d}  \right).
\]
By setting~$c = d = \beta \approx 0.431$, we have that~$\online$ is $6.401\alpha$-competitive if~$\cardinalitycapacity \geq 0.569 \numitems$. This concludes the proof of Corollary~\ref{cor:ratio2}. 
\end{proof}

\end{APPENDICES}

\end{document}